\title{On the Characterization of $1$-sided error Strongly-Testable Graph
  Properties for bounded-degree graphs, including an appendix\\
}
\author
{Hiro Ito\thanks{School of Informatics and Engineering, 
Univ. of Electro-Communications and 
Crest, JST, Japan, 
email: itohiro@uec.ac.jp}\hspace{0.5cm}
	Areej Khoury\thanks{
		Department of Computer Science, 
		University of Haifa, Haifa, Israel, 
		email: areej.khoury@gmail.com
	} \hspace{0.5cm}
	Ilan Newman\thanks{
		Department of Computer Science, 
		University of Haifa, Haifa, Israel, 
		email: ilan@cs.haifa.ac.il. This research was
                supported by The Israel Science Foundation, grant number
497/17.} 
	}
\date{\today}
\algnewcommand\algorithmicforeach{\textbf{for each}}
\def \e {\epsilon}
\def \Sink{RV}
\begin{document}
\maketitle
\begin{abstract}
  We study property testing of (di)graph properties in bounded-degree
  graph models.  The study of graph properties in bounded-degree
  models is one of the focal directions of research in property
  testing in the last 15 years. However, despite of the many results
  and the extensive research effort, there is no characterization of
  the properties that are strongly-testable (i.e., testable with constant query
  complexity) 
  even for $1$-sided error tests.

	The bounded-degree model can naturally be generalized to
        directed graphs resulting in two models that were
        considered in the literature. The first contains the directed
        graphs in which the outdegree is bounded but the indegree is
        not restricted. In the other, both the outdegree and indegree
        are bounded.

		We give a characterization of the $1$-sided error
strongly-testable {\em monotone} graph properties, and the $1$-sided
error strongly-testable {\em hereditary} graph properties in all the
bounded-degree directed and undirected graphs models.

{\bf comments: this version corrects minor details in the previous: (a)
   removed the non-defined term 'non-redundant' from
  theorem 3.3.  (b) corrected a typo in example 7.2 page 22} 
\end{abstract}

\newpage
\clearpage
\setcounter{page}{1}
%%%%%%%%% to ignore a text use: \ignore{... } %%%%%%%%%%%%%%%%%%%
\newcommand{\ignore}[1]{}
\def \qed {\hspace*{0pt} \hfill {\quad \vrule height 1ex width 1ex depth 0pt}
\medskip}

\newenvironment{proof}{\par\noindent{\bf Proof.}\quad}{  $\qed$}

%%%%%%%%%%%%%%%%%%%%%%%%%%%%%%%%%%%%%%%%%%%%%%
% math notation
\newcommand{\R}{\ensuremath{\mathbb R}}
\newcommand{\Z}{\ensuremath{\mathbb Z}}
\newcommand{\N}{\ensuremath{\mathbb N}}
\newcommand{\F}{\ensuremath{\mathcal F}}

%%%%%%%%%%%%%%%%%%%%%%%%%%%%%%%%%%%%%%%%%%%%%
\newtheorem{theorem}{Theorem}[section]
\newtheorem{definition}{Definition}[section]
\newtheorem{claim}[definition]{Claim}
\newtheorem{proposition}{Proposition}[section]
\newtheorem{observation}{Observation}[section]
\newtheorem{remark}[definition]{Remark}
\newtheorem{prop}[section]{Proposition}
\newtheorem{lemma}[proposition]{Lemma}
\newtheorem{crl}{Corollary}
\newtheorem{corol}{Corollary}[section]
\newcommand{\Proof}{\noindent{\bf Proof.}\ \ }
\newtheorem{corollary}{Corollary}[section]
\newtheorem{fact}[definition]{Fact}
\newtheorem{conj}[theorem]{Conjecture}
\newtheorem{thm}[definition]{Theorem}
\newtheorem{thmm}[proposition]{Theorem}
\newtheorem{thmp}[proposition]{Theorem}
\newtheorem{remarkp}[proposition]{Remark}
\newtheorem{definitionp}[proposition]{Definition}
\newtheorem{claimp}[proposition]{Claim}
\newtheorem{factp}[proposition]{Fact}

\bibliographystyle{plain}

%%%%%%%%%%%%%%%%%%%%%%%%%%%%%%%%%

\section{Introduction}
%%%%%%%%%%%%%%%%%%%%%%%%%%%%%%%%%%%%%%%%%%%%%%%%

Testing graph properties has been at the core of
combinatorial property testing since the very beginning with the
important results of Goldreich-Goldwasser-Ron \cite{GGR98}. There are several  different models
of interest.
 In the dense graph model an
$n$-vertex graph is given by its $n \times n$ boolean adjacency
matrix. For  this model  there are
characterizations of the properties that can be tested in constant
amount of queries by $1$-sided error tests \cite{AS08}, $2$-sided
error tests \cite{AFNS09}, and the properties that are defined by
forbidden induced  subgraphs and are testable by very small query complexity
\cite{AS06}.

In the other model,  called the {\em incidence-list} model,   an
$n$-vertex graph is represented  by its incidence lists. That is, an
array of size $n$ in which every entry is associated with a vertex,
and contains a list of the neighbours of that vertex. % A  query  in
% this model is done by specifying a vertex $v \in V(G)$ and an index
% $i$ on which the result is the $i$th neighbour  of $v$ (or a special
% symbol if $deg(v) < i$). 
This model contains the important special case of the {\em bounded-degree model} in which  the degree of the vertices 
is bounded by a universal parameter $d$ (and hence the lists are of size at most $d$).

The bounded-degree model, first considered in the property testing context
by Goldreich and Ron \cite{GR02},
attracts much of the research interest in combinatorial property
testing in the past decade. One reason is  the algorithmic sophistication and wealth of
structural results that were developed in the studies of property testing
in this model. %results on  graph property testing in the bounded degree model and the 
%incidence-list model go much beyond the tool of regular-partitions
%that is used (both algorithmically and ultimately in the
%characterization of testable properties) in the dense graph
%model.
E.g., the use of random walks to test partition properties, starting in
\cite{bounded-degree}, and with the sophisticated recent results in
\cite{CS10, CPS15} for
expander and clustering testing, the ``local-partition'' oracle
\cite{HKNO09, LR15, NO08}, and
others.% (see \cite{GR02} for a survey and introductory material).
 The other motivation is the rapidly growing research of very large
networks, e.g.,  the Internet, and other natural large 
networks such as social networks.
These large networks often turn to be represented by bounded-degree
(di)graphs (or very sparse (di)graphs). 
 Property testing of sparse
 graphs can provide a useful filter to discard unwanted instances  at a very low cost (in time
and space), as well as algorithmic and structural insights regarding the
tested properties.

Despite of the focus and wealth of results,  the
bounded-degree  model  remains far from being understood. In particular,   as of
present, there is no characterization of the 
properties that are testable in constant query complexity,
neither by $2$-sided error tests, nor by $1$-sided error tests. 

We focus on $1$-sided error testing. Our main result is a
characterization of the {\em monotone} (di)graph properties, and the
{\em hereditary} (di)graph properties, that are $1$-sided-error {\em
  strongly-testable}\footnote{For formal
  definiton of ``property testing'' see Section \ref{sec:prelim}}.  Here ``strongly-testable'' means that the
property can be tested by a constant number of queries that is
independent of the graph size, but may depend on the distance
parameter $\epsilon$.
The characterization essentially states that a monotone graph property
is strongly-testable if and only if it is close (see Definition
\ref{def:dist}) to a property that is defined by a set of forbidden
subgraphs of constant size  (Theorem \ref{thm:main-fb}). For
hereditary property we obtain a similar result  (Theorem
\ref{thm:main-fb0.5}) except that 
forbidden subgraphs are replaced with forbidden as induced subgraphs.

We
believe that our results form a first step towards a characterization
of all $1$-sided error strongly-testable graph properties in the bounded-degree model.
 
The bounded-degree model extends naturally to directed graphs. There
are two different models that have been studied for directed
graphs: In the first, the access to the graph is via queries to
outgoing neighbours, and correspondingly, only the out-degree of
vertices is bounded. This model
corresponds to the standard representation of directed graphs in
algorithmic computer science. Namely, where an $n$-vertex directed graph (digraph) is
represented by $n$ lists, each being associated with a distinct vertex $v$
in the graph, and contains the list of {\em forward} edges going out
from $v$.  
%Namely, the ordered set $N_{out}(v) = \left\{u|~ (v,u) \in E \right\}$
%for each $v \in V(G)$. 
% The degree bound, $d$, 
% is assumed to be on the out-degree of each vertex while in-degree of
% vertices is not restricted.
The access to a $d$-outdegree bounded digraph in this model is via queries of the following
type: a query specifies  a pair
$(v,i)$ where $v \in V(G)$ and $i \leq d$. As a response, the
algorithm discovers  the $i$th outgoing neighbour of the vertex $v$\footnote{if there is
one, or a special symbol otherwise}.  In what follows we abbreviate this model
as the $F(d)$-model, where $d$ is the upper bound on the out-degree of
vertices.

In the other model,  both the  in-degree
and out-degree are bounded by $d$. In this case  an $n$-vertex  graph is represented
by $2n$ lists; the list of outgoing edges and the list of incoming edges for
each vertex. The query type changes accordingly and allows both 
`outgoing' and `incoming' edge queries.  We denote this model 
as the $FB(d)$-model (`forward' and ` backward' queries). This model contains  the model of undirected $d$-bounded
degree graphs (where each undirected edge is replaced by a pair of
anti-parallel edges).

We note that the $F(d)$ model, as a collection
of graphs, strictly contains the $FB(d)$ model, while algorithmically
it is more restricted by the limited access to the graph. 

In all models, an $n$-vertex  (di)graph $G$ is said to be
$\epsilon$-far from a (di)graph property $P$ if it is
required to change (delete and/or insert) at least $\epsilon \cdot dn$
edges in order to get a $d$-bounded degree graph (in the corresponding
model) that has the property $P$.  

 The  results in this paper are the characterization of
the {\em monotone}  digraph properties and {\em hereditary} digraph
properties 
that are $1$-{\em sided error} strongly-testable in the $F(d)$-model
(Theorem \ref{thm:main-mon}, and 
\ref{thm:main-hered}).  The results for the $FB(d)$ model easily follow from
these for the $F(d)$-model.  As the $FB(d)$-model
 contains the undirected case, an analogous characterization of graph properties
for the $d$-bounded degree undirected graph model is implied.
We note that these are the first  results that do not restrict
the family of graphs, nor the family of testers under consideration
(apart of being
$1$-sided-error).

{\bf Related results:} 
There are many  results for the bounded-degree model on  the testability of specific
properties of graphs or digraphs, cf.  
\cite{BR02, CS10, GR00,  GR02, NS07, OR11, YI10, PR02}, 
and others. 
 In \cite{CPS16} the authors
relate (2-sided error) testability in the $FB(d)$ and $F(d)$ models. Other general results fall typically into three
categories. In the first not all $d$-bounded degree graphs are
considered, but rather a restricted family of graphs.  It is shown
e.g., in \cite{HKNO09, LR15, NS13} (and citations 
therein)\footnote{\cite{NS13} shows that any graph property is $2$-sided  error strongly-testable for any hyperfinite family of
graphs.}
that under certain restriction of the input graphs all graph properties are
$2$-sided error strongly-testable. 
% It has been shown that some of the algorithms for testing properties
% in the undirected version can be adapted to work in the directed case,
% but there is a gap between the case of both forward and
% backward queries and the case where only forward queries are
% allowed. We first describe the most relevant results for directed graphs.
The other two types of general results are when  the graph
properties under study are restricted, or the class of testers is restricted.
Most relevant for this work are the results of Czumaj, Shapira and
Sohler  
\cite{CSS09}, and Goldreich-Ron  \cite{GR09}.  
In \cite{CSS09} it is shown that any hereditary property is $1${\em-sided error}
strongly-testable if the input graph
belongs to a hereditary and non-expanding family of
graphs
%\footnote{The actual result in \cite{GR09} is stronger. However,
%  this is not a characterization as there is no proof for the  other
%  direction.}.
 In
\cite{GR09} restricted $1$-sided error testers called
{\em proximity oblivious testers} (POT) for graph properties (and
other properties) are studied. The POT
is not being constructed for an explicitly given  distance parameter $\epsilon$. Instead, the tester
 works for any distance parameter
$\epsilon$, but its success probability deteriorates as 
  $\epsilon$ tends to $0$. \cite{GR09} give several general results to when graph
  properties have a POT in the bounded-degree model (and other models).

{\bf Techniques and description of results: }
Attempting for a characterization result  we
should  understand what are the limitations that
a 1-sided error test, making $O(1)$ queries, puts on the structure
of the property it tests. It turns out that this is relatively
simple. Using the tools from \cite{GR09} (see also
\cite{GT01}), one can transform any $1$-sided error
tester into a ``canonical'' one that picks (uniformly) $O(1)$
random vertices in $G$, and then scans the balls of radius $O(1)$
around each. Finally, it makes its decision based only on the subgraph $G'$
it discovers {\em and its interface to the rest of the graph}.  To
make this latter point clearer consider  the $3$-degree bounded model and  the
property of not having a  vertex of degree two.  This
property is $1$-sided error strongly-testable simply by looking at a random
vertex and rejecting if its degree is exactly $2$. Note
 that this decision  cannot be concluded just by the fact that
the subgraph seen is a {\em subgraph} of $G$. It is important that the
sampled vertex $v$
{\em is not connected} to any other vertex besides the $2$ 
discovered neighbours of it. Namely, this property is not specified by
a forbidden subgraph (or induced subgraph).  This
suggests the notion of {\em configuration} appearing also in
\cite{GR09}, and defined for our setting in  Section
\ref{sec:prelim}.  

Loosely speaking a configuration specifies an induced subgraph with
an induced ``interface''  to the rest of the graph (see Definition \ref{def:conf}). With this notion it is
fairly easy to see that any $1$-sided error test can essentially test only graph
properties that are close to being defined by a collection of forbidden
configurations (the additional subtleties arise from the fact that the
tester is actually being designed for a distance parameter $\epsilon$,
and for different $\epsilon$'s testers might reject different
configurations).

Is the converse true? Namely, is every property that is defined by a set of
forbidden configurations (let alone, being ''close'' to
such)  strongly-testable?  This is open at
this point. 

Showing that a property that is defined by a forbidden set of
configurations is $1$-sided error strongly-testable 
usually  amounts to proving what is  called ``removal
lemmas''. Namely, a lemma stating  that
if a graph is $\epsilon$-far from a property  then it has a {\em
  large} number of appearances  of forbidden configurations
 (here ``large'' is $f(\epsilon) \cdot n$, namely linear in $n$). 

In the case of monotone properties the notion of a `forbidden
configurations' can be replaced with `forbidden subgraphs'. A removal lemma is  true for monotone properties in all models. 
% Hence, our result in this case is that the monotone graph properties
% that are 1-sided error testable
%  are exactly these that are close to be defined by a finite
% collection of forbidden subgraphs (some extra technical condition is
% required for the $F$-model).
% Some extra technical condition will be needed to assure testing for the
%$F(d)$-model.
For hereditary properties `forbidden configurations' can be
replaced with `forbidden induced subgraphs'. A removal lemma is also
true for hereditary properties for the $FB(d)$-model, and in a slightly
different form for the $F(d)$-model, but is more complicated to prove. We
  use a somewhat
different argument and test for hereditary properties in the later case. 

Our main results show that for all the
bounded-degree models, for both
monotone properties, and hereditary properties, a property is
$1$-sided error strongly-testable  
if and only if it is ``close'' to a property that is defined by an
appropriate set of forbidden graphs (see Section \ref{sec:prelim} for
the exact definition of ``close'' in this context). It could be that
by replacing forbidden graphs with forbidden configurations, 
this becomes true for {\em any graph property}. If indeed true, this will settle the
characterization problem of $1$-sided error strongly-testable
properties (see the discussion at the end of Section \ref{sec:F}).
  We do not currently know if a generalization of some sort
is true even for undirected $3$-degree bounded graphs.

Finally,  the characterization that we present is a structural result
on $1$-sided error strongly-testable properties. It provides a 
better understanding of the different models and the difference
between them. %Concerning the applicability of the results,  
One could further 
ask whether the characterization could be used to easily determine
whether  
a given property is $1$-sided error strongly-testable using arguments totally outside the area of property
testing. This is indeed demonstrated (Section \ref{sec:55}) by proving (the known results) that  2-colorability is not $1$-sided error
strongly-testable, and that not having a $k$-star as a minor is
strongly-testable (here $k$ is constant).

{\bf Organization: } We start with the essential notations and
preliminaries in Section \ref{sec:prelim}.  Section \ref{sec:results}
contains a statement of our main results for the $F(d)$-model, and
Section \ref{sec:F} contains the proofs of the main results.  Section
\ref{sec:5} contains further discussion, and examples of properties
that are strongly-testable but not monotone, neither
hereditary. Section \ref{sec:FB} contains the analogous
characterizations  for
the $FB(d)$-model. Finally Sections \ref{sec:55} and \ref{sec:concl}
contain the application of our results to simply prove some known
results, and some concluding remarks, respectively.

%%%%%%%%%%%%%%%%%%%%%%%%%%%%%%%%%%%%%%%%%%%%%%%%

\section{Preliminaries}\label{sec:prelim}

\subsection{Graph related notations}
Graphs here are mostly directed, can have anti-parallel edges
but no multiple edges. We will describe the results (and corresponding
definitions) mainly for the
$F(d)$-model which is the more interesting technically. Moreover, as we
do not have a bound on the in-degree for this model, better understanding
 this model may form  a tiny step towards better understanding testing in sparse graphs (of
unbounded degree).

For a directed graph $G = (V,E)$  we denote by $(u,v)$ the {\em
  directed} edge $(u \rightarrow v)$. That is, $(u,v)$ is a forward edge from
$u$. In turn, $v$ will be a member in the outgoing list of neighbours of $u$.

\begin{definition} [Neighbourhood]
	For a digraph $G=(V,E)$ and a vertex $v \in V$ we denote by
        $\Gamma^+(v)$ the set of outgoing neighbours of 
        $v$. Formally, $\Gamma^+(v) = \{u~\mid ~(v,u) \in E\}$.

        Similarly, 
	$\Gamma^-(v) = \{u~\mid ~(u,v) \in E\}$ and $\Gamma(v) = \Gamma^+(v) \cup \Gamma^-(v)$. 
	\end{definition}
Note that for undirected graphs $\Gamma^+, \Gamma^-$ and $\Gamma$
coincide.

	We generalize the notion of neighbourhood for sets of vertices: For $S \subseteq V$ we denote by $\Gamma^+(S) = \{y \notin S~\mid ~~ \exists x \in S, ~(x,y) \in E \}$. $\Gamma^-(S)$ and $\Gamma(S)$ are defined analogously.

\begin{definition} [Degree Bound]
	For an integer $d,$ a digraph $G$ is called
        $d$-bounded-out-degree if for every $v \in V(G), ~
        |\Gamma^+(v)| \leq d$.
 The $F(d)$-model contains all digraphs
        that are $d$-bounded-out-degree.  
        % $G=(V,E)$ is $d$-bounded out-degree if
        % $~\forall v \in V, |\Gamma^+(v)| \leq d$. Analogously
        % $d$-bounded in-degree is defined.
	%	$G$ is $d$-bounded degree if it is $d$-bounded out-degree and $d$-bounded in-degree. 
\end{definition}
Note that the in-degree of a vertex can be arbitrary.

For a (di)graph $G=(V,E)$ and $V' \subseteq V$, we denote by
$G\setminus V'$ the (di)graph on $V \setminus V'$ that is obtained form $G$ by deleting
the vertices in $V'$. We denote by $G[V']$ the induced subgraph of $G$ on
$V'$ (that is, $G[V']$ contains all edges in $E(G)$ with both
endpoints in $V'$). 

A directed $k$-star is the graph containing $k+1$ vertices $\{u_i, ~ i=0,
\ldots ,k\}$ and  the edges $\{(u_0,u_i)| ~ i=1, \ldots ,k \}$. In
this case $u_0$
is called the ``center''.

\subsection{Properties and testers}
\begin{definition}\label{def:f-model} {\bf (The $F(d)$-model; queries)}
	% A $d$-bounded degree graph on $n$ vertices is like in the $FB$-model, except that the degree bound is on the out-degree only.
	Let $G=(V,E)$ be a graph on $n$ vertices in the
        $F(d)$-model. The access to $G$ is via the following  oracle:
        A query specifies a name of a
        vertex $v \in [n]$. As a result the oracle provides
        $\Gamma^+(v)$ as an answer. 
\end{definition}
\vspace{-0.2cm}
Note that an algorithm has no direct access to the incoming edges of a specified vertex $v$.

\vspace{0.3cm}
We note that a standard query in the incident list model is 
for a pair $(v,i)$, where $v \in V(G)$ and $i$ an index, on which the
oracle's answer is the $i$th vertex in the ordered list $\Gamma^+(v)$. For $d=
O(1)$ the two query-types are asymptotically equivalent (up to multiplying the number
of queries by a factor of
$d$). We use the definition above to
emphasise that algorithms, as well as properties, are invariant to the order of the vertices in
$\Gamma^+$. 

The $FB(d)$-model is similar where for a query $v \in V(G)$, the
answer is the pair of sets $\Gamma^+(v)$ and $\Gamma^-(v)$ (both sets
are of size at most $d$). In the undirected case the result is
$\Gamma(v)$ (of size bounded by $d$).  In terms of property testing,
the $d$-bounded degree model for undirected graphs can be seen as a
submodel of $FB(d)$-model where each undirected edge is represented as
two anti-parallel edges.  % Under such
% transformation,  properties of undirected $d$-bounded degree graphs are mapped to properties
% of digraphs in the $FB(d)$-model where the distance is
% doubled. Similarly, testers for $d$-bounded degree graphs can be
% simulated by doubling the cost. Hence, 
% the discussion and results for
% the undirected case follow from these for the directed case.
% We will
% mostly discuss in details the directed models.

\ignore{%1     %%%%%%%%       important comment  but ignored for focs
        %%%%%%%%       submission  %%%%%%%
We note that while we consider that the input for a
tester is a {\em labeled} graph  in the corresponding model,
it is rather a data-structure for the graph. Namely, in which there
is an extra arbitrary order on $\Gamma^+(v), \Gamma^-(v)$ for every
vertex $v$. Hence, and unlike in the dense graph model, the same
labeled graph has many possible different representations, and hence inputs. When the tester
makes its query, it gets the neighbourhood order information, and it
might base its decision on it. This of course, seems useless, and will
not be important in the testers we design, however, for a
characterization result, as we need to consider all possible testers,
we need to take the above information into account.
}%ignore1

%%%%%%%%%%%%%%%
\begin{definition} [(di)Graph Properties]
	A (di)graph property $P$ is a set of (di)graphs that is closed under isomorphism. Namely if $G \in P$ then any isomorphic copy of $G$ is in $P$. 
	 We write $P= \cup_{n	\in \mathbb{N}}P_n$, where $P_n$ is the set of $n$-vertex graphs in $P$.
\end{definition}
%$P$ may in general be an infinite set of graphs.

\begin{definition}[Graph distance, distance to a
  property, distance between properties]\label{def:dist}
  Let $G$ and $G'$ be (di)graphs on $n$ vertices in any of the
  $d$-bounded degree models (that is, the $F(d)$-model, $FB(d)$, or
  $d$-bounded degree undirected graph model). The distance,
  $dist(G, G')$, is the number of edges that needs to be deleted and /
  or inserted from $G$ in order to make it $G'$.
	
We say that $G, G'$ are $\epsilon$-far (or $G$ is $\epsilon$-far from $G'$) if
	$dist(G,G')$ $> \epsilon dn$. %, where $n = \min\{|V(G)|, |V(G')|\}$.
	Otherwise $G,G'$ are said to be $\epsilon$-close.% (to being isomorphic).
	
	Let $P_n, Q_n$ be properties of $n$-vertex (di)graphs. $G$ is
        $\epsilon$-close to $P_n$ if it is $\epsilon$-close to some
        $G' \in P_n$.  
        We say that $P_n$ $\rm{and}$  $Q_n$ are $\epsilon$-close (or
        $P_n$ is $\epsilon$-close to $Q_n$) if
  %      $\forall G \in P_n, ~ G ~ $ is $\epsilon$-close to  $Q_n,~ $
  %      $\rm{and} ~ \forall G \in Q_n, ~ G$ is $\epsilon$-close to $P_n$.
    every graph in $P_n$ is $\epsilon$-close to  $Q_n,$ and every
 graph in $Q_n$ is $\epsilon$-close to $P_n$.
        %
        % every
        % graph $G$ in $P_n$ is $\epsilon$-close to  $Q_n$, and every
        % graph $G \in Q_n$ is $\epsilon$-close to $P_n$. such that $G,G'$ are
	% %
%        $\epsilon$-close. Otherwise we say that $G$ is $\epsilon$-far from $P$. 
\end{definition}
% We note  that  in the definition of   $\epsilon$-close, the
% distance is normalized by dividing by $n$ rather than $dn$ (which is the 
%  total possible number of directed edges).  For $d=O(1)$  this has no
%  consequences on the results. 

 \begin{definition}[Monotone properties and hereditary properties]\label{def:mon-hered-prop}
   A (di)graph property
  $P$ is monotone (decreasing) if for every $G=(V,E) \in P$, deleting
  any edge $e \in E(G)$ results in a (di)graph $G\setminus \{e\}$ that
  is in $P$.
	A (di)graph property $P$ is hereditary if for every $G=(V,E) \in P$ and  $v \in V(G)$,  $G \setminus \{v\} \in P$. 
\end{definition}

Many natural (di)graph properties are monotone, e.g., being acyclic,
being $3$-colourable etc.  Note that if $P = \cup_{n \in \N} P_n$ is a monotone graph
property then  for every $n \in \N,~$ $P_n$ is by itself monotone.

\begin{definition}[The (di)Graph Properties $\mathcal{P_{\mathcal{H}}}$
  and $P^*_{\mathcal{H}}$]\label{def:pH}
	Let  $\mathcal{H}$ be a  set of
        digraphs.
        A digraph $G$ is $\mathcal{H}$-free if  for every $H \in
       \mathcal{H}, $ $G$ does not contain any
       subgraph that is isomorphic to $H$.
       
        The monotone property $P_{\mathcal{H}}$
        contains all digraphs that are $\mathcal{H}$-free, and
        $P_{\mathcal{H}_n}$ contains all $n$-vertex (di)graphs in
        $P_{\mathcal{H}}$.
  Similarly, we denote by $P^*_{\mathcal{H}}$ the $\rm{hereditary}$
  property that is defined by being $\mathcal{H}$-free as $\rm{induced
    ~ subgraphs}$ and $P^*_{\mathcal{H}_n}$ the set of $n$-vertex
  (di)graphs in $P^*_{\mathcal{H}}$.

      \end{definition}

      \begin{definition}
        [bounded-size collections]\label{def:size}
        Let $\mathcal{H}$ be a set of (di)graphs. We call
        $\mathcal{H}$ a $r$-set if every member $H \in \mathcal{H}$ has
        at most $r$ vertices. 
      \end{definition}

\begin{remark}\label{rem:34}
	$ $
	\begin{itemize}
 \item A natural example of monotone decreasing
graph property is a property $P_{\mathcal{H}}$  that is defined by a  family of forbidden 
subgraphs $\mathcal{H}$. It is immediate from the definition that every monotone
graph property is defined by a family of forbidden subgraphs but this
family may be infinite.

Recall that   $P = \cup_{n \in \N} P_n$ is monotone if and
only if $P_n$ is monotone for every $n$. Namely, being monotone is
defined for every $n$ separately. In this respect  being monotone is
not a `global' feature of $P$ but rather a feature 
of the individual $P_n, ~ n \in \N$. In what follows it will important
to us how the individual monotone properties $P_n, ~ n \in \N$ are
defined. Obviously for any fixed $n$,  $P_n$ is defined by an
$r$-set of forbidden subgraphs, but $r$ may depend on $n$.  

To make this clearer, consider the property of being acyclic. This
property is defined by forbidding all di-cycles, which  is an infinite
family. For the individual slices $P_n, ~n \in \N,$ the
corresponding family although finite, it is {\em not} a $r$-set 
unless $r \geq n$.  An example of slightly different nature is that of
the monotone property that contains the digraphs that are not
Hamiltonian. For every $n \in \N,$ $P_n$ is defined by one forbidden
subgraph (the simple directed $n$-cycle). Thus $P_n$ is defined by a $n$-set of forbidden subgraphs but for no fixed $r$,
$P_n$ can be  defined by an $r$-set for every $n$.

This
distinction will become important in our characterization results.  It
will turn out that the strongly-testable monotone properties are
tightly related 
 to properties that are defined by $r$-sets of forbidden
subgraphs for $r$ that is  independent of $n$.

\item For family $\mathcal{H}$ of forbidden digraphs, the monotone property of
  being $\mathcal{H}$-free is determined by the minimal members of
  $\mathcal{H}$ (w.r.t edge deletions). That is, if for $H,H' \in
  \mathcal{H}$ it holds that  $H$ is a subgraph of $H'$,  then being
  $\mathcal{H}$-free is identical to being $(\mathcal{H} \setminus
  \{H'\})$-free. %This motivates Definition \ref{def:mon-redundant}
  
\item Hereditary (di)graph properties are very natural in graph theory. It
is immediate from the definition that a property is hereditary if and
only if it is defined by a 
collection (possibly infinite) of forbidden {\em induced} subgraphs. E.g.,
the property of not containing an induced (di)cycle of length
$4$, and the property of being bipartite (that is expressed in this
case as not containing an odd size cycle). Both these properties are
monotone and
hereditary. 

Hereditary properties are not necessarily monotone, and monotone
properties are not necessarily hereditary. Further, the feature of
being hereditary, unlike being monotone,  depends on the entire
property $P =\cup_{n \in \N} P_n$ and cannot be defined for a single
$n$-slice $P_n$.
\end{itemize}
\end{remark}
% \begin{definition}\label{def:mon-redundant}
%   Let $\mathcal{H} $ be a  set of
%   digraphs. We say that $H \in \mathcal{H}$ monotone-non-redundant if
%   for no two distinct members $H, H' \in \mathcal{H}$, $H$ is a
%   subgraph of $H'$.
% \end{definition}

\noindent
{\bf Testers: }%\label{sec:testers}
We define here $1$-sided error  testers for  digraph properties in the
$F(d)$-model.
% An analogous definition  holds for the
%$FB(d)$-model (See Section \ref{sec:FB}). % The difference between this definition and the standard
% one, is that on a query $v$ the oracle responds with the whole
% neighbourhood sets.

\begin{definition} [$1$-sided error $\epsilon$-test for a digraph
  property $P$, $F(d)$-model]

	A $1$-sided error test for a digraph property $P$ is a randomized algorithm
	that gets two parameters, $n = |V(G)|$ and a distance
        parameter  $\epsilon > 0$.  It accesses its input graph
        via vertex queries (Definition \ref{def:f-model}), and
        satisfies the following two conditions.
        \begin{itemize}
        \item It accepts every $n$-vertex digraph in $F(d)$ that belongs to $P$ 	with probability $1$.
          \item It rejects every $n$-vertex digraph that is $\epsilon$-far from $P$ with probability at least $1/2$.
        \end{itemize}
	\end{definition}

	% on a query $v \in V(G)=[n]$ it receives the ordered set $\Gamma^+(v)$ and the ordered set $\Gamma^-(v)$.
	 The query complexity of the test is the maximum number of queries it
         makes for any input
         graph (in $P$ or not in $P$) and for every run. Hence the query
         complexity is a function of $n$ and $\epsilon$.

         \vspace{0.5cm}
\noindent
{\bf A note on the definition of testers:} 
 A test
for a graph property $P$ is formally  an infinite set of tests
$\{T(\epsilon,n)\}_{n \in \N, \epsilon \in (0,1)}$, where
$T(\epsilon,n)$ is a test for $P_n$ and distance parameter $\epsilon$.  Namely, we deal here with a
non-uniform model of computation. We often use the term
         $\epsilon$-test to emphasize that the test is designed for an error
         parameter $\epsilon$.  This  will be of special importance in this
         paper, as for different distance parameters, the test will
         behave differently.  We are
 interested, as usual, in the query complexity $q$ as a function of
 $\epsilon$ and $n$. Note further that since our models are
 parameterized by $d$, the query complexity (or even the fact whether
 a property is testable in the corresponding $d$-bounded degree model)
 may depend on $d$.  We may state the query complexity dependence on $d$ but  this
 is  of no particular importance in this paper.

\begin{definition}[strong-testability]
  Let $Q:(0,1) \mapsto \N$. If a property $P$ has an $\epsilon$-test  whose  query complexity
   on every $n$-vertex  graph is bounded by $Q(\epsilon)$, we say that $P$
        is $\epsilon$-strongly-testable. If $P$ is
        $\epsilon$-strongly-testable for every $\epsilon \in (0,1)$ we
        say that $P$ is {\em strongly-testable}.
\end{definition}

\subsection{Configurations - the $F(d)$-model}
The following definition of {\em configuration} is of major importance in
this paper. The motivation behind the definition
is that a configuration is what a tester discovers after making some
queries to the graph. It will turn out that the configuration that a
tester discovers contains
 {\em all} the information that is
used by the tester in order to form its decision.

\begin{definition} [Configuration, $F(d)$-model]\label{def:conf}
	A configuration is a pair $C=(H,L)$, where $H=(W,F)$ is
        a $d$-bounded-out-degree graph, and $L$ is a function $L:
        W \rightarrow \{\mathrm{developed}, \mathrm{frontier}\}$. The out-degree of
        every frontier vertex is $0$. 
\end{definition}
% The notion of configuration appears also in \cite{GR09} (using
% somewhat different notations).

Consider a run of a tester on a graph $G$. The tester discovers {\em all}  (the at most $d$)
outgoing neighbours of every queried vertex.  At the end of the run,
after  making $q$ queries, the tester discovers a subgraph $H$ of $G$. $H$
contains the $q$ vertices that are queried; these  correspond
to the $developed$ vertices in the configuration it discovers. $H$ may also 
contain vertices that are neighbours of queried vertices but that were
not themselves queried. These vertices are the $frontier$ vertices.   A frontier vertex that is
discovered by the tester and was not queried may
have outgoing neighbours, but the
corresponding edges (the forward edges from the frontier vertex) will
not be discovered by the tester.  Consequently, the out-degree of a
frontier vertex 
in the discovered configuration is $0$.  In contrast, all forward edges of  a
developed vertex are discovered. 

We now make the above formal using the defintion below.

\begin{definition} [$C$-Free, $F(d)$-model]\label{def:C-free}
	Let $C=(H,L)$  be a configuration, where $H=(W,F)$ a digraph and $L: W \rightarrow
        \{developed, frontier\}$.   Let $G=(V,E)$ be a
        digraph in the $F(d)$-model.
        We say that $G$ has a $C$-$\rm{appearance}$ if there is an injective mapping $\phi : W \rightarrow
        V$ with the following two properties:
        \begin{itemize}
        \item $\forall v,u \in W$ and $L(v)=\rm{developed}$,
        $(v,u) \in F$ $~\rm{ if ~  and~  only~   if}~ 
	(\phi(v),\phi(u)) \in E$.
  \item For every developed $v,$ if $(\phi(v), x) \in E$
then $\exists	u\in W, \phi(u)=x$.
        \end{itemize}
	
	We say that $G$ is $C$-free if $G$ has no $C$-appearance. 
\end{definition}

The notion of configuration (using slightly different terms) appears also in \cite{CSS09,GR09}. 

Let $C=(H,L)$ a configuration with $D \subseteq V(H)$ being the
developed vertices. Definition \ref{def:C-free} implies that if $G$ has
a $C$-appearance on a vertex set $V' = \phi(V(H))$, with $\phi$ being
the mapping as in the definition, then $G[\phi(D)]$ is isomorphic to
$H[D]$. Namely
$H$ induces an isomorphic digraph on its developed vertices as
$G$ does on the vertices that are the images of the developed
set of vertices $D$.  Further, the 2nd requirement in
Definition \ref{def:C-free} asserts that for every $v \in D$, all
forward edges of $\phi(v)$ in $G$ are the `images of edges' in
$H$. It is not necessarily  that 
$G[V']$ is isomorphic as an induced subgraph to
$H$. This is since there might be an edge $(x,y) \in
G[V(H')]$ that is not in $H$. This can happen only if
$x$ is an image of a frontier vertex. 

 To exemplify Defintion \ref{def:C-free}  further, consider $C=(H,L)$, where $H$ is the directed
 $2$-star and the center is the only developed
 vertex in $H$. A digraph $G$ has a $C$-appearance if and only if it has a vertex $v'$ with
 {\em exactly} two outgoing  neighbours $u_1',u_2'$.  There could be
 an edge $(u'_1,u_2') \in G$ and hence the subgraph that $G$ induces on
 $\{v',u_1',u_2\}$ might not be isomorphic to $H$. There could also be an
 edge $(x',v') \in E(G)$. However, there cannot
 be an edge $(v',y) \in E(G)$ where $y \notin \{u_1, u_2\}$.

 We sum up this discussion with the following obvious fact.

 \begin{fact}
   \label{fact:f3}
   Let $C=(H,L)$ be a configuration and $G$ a digraph (all with
   respect to the $F(d)$-model). Then:
   \begin{itemize}
   \item If $G$ has a  $C=(H,L)$-appearance then $G$ contains $H$ as a subgraph.
   \item    If $G[V']$ is isomorphic to $H$ as an
   induced subgraph,  then a subgraph of $G$ that is obtained by
   deleting $\Gamma^+(V')$ in $G$ has a $C$-appearance (for the
   given $L$). 
   \end{itemize}
 \end{fact}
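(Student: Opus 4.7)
The plan is to treat this as two short definition checks, since both bullets unwind directly from Definitions \ref{def:conf} and \ref{def:C-free}, with essentially no combinatorial content beyond one observation about the frontier vertices.

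For the first bullet, my approach is to show that every edge of $F$ has a developed tail. This is forced by Definition \ref{def:conf}, which stipulates that every frontier vertex of $H$ has out-degree $0$. Once this is noted, the forward direction of the first bullet of Definition \ref{def:C-free} applied to the witnessing injection $\phi$ immediately yields a corresponding edge $(\phi(v),\phi(u))\in E(G)$ for each $(v,u)\in F$, and injectivity of $\phi$ packages this into an embedding of $H$ as a subgraph of $G$.

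For the second bullet, I would let $\phi: W \to V'$ be the bijection realising the induced-subgraph isomorphism between $H$ and $G[V']$, form $G'' = G \setminus \Gamma^+(V')$, and verify that $\phi$ witnesses a $C$-appearance in $G''$. Checking the first condition of Definition \ref{def:C-free} only requires noting that $V' \cap \Gamma^+(V') = \emptyset$ by the definition of $\Gamma^+$, so every vertex of $V'$ and every $G$-edge internal to $V'$ survives the deletion; combined with the induced-subgraph isomorphism this gives $(v,u)\in F \iff (\phi(v),\phi(u))\in E(G'')$ for developed $v,u$. Checking the second condition is the place where the choice of $\Gamma^+(V')$ as the set deleted pays off: any out-neighbour $x$ in $G''$ of a developed image $\phi(v)\in V'$ is also an out-neighbour in $G$ and lies outside $\Gamma^+(V')$, which forces $x\in V'$, hence $x=\phi(u)$ for some $u\in W$.

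The only step that requires any genuine thought, and which I would make sure to flag, is the out-degree-zero condition on frontier vertices from Definition \ref{def:conf} used in part~1: without it, $F$ could contain edges emanating from frontier vertices, and Definition \ref{def:C-free} imposes no constraint on such edges, so they need not appear in $G$ and the subgraph conclusion would fail. Apart from this, the entire fact is a mechanical definition chase, which is why the authors label it \emph{obvious}.
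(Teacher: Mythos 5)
Your proposal is correct. The paper itself labels Fact~\ref{fact:f3} as ``obvious'' and offers no written proof, only the informal discussion in the two paragraphs preceding it; your argument is precisely the definition-unwinding that discussion gestures at, including the key observation that Definition~\ref{def:conf} forces every edge of $F$ to have a developed tail (so the first bullet of Definition~\ref{def:C-free} covers all of $F$), and that $\Gamma^+(V')$ is by definition disjoint from $V'$ (so deleting it leaves $G[V']$ intact while removing every outgoing edge from $V'$ that leaves $V'$).
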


Finally, looking towards a characterization theorem, it would be of use if we
could restrict the behaviour of possible testers to ``canonical'' ones. This proved useful
in the dense graph model in \cite{GT01} and it is of similar flavour (and
simpler) here. It was already done in \cite{GR09} for undirected
$d$-bounded degree graphs and the extension to directed graphs (in
both models) is straightforward. We state it here in order to
be consistent with our notations.

\begin{definition}[$r$-disc around a vertex, $F(d)$-model]
  Let $G$ be a digraph and $r \in \N$. The $r$-disc around $v \in
  V(G),$ denoted $~D(v,r)$,  is the subgraph of $G$ that is induced by all vertices $u$
  for which there is a path from $v$ to $u$  of length at most
  $r$.
  \end{definition}
We note that a tester can discover the $r$-disc  around a given
vertex $v \in V(G)$. This is  done by  making a
`BFS-like' search from $v$, where at each step the tester queries the
next first discovered but  not yet queried vertex  that is of distance less than
$r$ from $v$. Discovering $D(v,r)$ takes at most $d^{r}$ queries for a
graph in the $F(d)$-model. It is useful to consider such a
procedure as an augmented query, motivating the following
definition.

\begin{definition} [$r$-disc query, $F(d)$-model]
	An $r$-disc query is made by specifying a vertex $v \in V(G)$
        for which the answer is the  $r$-disc
        around $v$.
\end{definition}

\begin{definition} [canonical-testers]\label{def:canonical}
  A $(r,q)$-canonical tester for a graph property $P$ is a
  tester that chooses $q$ vertices uniformly at random
  $\{v_1, \ldots, v_q \}$. It then makes an $r$-disc query around 
  $v_i,$ for $ i=1, \ldots, q$. Then, depending only on the configuration
  it sees and possibly on $n$ (but not the order of the queries, or
  the internal coins) it makes its decision.
\end{definition}

The following result  \cite{GR09}, shows that
strongly-testable properties can  be tested by canonical-testers\footnote{In \cite{GR09} it is done only for undirected graphs,
  but the generalization to directed graphs in both models is straightforward.}.

\begin{thm} \label{thm:canonical}% \cite{GR09}
  Let $T$ be a
  $1$-sided error 
  $\epsilon$-test for a digraph property $P$ in the
  $F(d)$-model.  If the query complexity of $T$ is bounded by $q$ then there is % \footnote{The theorem is
    % true for both directed models as well as the bounded degree
    % undirected model.} 
  a  $(q,q)$-canonical tester 
  that is a $1$-sided error $\epsilon$-test for $P$.
\end{thm}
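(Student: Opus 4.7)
The plan is to follow the standard canonicalization scheme of Goldreich--Trevisan \cite{GT01}, adapted to the $F(d)$-model (and essentially identical to the argument of \cite{GR09} for the undirected case). I would decompose the argument into two independent reductions: one that canonicalizes the \emph{decision rule} of $T$, and one that canonicalizes the \emph{query pattern}, and then combine them.

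For the decision rule, the $1$-sided-error property lets one assume without loss of generality that $T$'s decision is a deterministic function of the configuration $C$ it sees (together with $n$), specifically: accept iff $C$ is extendable to some graph in $P_n$ in the sense of Definition~\ref{def:C-free}. Indeed, if $C$ has a $C$-appearance in some $G' \in P_n$, then on input $G'$ the tester would see $C$ with positive probability, forcing $T$ to accept on $C$ with probability one; conversely, replacing $T$'s decision on non-extendable configurations by unconditional rejection only increases the rejection probability on $\epsilon$-far graphs, while leaving graphs in $P_n$ unaffected.

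For the query pattern I would apply a random relabeling argument. Let $\sigma$ be a uniformly random permutation of $V(G)$ and let $T^{\sigma}$ be the tester that simulates $T$ on the relabeled graph $\sigma(G)$, translating queries and answers through $\sigma$. Since $P$ is closed under isomorphism, $T^{\sigma}$ is still a $1$-sided-error $\epsilon$-tester for $P$ with query complexity $q$. The key claim is that, conditional on the history of prior queries and answers, the next vertex queried by $T^{\sigma}$ is either an already-revealed vertex (developed or frontier, in which case no new developed vertex is added) or, with the complementary probability, a vertex distributed uniformly over the as-yet-untouched part of $V$. Iterating for $q$ rounds and absorbing the $O(q^2/n) = o(1)$ gap between sampling with and without replacement into the constant success margin of $T$ (via standard amplification if needed), I would replace $T^{\sigma}$ by a tester that draws $q$ i.i.d.\ uniform vertices $v_1, \ldots, v_q$ upfront and makes a single vertex query about each. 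Finally, upgrading each of these vertex queries to a $q$-disc query only enlarges the discovered configuration, so the decision rule from the first step continues to accept all members of $P_n$ (completeness) and rejects $\epsilon$-far graphs with probability at least as large as $T$ (soundness). The resulting algorithm is exactly a $(q,q)$-canonical tester.

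The main obstacle is making the symmetry argument in the second step fully rigorous: adaptive testers may re-query frontier vertices, and one must condition carefully on the current configuration to argue that each genuinely new query behaves like a fresh uniform sample. The cleanest route is to import the corresponding argument of \cite{GR09} essentially verbatim and to verify that the only change needed for the $F(d)$-model is to work with configurations as in Definition~\ref{def:conf} in place of undirected neighbourhoods; this is routine because a single $F(d)$-model query reveals exactly the outgoing neighbourhood associated with a developed vertex in a configuration.
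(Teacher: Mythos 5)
Your overall plan --- split into a decision-rule canonicalization and a query-pattern canonicalization, and import the Goldreich--Trevisan scheme used in~\cite{GR09} for the undirected case --- is the right template and essentially what the paper does (the paper itself only cites~\cite{GR09} with a footnote that the directed extension is routine). However, you have the two steps in the wrong order, and your argument for the decision-rule step is circular in a way that cannot be patched without first canonicalizing the query pattern. You assert, for the \emph{original} $T$, that ``if $C$ has a $C$-appearance in some $G' \in P_n$, then on input $G'$ the tester would see $C$ with positive probability, forcing $T$ to accept on $C$ with probability one.'' For an adaptive $T$ this is false: $T$ chooses its queries based on answers received, so it can avoid ever reaching $C$ on $G'$ even though $C$ appears there; indeed ``$T$'s decision upon seeing $C$'' is not even a single well-defined value, because different query sequences that arrive at isomorphic configurations sit at different leaves of the decision tree and may carry different verdicts. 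The $1$-sided-error condition constrains $T$ only on the runs it actually executes on inputs in $P_n$, which is precisely what adaptivity lets $T$ steer around. The claim that $T$ accepts every extendable configuration is a \emph{consequence} of the query-pattern canonicalization, not something available beforehand: once the tester has been converted to pick $q$ uniformly random vertices and take $q$-disc queries (via the random-relabeling / permutation step), then on any $G' \in P_n$ containing $C$ the tester really does see $C$ with positive probability, and only at that point does $1$-sidedness force deterministic acceptance of $C$. The order in \cite{GT01} and \cite{GR09} --- symmetrize first, clean up the decision rule last --- is not a matter of taste.

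A secondary, smaller point: the $O(q^2/n)$ with-versus-without-replacement gap and the ``standard amplification if needed'' hedge should both be dropped. The canonical tester of Definition~\ref{def:canonical} samples $q$ vertices as a set, and the random-permutation simulation yields exactly without-replacement sampling, so there is no $o(1)$ loss to absorb. Moreover, amplification by repetition would push the query count past $q$ and take the tester outside the $(q,q)$ class that the theorem claims.
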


Note that  a $(r,q)$-canonical-tester is  a {\em
  `non-adaptive'} algorithm with respect to $r$-disc queries.  

\ignore{%proof - canonical
\begin{proof}[Sketch]
The proof follows the footsteps of \cite{GT01} where it is shown that
if $P$ has a 1-sided error tester of query complexity $q$, then $P$
has a canonical-tester that has 1-sided error and query complexity
${2q \choose 2}$.

	The test $T$ makes at most $q$ queries adaptively, and then based on
	the answers it gets, and in addition depending possibly on the names of the vertices,
	the order of the queries, the internal coins, and $n$- the graph
	size, it makes its decision. % We first assume that $T$ makes
        % {\em exactly} $q$ queries on each input graph (of $n$
        % vertices), and on each possible random string, and answers to
        % its queries, by doing some extra queries if needed. 
When $T$ makes a query to a vertex $v$, it might be that $v$ is
already discovered by previous answers (or queries), in which case we
call $v$ old. Or it might be
that $v$ is a new vertex that did not appear in queries or answers of
previous queries. We call $v$ new in this case.

We first simulate $T$ by an {\em adaptive} tester $T_1$ that acts as
follows: it first picks {\em adaptively} a set of $q$ vertices, and
perform a $q$-disc query on each. Obviously, such $T_1$ can be
constructed from $T$. Each time that $T$ picks a new vertex, $T_1$
picks the same new vertex (with the same probability), and performs a
$q$-disc query around it. Obviously, queries to old vertices can be
simulated having the information of the disc-queries. Finally, the
decision $T_1$ will take is the same that $T$ takes on the specific
run that is simulated.

We now simulate $T_1$ by a test $T_2$ that makes its $q$ disc-queries
to uniformly chosen random vertices. To do this we chose a 
 permutation on $n$ elements $\pi \in_R
\mathcal{S}_n$ uniformly at random. Then on an input graph $G$ on $n$
vertices, it behaves just like $T_1$, but
on $\pi(G)$. Namely, if $T_1$ makes a disc-query to a new vertex $v$ at a certain stage $T_2$ will
query $\pi(v)$. Finally, the decision that
$T_2$ will take on the resulting run is the same as $T_1$ would on the
information that is discovered.  Namely,  if one thinks of
$T_1$ as a distribution over deterministic trees, then each tree (with
its corresponding probability) induces $n!$ trees by choosing the extra
randomness, and behaving as described above.

\begin{comment}
The examination of the rejection probability of $T_2$ is as follows:
For $G \in P$,  $T_1$ accepts with probability 1, on every run of it. As any permutation of $G$, $G'=\pi(G)$ is isomorphic to $G$, it is also in $P$ and hence to be accepted with probability 1. So all together, $G$ is accepted with probability 1.

For $G$ that is far from $P$, for every $\pi$, $\pi(G)$ is also far and has to be rejected by $T_1$.  Consider a certain path in $T_1$. Note that such a path is a sequence of vertices around which a disc query is made. The sequence is $v_1, v_2, ... , v_q$, where $v_i$ may depends on what is seen (including names of neighbours, and type of subgraphs:  this is adaptivity), in the discs around $v_1, ... , v_{i-1}$. 
Consider one deterministic tree $T$, in the collection of trees defined by $T_1$. Every $G$ follows exactly one path (deterministically) in $T$.  We call the value (either "1" for accept, or $0$ for reject) by $path(G,T)$. Then for any fixed $\pi$:

\[Prob(T_1 accepts \pi(G)) = \sum_{T} pr(T) \cdot path(\pi(G),T) < 1/3\] as $G'$ is isomorphic to $G$.

Summing this up for every $\pi$, we get,
\[1/3 > \sum_{\pi} Pr(\pi) \cdot \sum_{T} pr(T) \cdot path(\pi(G),T) = \sum_{T} pr(T)  { \sum_{\pi} pr(\pi) \cdot path(\pi(G),T) }\]

But the sum in the curly brackets (\{...\}) on the right hand side, is just the probability that the certain $T$ accept the average of $pi(G)$, over $G$, which is what happens at $T_2$, and the sum over all $T$, is just the probability that $G$ is accepted by $T_2$.
\end{comment}

It is easy to
see that $T_2$ is still a test for $P$, and further, if $T_1$
    is a $1$-sided error test, then so is $T_2$. The reason is that
    the probability that $T_2$ accepts $G$ is just the average
(according to the uniform distribution on $\pi \in S_n$), of the
acceptance probability of $G_{\pi}$, the permuted graph according to
$\pi$ (as this is true for every branch of $G_1$). However, since the property $P$ is a digraph
property, it is invariant under permuting as above. Hence, $T_2$
will accept any $G \in P$ with probability that is at
least the minimum acceptance probability of $T$ for any $G' \in
 P$, and reject $G$ that is $\epsilon$-far from $P$ with the minimum rejection probability
of $T_1$ for  any $G'$ that is $\epsilon$-far from
    $P$. 

We further note that the distribution induced on the vertices that are
chosen by $T_2$ to be queried is the uniform over all $q$-size subsets
of vertices. This can be formally proven by induction on the order of
vertices that are queried, but is also immediate from the fact that at
any step, the probability that a new $v$ is queried, is identical on
all yet non queried vertices. 

Hence, $T_2$ already makes its queries as a canonical-tester does. In
particular it is {\em non adaptive}. We
now get rid of the possible dependence of the decision $T_2$ reaches,
on the labels, order, or internal coins.  This is argued exactly as
in \cite{GT01}, since the symmetrization implies that
decisions is invariant to order or labels (removing dependence
on coin flips is trivial for $1$-sided error test; $T_2$ accepts on
seeing a configuration only if it accepts w.p $1$ on seeing this
configuration. For two sided error, a slightly more involved argument
is needed, and we should assume that the error probability of $T_2$ is
less than $1/6$ for $2$-sided error, see details in \cite{GT01}).
\end{proof}
}%proof canonical

\section{Our main results}\label{sec:results}
 We consider in what
follows the $F(d)$ model (for constant $d$). The $F(d)$-model is the more natural model from the algorithmic point
of view,  being consistent with the standard data structures for
directed graphs. It contains a strictly larger set of graphs than the
$FB(d)$-model (as the in-degree is not
bounded). 
From the property testing perspective it is more restricted
algorithmically due to the limited
access to the graph. 

We prove here that the strongly-testable {\em monotone} graph
properties are these that are
close (in the sense of Definition \ref{def:dist}) to be expressed by
an $r$-set of forbidden
subgraphs that have some additional connectivity
requirements. For hereditary properties the results are essentially the
same where forbidden subgraphs are replaced
with forbidden induced subgraphs. 
We need the following definitions.

\begin{definition} [Component]\label{def:component}
	Let $H=(V,E)$ be  a directed graph. A subset $V' \subset V$
        defines a {\rm component} of $H$,  if by disregarding the
        directions of the edges of $H$, $V'$ induces a 
	connected component in the resulting undirected graph.
        We say in this case that $H[V']$, the directed subgraph of $H$
        that is induced by $V'$, is a component of $H$. 
\end{definition}

We note that Definition \ref{def:component} is not a standard
graph-theory term, and we warn the reader not to confuse it with strongly
connected components of the digraph.  We are concerned with graphs of
multiple components as the forbidden graphs that define a monotone
property might be such. E.g., let $C_k$ be the directed $k$-cycle, and
consider the property $P_1$ of being $C_3$-free, $P_2$ the property of
being $C_4$-free, $P_3$ the property of being $\{C_3,C_4\}$-free, and
$P_4$ the property of being free of the single graph $H$ that is a
vertex disjoint union of $C_3$ and $C_4$. Namely, a graph is not in
$P_4$ if it has a $C_3$ subgraph and a disjoint 
$C_4$ subgraph. All properties $P_i, i=1,2,3,4$ are distinct. The properties
$P_1,P_2,P_4$ are defined by one forbidden graph. $P_3$ is defined by
two forbidden graphs. The forbidden graphs defining $P_1,P_2,P_3$ have
one component each, while the single forbidden graph defining $P_4$
has two components.

\begin{definition} [Rooted digraph]
	A digraph $H$ is {\rm rooted} if every 
        component $H'$ of $H$ has a  vertex $v$ such that
        for every $u \in V(H')$,  there is  a di-path from $v$ to $u$ in $H'$.
		%A set of digraphs ${\mathcal{H}}$ is rooted if every $H \in
        %\mathcal{H}$ is rooted.
        % the
        % minimal members (w.r.t. edge deletions) in $\mathcal{H}$ are rooted.
\end{definition}
We note that a digraph can have many roots. In particular, if it is
strongly connected then every vertex of it is a root. 
The significance of $v$ being a root in a component of size at most
$r$ is that 
making an $r$-disc query around $v$ will discover the whole component that
contains $v$.

\vspace{0.5cm}
Our main theorem, characterizing the strongly-testable monotone
properties is the following.
\begin{thm}
  \label{thm:main-mon}
  Let $P = \cup_{n \in \N}P_n$ be a monotone digraph property in the
  $F(d)$-model. Then $P$ is strongly-testable $\rm{if~ and~ only~ if}$ there is
  a function  $r: (0,1) \mapsto \N$ such that for any  $\epsilon >0$
  and $n \in \N,~$ 
  there is a  %monotone-non-redundant 
 $r(\epsilon)$-set of rooted
digraphs $\mathcal{H}_n $  such that the property
$P_{\mathcal{H}_n}$ that consists of the 
$n$-vertex  digraphs that are $\mathcal{H}_n$-free, satisfies the
following two conditions:

\noindent
(a) $P_n \subseteq P_{\mathcal{H}_n} $\\
(b) $P_{\mathcal{H}_n} $ is $\epsilon/2$-close to $P_n$.
\end{thm}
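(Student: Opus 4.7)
I plan to prove the two directions of the biconditional separately.

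For the \emph{sufficiency} direction, I would build a canonical $\epsilon$-tester directly from $\mathcal{H}_n$: set $r := r(\epsilon)$, sample $q = q(\epsilon, r)$ vertices uniformly at random, perform an $r$-disc query around each, and reject iff the union of the discovered discs contains a subgraph isomorphic to some $H \in \mathcal{H}_n$. Completeness is immediate from~(a), since $P_n \subseteq P_{\mathcal{H}_n}$ means every $G \in P_n$ is $\mathcal{H}_n$-free. For soundness, if $G$ is $\epsilon$-far from $P_n$ then by~(b) and the triangle inequality $G$ is $(\epsilon/2)$-far from $P_{\mathcal{H}_n}$, so any edge set whose deletion removes every $\mathcal{H}_n$-copy in $G$ has size at least $(\epsilon/2)dn$, and a maximum edge-disjoint packing therefore produces $\Omega(\epsilon n / r^2)$ forbidden copies. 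The rooted hypothesis is the key: from any root of a component, the whole component is reachable by directed paths of length at most $r$, so the $r$-disc around a root contains that component in full. A standard probabilistic argument---using the out-degree bound $d$ to cap how many copy-components can be rooted at a single vertex, and then covering each of the at-most-$r$ components of some copy---shows that $q$ can be chosen as a function only of $r$ and $1/\epsilon$ so that the sample hits a root of every component of at least one copy with probability at least $1/2$, at which point the tester rejects.

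For the \emph{necessity} direction, I would apply Theorem~\ref{thm:canonical} to the strong tester for $P$ to obtain a $(q',q')$-canonical one-sided $(\epsilon/2)$-tester $T$, with $q' = q'(\epsilon/2)$; any configuration inspected by $T$ has at most $R := q' \cdot d^{q'}$ vertices. I would then take $\mathcal{H}_n$ to be the set of all \emph{rooted} digraphs on at most $R$ vertices that are forbidden as subgraphs in every $G \in P_n$, so that~(a) is automatic. For~(b) I would show that every $\mathcal{H}_n$-free $G$ is accepted by $T$ with probability strictly greater than $1/2$, so by the soundness of $T$ it is forced to be $\epsilon/2$-close to $P_n$. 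For any rejecting view $C = (H^C, L^C)$ of $T$, one-sidedness combined with Fact~\ref{fact:f3} and the monotonicity of $P_n$---at any hypothetical copy of $H^C$ in a $P_n$-graph, delete the out-edges of developed vertices that do not appear in the copy to obtain a subgraph with a $C$-appearance---shows that $H^C$ is forbidden as a subgraph in every member of $P_n$. When the $r$-discs of the sampled vertices in the run are pairwise vertex-disjoint, each component of $H^C$ is rooted at the unique sampled vertex it contains, so $H^C \in \mathcal{H}_n$ and $\mathcal{H}_n$-freeness of $G$ already rules out such a $C$-appearance.

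The main technical obstacle I expect is the overlapping case, in which $r$-discs of two or more sampled vertices share vertices, producing components of $H^C$ with no single root, which therefore need not lie in $\mathcal{H}_n$. I plan to handle these residual rejections by passing to a disc-separated variant of $T$ that bases its decision only on the individual rooted $r$-discs around each sampled vertex, using the observation that any overlap forces two of the sampled vertices to sit within undirected distance at most $2r$ in $G$, together with an averaging over subsamples that charges the overlap events against a controlled slack in the distance parameter. Carrying out this reduction while preserving one-sidedness and landing exactly at the $\epsilon/2$ quantitative bound of the theorem is where I expect the bulk of the delicate work of the proof to lie.
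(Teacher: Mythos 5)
Your sufficiency direction is essentially the paper's argument (Proposition~\ref{theorem:directed-out:1} followed by Theorem~\ref{thm:main-ff1}): build a canonical tester that rejects on finding an $\mathcal{H}_n$-copy, use an edge-disjoint packing bound, and exploit rootedness so that an $r$-disc query from a root exposes a whole copy-component. Your quantitative constants are slightly off ($R=q'd^{q'}$ should be $q'd^{q'+1}$, and the packing gives $\ge \epsilon n/(2r)$ rather than $\Omega(\epsilon n/r^2)$ copies), but the plan is sound. Your choice of $\mathcal{H}_n$ in the necessity direction --- all rooted $\le R$-vertex digraphs forbidden as subgraphs throughout $P_n$ --- also happens to satisfy~(a) and~(b) (it contains the minimal members of the set the paper constructs), so you have picked a viable witness.

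The genuine gap is your argument for~(b), and you have put your finger on exactly where it breaks but not on how to repair it. You need that every $\mathcal{H}_n$-free graph is accepted by $T$ with probability $>1/2$. When the inspected $q'$-discs overlap, the rejecting configuration $C=(H^C,L^C)$ can have a non-rooted component, $H^C$ need not lie in $\mathcal{H}_n$, and $\mathcal{H}_n$-freeness tells you nothing. Your proposed fix --- pass to a ``disc-separated'' tester and charge overlap events to slack in the distance parameter --- cannot work: in the $F(d)$-model the in-degree is unbounded, so overlaps are not a low-probability birthday event. Take $G$ to be an in-star with one vertex $w$ of in-degree $n-1$; two uniformly random sampled vertices both reach $w$ in one step with probability $\approx 1$, producing a non-rooted ``$V$'' configuration on essentially every run. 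There is no constant-probability disc-separated regime to average into, and a tester that discards the overlapping runs is a different algorithm whose soundness is precisely the statement you are trying to prove --- the argument is circular.

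The paper's route is different and is the idea you are missing: do not try to make a single $(\epsilon/2)$-tester see only rooted configurations. Instead, use the entire family $\{T(n,\delta)\}_{\delta}$ together with the lower bound of Proposition~\ref{tester:directed-out:2}. Let $\mathcal{H}'(\delta)$ be the subgraphs underlying the rejecting configurations of $T(n,\delta)$; form $\tilde{\mathcal{H}}=\mathcal{H}'(\epsilon/2)\cup\{H\in\mathcal{H}'(\delta):\delta<\epsilon/2,\ |H|\le r(\epsilon/2)\}$ and keep only the minimal members (under the subgraph order) to get $\mathcal{H}$. Minimality then forces rootedness: if $H\in\mathcal{H}$ were not rooted, the graph $G_H$ made of $n/|H|$ disjoint copies of $H$ is $\tfrac{1}{d|H|}$-far from $P_n$, so $T(n,\min\{\epsilon/2,\tfrac{1}{d|H|}\})$ must reject it with probability $\ge 1/2$; yet by Proposition~\ref{tester:directed-out:2} any $1$-sided tester needs $\Omega(\sqrt n)$ queries to exhibit a copy of $H$ itself in $G_H$, so the tester can only reject by seeing a proper subgraph $H'\subsetneq H$, which is then in $\tilde{\mathcal{H}}$ and contradicts the minimality of $H$. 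This detour through $G_H$ and a smaller $\delta$ is what converts the ``non-rooted witnesses exist'' obstruction into ``a rooted proper sub-witness is also forbidden,'' and it is the step your blind proof lacks.
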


We note that the sets  $\{\mathcal{H}_n \}_{n \in \N}$ in Theorem
\ref{thm:main-mon} may depend  on $\epsilon$  (as the
bound $r(\epsilon)$ depends on
$\epsilon$).

A Similar theorem for hereditary properties is the following.

Let $\mathcal{H}$ be  a set of digraphs. Recall the definition of the
property 
$P^*_{\mathcal{H}}$ from Definition \ref{def:mon-hered-prop}. We
denote by  $P^*_{\mathcal{H}_n}$ the set of $n$-vertex  digraphs in $P^*_{\mathcal{H}}$.

\begin{thm}
  \label{thm:main-hered}
  Let $P$ be an hereditary  digraph property in the
  $F(d)$-model. Then $P$ is strongly-testable $\rm{if~ and~ only~ if}$
  there are functions
  $r: (0,1) \mapsto \N$ and $N:(0,1) \mapsto \N$ such that for any  $\epsilon >0$
  %and $n \in \N,~$ 
  there is a  $r(\epsilon)$-set of rooted
digraphs $\mathcal{H}$  such that for every $n \geq N(\epsilon),$
$P^*_{\mathcal{H}_n}$  satisfies the
following two conditions:

\noindent
(a) $P_n \subseteq P^*_{\mathcal{H}_n} $\\
(b) $P^*_{\mathcal{H}_n} $ is $\epsilon/2$-close to $P_n$.
\end{thm}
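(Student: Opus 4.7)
The plan is to mirror the proof of Theorem \ref{thm:main-mon}, replacing forbidden subgraphs with forbidden induced subgraphs and using hereditary closure under vertex deletion in place of monotone closure under edge deletion.

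For the ``only if'' direction, fix $\epsilon > 0$ and use Theorem \ref{thm:canonical} to obtain a canonical $1$-sided error $(\epsilon/2)$-tester $T_\epsilon$ of query complexity $q = q(\epsilon/2)$ and disc radius $r$; let $\mathcal{C}_\epsilon$ be its finite family of rejecting configurations. For each $C = (H,L) \in \mathcal{C}_\epsilon$ I would put into $\mathcal{H}_\epsilon$ a collection of rooted induced subgraphs that captures the structure rejected by $C$. This cannot simply be $H[D]$ alone, since a $C$-appearance constrains the out-edges from developed to frontier beyond what is visible in $H[D]$; instead one must include refinements of $H[D]$ that encode the developed-to-frontier out-edge pattern, while leaving structure on the frontier side unconstrained. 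Each element has at most $q\cdot d^r$ vertices and is rooted (the $r$-disc from a sampled vertex reaches every other vertex in its component), so $\mathcal{H}_\epsilon$ is an $r(\epsilon)$-set of rooted digraphs. I verify (a) via hereditary closure: if $G \in P_n$ contained an induced copy of some $H' \in \mathcal{H}_\epsilon$ on a vertex set $V'$, then by the second bullet of Fact \ref{fact:f3} the digraph $G\setminus \Gamma^+(V')$ would have a $C$-appearance; yet $G \setminus \Gamma^+(V') \in P$ by hereditariness, contradicting the $1$-sided rejection. I verify (b) by observing that if $G \in P^*_{\mathcal{H}_n}$ then any $C$-appearance in $G$ would force an induced copy of some member of $\mathcal{H}_\epsilon$, so $T_\epsilon$ must accept $G$ with probability $1$, hence $G$ is $(\epsilon/2)$-close to $P_n$. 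Finally $N(\epsilon)$ is chosen large enough that all members of $\mathcal{H}_\epsilon$ fit into any $n\ge N(\epsilon)$-vertex graph.

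For the ``if'' direction, given the rooted $r(\epsilon)$-set $\mathcal{H}$, I would build the natural canonical $(r,q)$-tester: sample $q = q(\epsilon)$ random vertices, make $r$-disc queries, and reject iff the developed part of the discovered subgraph contains an induced copy of some $H \in \mathcal{H}$. Soundness follows from (a). For completeness, if $G$ is $\epsilon$-far from $P_n$ then by (b) and the triangle inequality $G$ is $(\epsilon/2)$-far from $P^*_{\mathcal{H}_n}$, and the needed step is that $G$ must then contain $\Omega(n)$ induced copies of some $H \in \mathcal{H}$. Since each copy carries a root and each vertex is the root of at most $d^{r^2}$ copies, $G$ has $\Omega(n)$ distinct roots; sampling $q = \Theta(1/\epsilon)$ vertices uniformly hits one with probability $\ge 1/2$, and the $r$-disc query from that root exposes the full induced copy on developed vertices, triggering rejection.

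The hard part is the removal-style claim in the completeness proof, together with the refined definition of $\mathcal{H}_\epsilon$ above. The monotone ``delete one edge per copy'' argument fails for hereditary properties because edge modifications can create fresh induced copies. My approach is to take a maximal vertex-disjoint family of forbidden induced copies and isolate them by deleting all outgoing edges of their vertices (at cost $O(d\cdot r \cdot |\mathrm{copies}|)$ edge modifications), then use rootedness, the bounded size $r(\epsilon)$ of graphs in $\mathcal{H}$, and possibly a constant number of cleanup iterations to argue that no fresh induced copy appears on the isolated set. This is precisely the ``somewhat different argument'' the paper's overview mentions for hereditary properties in the $F(d)$-model, and it is what distinguishes the present theorem from both the monotone $F(d)$-case and the hereditary $FB(d)$-case (where a direct vertex-removal argument suffices).
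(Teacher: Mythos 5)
Your outline reproduces the high-level shape of the paper's argument (canonical tester $\to$ forbidden configurations $\to$ an induced-subgraph family with an unconstrained frontier part, which is the paper's $cl(C)$), but the two steps you flag as the ``hard part'' actually fail in the way you propose to carry them out, and a third step is also a genuine gap.

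The ``if'' direction as you sketch it does not go through; this is precisely where the paper departs from the monotone argument. You propose uniform vertex sampling and a removal claim of the form ``$\epsilon/2$-far implies $\Omega(n)$ roots,'' but in the $F(d)$-model the in-degree is unbounded, so a maximal family of vertex-disjoint forbidden induced copies can be small even when $G$ is far from $P^*_{\mathcal{H}}$: all appearances may funnel through a handful of high in-degree vertices, which uniform vertex sampling misses. Your repair, deleting the outgoing edges of the copy vertices, does not isolate them either --- a vertex with out-degree $0$ still has unbounded in-degree and may then appear in a fresh induced copy --- and in the induced setting removing edges can genuinely create new copies, which is exactly the obstacle the paper names before Proposition~\ref{thm:main-ff0.5}. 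That proposition instead samples a uniformly random \emph{edge} (Lemma~\ref{lem:sample}) and works with a maximal family $S$ of appearances whose roots have pairwise \emph{disjoint forward-edges}: if $|S|$ is large the tail of a random edge is a root; otherwise $|E^-(R)|$ must be large (else deleting \emph{every} edge touching $R$, incoming and outgoing, at total cost $d|R|+|E^-(R)|<\epsilon dn$, truly isolates $R$, destroys all old appearances, and --- since no $H\in\mathcal{H}$ has isolated vertices --- creates none), so the head of a random edge lands in $R$. The case of disconnected $H\in\mathcal{H}$ then needs a further layer of case analysis, also absent from your outline.

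On the ``only-if'' side two steps are missing. First, the canonical tester $T(\epsilon/2,n)$ is $n$-dependent, so your $\mathcal{C}_\epsilon$ is too; but unlike the monotone theorem, here a \emph{single} $\mathcal{H}$ must serve every $n\ge N(\epsilon)$. The paper unifies across $n$ by a genuinely hereditary argument (Claim~\ref{cl:121}): a $C$-appearance in $G'\in P_{n'}$ lifts, by deleting vertices away from the appearance, to a smaller graph of $P$ that the $n$-vertex test would wrongly reject, so $\mathcal{C}^*(\delta)=\cup_n\mathcal{C}(\delta,n)$ is finite (configurations have size at most $q(\delta)d^{q(\delta)+1}$), stabilizes at some $n(\delta)$, and is avoided by every large graph in $P$. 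The threshold $N(\epsilon)$ must sit past that stabilization point, not merely past the point where the forbidden graphs fit into an $n$-vertex graph. Second, your one-line rootedness claim (``the $r$-disc from a sampled vertex reaches every other vertex in its component'') is false when two disc queries overlap without one containing the other; the resulting connected piece of the discovered subgraph need have no single root. The paper rules this out through Proposition~\ref{tester:directed-out:2} together with the augmented-family/minimal-member construction in the proof of Theorem~\ref{thm:main-ff2}, which is invoked again, not re-proved, at the end of Theorem~\ref{thm:main-ff2.5}; that argument cannot be skipped.
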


\vspace{0.4cm}
\noindent
{\bf Some comments on the results:}
    \begin{itemize}
    \item
      The lower bound  $n \geq N(\epsilon)$ in Theorem
\ref{thm:main-hered} is essential and not an artifact of the proof. 
 Consider  the $F(1)$-model and let $C_k$ be the directed cycle of
 size $k$. Let $P$ be the property that contains an $n$-vertex graph
 if it is free of all cycles $C_k$ for $k \leq \sqrt{n}$ (as induced
 subgraphs).   This is a strongly-testable hereditary (and monotone) property as
 asserted  by Theorem \ref{thm:main-hered} and the
 set $\mathcal{H}$ that contains all cycles up to size
 $\frac{1}{2\epsilon}$, for $N(\epsilon) = 4/\epsilon^2$.

 However, for any possible $r$-set $\mathcal{H'}$ for which $P
 \subseteq P^*_{\mathcal{H'}}$, for  $P$ to $\epsilon$-close to $P^*_{\mathcal{H'}}$, $\mathcal{H'}$ should contain all cycles of
 size at most 
 $1/\epsilon$. But then $P_n \subseteq P^*_{\mathcal{H'}}$
   only for  $n \geq 1/\epsilon^2$. 

 \item The `only if' direction of Theorem \ref{thm:main-mon} is
   restated as Theorem \ref{thm:main-ff2}. In Theorem
   \ref{thm:general} we generalize Theorem \ref{thm:main-ff2} by replacing
   the forbidden set of digraphs $\mathcal{H}_n$ with a finite set of
   forbidden configurations (see Definitions \ref{def:conf} and
   \ref{def:C-free} ). In turn, this stronger (and more immediate
   theorem) is true for any strongly-testable digraph property (rather
   than just for monotone).  Thus, Theorem \ref{thm:general} gives a
   necessary condition for {\em any} graph property to be $1$-sided
   error strongly-testable. For all we know, this could also be a
   sufficient condition. This will be further discussed in Section
   \ref{sec:concl}.
\item
One may ask whether
the extra restriction that $P_{\mathcal{H}_n}$ (or $P^*_{\mathcal{H}_n}$ in
case of hereditary property) is $\epsilon/2$-close to
$P_n$ rather than just being $P_n$ is a necessity or rather just an
artifact of our proof. The answer is that this is needed. 
Indeed, as mentioned in
the introduction, acyclicity is not  strongly-testable in
the $F(d)$-model for large enough $d$, even by $2$-sided error testes 
\cite{BR02}. %, and the $FB(d)$-model  
  However, %for $d=1$,
it is easy to see that directed acyclicity is $1$-sided error strongly-testable in the $F(1)$-model. % (in both models). 
\ignore{ Indeed a $1$-bounded degree digraph is composed of a
  collection of vertex disjoint paths and simple cycles. Hence if $G$
  is $\epsilon$-far from acyclic, it must contain at least
  $\epsilon n$ disjoint cycles. It follows that there are at least
  $\epsilon n/2$ cycles of length at most $2/\epsilon$. Hence,
  sampling a vertex at random and scanning the $2/\epsilon$ disc
  around it will discover such cycles with high probability. This
  immediately implies a $1$-sided error test of query complexity
  $O(1/\epsilon^2)$.  }%ignore
Acyclicity, while monotone, can not be defined by
an $r$-set of forbidden subgraphs in the $F(1)$-model for any fixed
$r$. Rather, it is
$\epsilon$-close (in the $F(1)$-model) to be $\mathcal{H}$-free as induced graphs for the
$\frac{1}{\epsilon}$-set $\mathcal{H}$ that contains all cycles of
size at most $1/\epsilon$.
% Indeed, an $\epsilon$-test for it must scan and reject
%upon detecting forbidden subgraphs of size dependent on $\epsilon$.
\end{itemize}

\section{Proofs of the main results}\label{sec:F}
Here we prove Theorem \ref{thm:main-mon} and Theorem \ref{thm:main-hered}.
We will start by proving the `if' directions for both
theorems in Section \ref{sec:if}. Section \ref{sec:mon-inverse}  contains the proofs of the
`only-if' parts.

\subsection{Monotone properties and hereditary
  properties that are strongly-testable}\label{sec:if}

Theorem \ref{thm:main-mon} states that if $P=\cup_n P_n$ is $\epsilon$-close
to $P_{\mathcal{H}_n}$ for an $r$-set of rooted digraphs $\mathcal{H}_n$
  then $P$ is $1$-sided error strongly-testable. We start by proving that
 the monotone property
 $P_{\mathcal{H}}$ itself is strongly-testable  for a fixed $r$-set $\mathcal{H}$.

 Let $\mathcal{H}$ be a $r$-set of digraphs and
 $P$ the monotone property that contains the 
digraphs that are $\mathcal{H}$-free. Remark \ref{rem:34} implies that we may assume in what follows that
$\mathcal{H}$ does not contain two graphs such that one is a subgraph of
the other. We also note that if $\mathcal{H}$ contains a graph that is an isolated vertex (or a set of
isolated vertices) then $P_\mathcal{H}$ becomes trivial (empty for
large enough $n$). We assume in what follows that the above does not happen.

We start with the following preliminary proposition for the subcase of
Theorem \ref{thm:main-mon},  where $P = P_{\mathcal{H}}$.
\begin{proposition} \label{theorem:directed-out:1} Let $\mathcal{H}$ be a
  $r$-set of rooted digraphs and $|\mathcal{H}|=t$. Then the monotone
  property $P = P_{\mathcal{H}}$ has a $1$-sided error $\epsilon$-test
  in the $F(d)$-model, making $O(t r^2 d^{r+1} \ln r/\epsilon)$
  neighbourhood queries.
\end{proposition}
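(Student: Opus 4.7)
The plan is to use the natural tester: sample $q=\Theta(tr^2 d \ln r/\epsilon)$ vertices uniformly at random, perform an $r$-disc query around each (cost $O(d^r)$ neighbourhood queries per disc, for a total of $O(tr^2 d^{r+1}\ln r/\epsilon)$ queries), and reject iff the union of the discovered $r$-discs contains some $H\in\mathcal{H}$ as a subgraph. One-sided error is immediate because the discovered region is a subgraph of $G$, so if $G\in P_{\mathcal{H}}$ it is $\mathcal{H}$-free and so is every subgraph, and the tester always accepts.

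For soundness, assume $G$ is $\epsilon$-far from $P_\mathcal{H}$. A greedy edge-removal argument---repeatedly take an $\mathcal{H}$-copy and delete its edges---yields at least $\epsilon n/r$ edge-disjoint copies of members of $\mathcal{H}$, since each such copy has at most $r$ vertices and hence at most $dr$ edges in the $F(d)$-model, and one must delete $\geq \epsilon dn$ edges to make $G$ $\mathcal{H}$-free. Pigeonholing over $|\mathcal{H}|=t$ isolates a single $H^\ast\in\mathcal{H}$ with $N\geq\epsilon n/(rt)$ edge-disjoint copies $C_1,\dots,C_N$. Write $H^\ast_1,\dots,H^\ast_c$ ($c\leq r$) for the components of $H^\ast$ and $C_i^{(j)}$ for the copy of $H^\ast_j$ inside $C_i$.

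The quantitative core is a root-counting lemma that exploits the out-degree bound. For each component $j$, the family $\{C_i^{(j)}\}_i$ is edge-disjoint, and every copy in it rooted at a vertex $v$ must use at least one outgoing edge of $v$ (the root reaches each of its non-root vertices via a directed path, starting with an outgoing edge of $v$). Since the outgoing edges used by distinct family copies rooted at $v$ are disjoint and $|\Gamma^+(v)|\leq d$, at most $d$ copies in the family are rooted at any single $v$. Hence the family supplies at least $N/d=\epsilon n/(rtd)$ distinct root-vertices for $H^\ast_j$, so a uniform sample hits such a root with probability $\geq \epsilon/(rtd)$; whenever this happens the $r$-disc around the sampled vertex contains $V(C_i^{(j)})$ entirely (all at directed distance $\leq r-1$ from the root), so the component $C_i^{(j)}$ appears in the discovered subgraph. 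A Chernoff bound together with a union bound over the $c\leq r$ components shows that, with probability at least $1/2$ and for the chosen $q$, the sample yields $\Omega(cr)$ distinct root hits---hence $\Omega(cr)$ discovered copies---for every $H^\ast_j$.

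The final step is to combine these component copies into one copy of $H^\ast$ in the discovered subgraph. I plan a greedy selection in order $j=1,\dots,c$ that picks a discovered copy of $H^\ast_j$ avoiding all previously used vertices. I anticipate the main obstacle of the proof here: the $F(d)$-model permits unbounded in-degrees, so a single vertex could in principle belong to many copies of a component via incoming edges, potentially blocking the greedy. The workaround, which I view as the technically finicky point, is to bound the vertex-multiplicity of the discovered copies through a refined accounting that distinguishes copies by their sampled root and invokes the outgoing-edge count a second time, ensuring each previous choice blocks only $O(r)$ candidate copies of the next component. Once the greedy succeeds, the union of the $c$ chosen vertex-disjoint copies forms a copy of $H^\ast$ inside the discovered region and the tester rejects. (Naively demanding instead that the sample contain an entire root-set of a single $C_i$ would fail, since that event has probability $(q/n)^c$ per copy and would force $q$ to grow with $n$.)
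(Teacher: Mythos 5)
Your proposal follows the paper's proof step for step: same tester, same maximal edge\nobreakdash-disjoint collection argument to extract $\Omega(\epsilon n/r)$ edge-disjoint $\mathcal{H}$\nobreakdash-copies, same pigeonhole to isolate a single $H^\ast$, and the same out\nobreakdash-degree\nobreakdash-based root\nobreakdash-counting giving $\Omega(\epsilon n/(rtd))$ distinct roots of each component of $H^\ast$. You also single out the right difficulty in the multi\nobreakdash-component case, namely assembling the discovered component copies into a \emph{vertex\nobreakdash-disjoint} family so that the discovered region actually contains $H^\ast$, and you are correct that demanding all $c$ roots of a single copy of $H^\ast$ fails (probability $\Theta((q/n)^c)$ per copy).

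The sketched cure, however, does not work as stated. You claim each previously chosen component copy ``blocks only $O(r)$ candidate copies of the next component.'' This is exactly where the $F(d)$\nobreakdash-model bites back: a chosen copy occupies at most $r$ vertices, but each of those vertices may have \emph{unbounded in\nobreakdash-degree}, so a single one of them can be an interior (non\nobreakdash-root) vertex of $\Theta(n)$ distinct copies of the next component. A concrete instance: let $H$ be two vertex\nobreakdash-disjoint directed edges, so $C_1=C_2$ is a single directed edge rooted at its tail, and let $G$ have vertex set $\{a\}\cup\{u_1,\dots,u_m\}\cup\{v_1,\dots,v_m\}$ with edges $u_i\to a$ for all $i$ and $v_j\to u_1$ for all $j$. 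Then $G$ is $\Omega(1/d)$\nobreakdash-far from $H$\nobreakdash-free and has $\Omega(n)$ edge\nobreakdash-disjoint $H$\nobreakdash-copies, yet every $v_j$ is a root of a copy of the edge component and every such copy contains $u_1$; picking $v_{j}\to u_1$ as your first component copy blocks \emph{all} $m$ roots $v_{j'}$, not $O(r)$ of them. (Worse, any two sampled roots drawn from $\{v_j\}$ fail outright: the union of their discs has no two disjoint edges, so no $H$ is found. The tester only succeeds because with high probability it also samples some $u_i$ with $i\neq 1$, which your accounting does not track.) Note, incidentally, that the paper's own treatment of this case is equally terse\,---\,it simply asserts that sampling one root per component type ``will discover an isomorphic copy of $H$ in $G$''\,---\,and this example shows that assertion, read literally, does not hold; so your caution is well placed, but the bound you invoke does not close the gap, and a correct argument needs to separate roots more carefully than ``one per component type.''
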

\begin{proof}
	The top level idea is simple, and a similar idea was used in
        \cite{GR02}:  Suppose that  a digraph $G$  
         is $\epsilon$-far from being ${\mathcal{H}}$-free.
        We will show that there is a
        large set of vertices, each being a root in an
        $H$-appearance in $G$ for some $H \in \mathcal{H}$. Hence 
        sampling of a random vertex and scanning the $r$-disc around it will
        find a forbidden $H$-appearance in $G$. Some extra care should be taken for
        disconnected forbidden subgraphs.  

        Formally, we prove that the following test $T(\epsilon, n)$ is a test
	for $P_{\mathcal{H}}$.
	
	{\bf   $T(\epsilon, n)$: }
	Repeat for $\ell =  (tr^2 d/\epsilon) \cdot 2\ln r$
        times independently:  Chose  a vertex $v \in_R V(G)$ uniformly
        at random and
        make an $r$-disc query around $v$.
		If  some $H \in
	\mathcal{H}$ is found as a subgraph in the discovered subgraph
        of $G$ then reject.  Otherwise accept.

Obviously the test accepts with probability $1$ every graph that is
$\mathcal{H}$-free. Further, the claimed complexity is clear.

	Assume that $G$ is a digraph on $n$ vertices that is $\epsilon$-far from
	$P_{\mathcal{H}}$.  We claim that  $G$ contains at least $\epsilon n/r$ edge
	disjoint subgraphs, each that is isomorphic to some %{\em minimal} (w.r.t
%        to edge deletions) configuration
	$H \in \mathcal{H}$. This is so as let $F$ be any maximal edge
        disjoint collection of
        subgraphs of $G$, each that is isomorphic to some $H \in
        \mathcal{H}$.  By deleting all outgoing-edges
         that are adjacent to vertices in $F$ (at most $|F|\cdot
        r \cdot d$) none of the subgraphs in $F$ is a  forbidden
        subgraph anymore. Further,  no new forbidden
        subgraph is created (by the assumption that no graph in
        $\mathcal{H}$ is a subgraph of another graph in $\mathcal{H}$).
 Therefore,  $G$ becomes $\mathcal{H}$-free after
        deleting these edges. We conclude that 
        $|F|\cdot r\cdot d \geq \epsilon nd$.

 Fix such a collection of subgraphs $F$. We  deduce that there is some fixed
	graph $H \in \mathcal{H}$ that is isomorphic to at least
	$\frac{|F|}{t} \geq \epsilon n/(tr)$ of the  digraphs in $F$. Fix such
	$\epsilon n/(tr)$ edge disjoint subgraphs in $G$, which we refer to as $F'$.
	
	Assume first that $H$ is composed of one single rooted
        component. Since the subgraphs in $F'$ are edge
        disjoint, a root vertex $v$ can appear in at most $d$ such
        distinct subgraphs (on account that it must have at least one
        forward edge in each such appearance).  We conclude that  there are at least
        $\frac{|F'|}{d} \geq \frac{\epsilon n}{t r d}$ distinct
        vertices, each being a root in an $H$-appearances in $G$. Hence, with probability
        $\frac{\epsilon}{trd}$ a random vertex $v$ 
         will be one of these roots. Assuming that such a
        vertex $v$ is chosen by $T(\epsilon,n)$, then making the
        $r$-disc query to $v$ will discover 
        the corresponding  $H$-appearance. Thus the failure probability is bounded by
        $(1-\frac{\epsilon}{trd})^\ell < 1/2$.
	
	Finally, assume that $H$ is composed of several rooted
        components. Since $|H| < r$, $H$ is composed of at most $r$
        components $C_1, \ldots C_a,~ a \leq r$. In this case, finding
        $a$ vertices $v_1, \ldots v_a$, with the $i$th being the root
        of a subgraph isomorphic to $C_i$ will discover an isomorphic
        copy of $H$ in $G$.  The probability of sampling a root of a
        component of type $C_i$ is at least
        $\frac{\epsilon}{t r^2 d}$. The union-bound implies that the 
        probability that there exists some type that we don't sample a
        root of is at most
        $a\cdot (1-\frac{\epsilon}{t r^2 d})^{\ell} \le 1/2$. This
        concludes the proof.
	\end{proof}

   It is assumed  implicitly in
       Proposition \ref{theorem:directed-out:1}  that $\mathcal{H}$ is a
       collection of digraphs in the $F(d)$-model. Therefore, the fact that $\mathcal{H}$
       is an $r$-set  implies  that $t=|\mathcal{H}|$ is bounded in terms of $r$
       (exponentially). Although not of prime interest for this paper,
       we still give the above tighter dependence on $t$  because  $t$ could be much smaller than
       the worst case bound.

For hereditary properties a Proposition analogous to Proposition \ref{theorem:directed-out:1}   will be stated. In this
case being $\mathcal{H}$-free as subgraphs is replaced by being free
as {\em induced}
subgraphs. However, unlike the easier case of monotone properties, we
can't 
assume that if $G$ is $\epsilon$-far from the property, then it
contains many vertices that are roots of 
$\mathcal{H}$-appearances. The reason is that  deleting 
edges in an $\mathcal{H}$-appearance in $G$ may create a new
$\mathcal{H}$-appearance\footnote{It could be true that for every
  $\mathcal{H}$, if
  $G$ is far from being $\mathcal{H}$-free as induced subgraphs, then there are many
  $\mathcal{H}$-appearances in $G$, but we 
  do not have a proof nor a counter example for this.}. We
use a  different argument.

\begin{definition}
  Let $\mathcal{H} $ be a  set of
  digraphs. We say that $H \in \mathcal{H}$ is essential if the
  digraph $H$ is $(\mathcal{H} \setminus \{H\})$-free as induced
  subgraph. Namely, $H$ does not contain as an induced subgraph any member of
  $\mathcal{H}$ except for itself.  If every $H \in \mathcal{H}$ is
  essential, we say that $\mathcal{H}$ is non-redundant.
\end{definition}

\begin{proposition} \label{thm:main-ff0.5} Let
  $\mathcal{H}$ be a non-redundant $r$-set of
  rooted digraphs. %, and let $r = \max_{H \in \mathcal{H}}{|V(H)|}$.
  Then the 
  hereditary property of being $\mathcal{H}$-free as induced subgraphs
  is $1$-sided error strongly-testable in the $F(d)$-model.
\end{proposition}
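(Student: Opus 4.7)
The plan is to design a canonical $(r,\ell)$-tester (in the sense of Definition \ref{def:canonical}) with $\ell = O(r^2 d^{r+1} |\mathcal{H}| / \epsilon)$ that samples $\ell$ vertices uniformly at random, makes an $r$-disc query around each, and rejects iff among the union of the query outputs there is a subset $W$ of at most $r$ developed vertices with $G[W]$ isomorphic to some $H \in \mathcal{H}$. Since all vertices of $W$ are developed, $G[W]$ is fully determined by the query answers. Completeness is immediate: a graph in $P^*_{\mathcal{H}}$ has no induced $\mathcal{H}$-copy at all, so the tester always accepts.

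For soundness it suffices to prove the following rooted induced removal claim: if $G$ is $\epsilon$-far from $P^*_{\mathcal{H}}$, then the number of vertices $v \in V(G)$ that are a root of a component in some induced $\mathcal{H}$-copy of $G$ is at least $\epsilon n / c$ for a constant $c = c(r, d, |\mathcal{H}|)$. Once this is established, the standard union-bound analysis of Proposition \ref{theorem:directed-out:1} shows that with probability at least $1/2$ the sample hits a root of every component of some induced copy simultaneously; the $r$-disc queries around those roots then reveal the whole copy and cause rejection.

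To prove the claim I would select a maximal family $\mathcal{F}$ of vertex-disjoint induced $\mathcal{H}$-copies in $G$. Each copy in $\mathcal{F}$ contributes at least one root vertex, so if $|\mathcal{F}|$ is large the claim follows immediately. Otherwise, set $U = \bigcup \mathcal{F}$, so $|U| \leq r |\mathcal{F}|$ is small, and by maximality every induced $\mathcal{H}$-copy in $G$ must meet $U$. I would then modify $G$ into $G'$ by deleting all outgoing edges incident to $U$ together with every edge inside $U$; this costs at most $|U| \cdot (d + r)$ edge changes. If $G'$ can be shown to be $\mathcal{H}$-free as induced subgraphs, then $|U| \cdot (d+r) < \epsilon d n$ would contradict $G$ being $\epsilon$-far, forcing $|\mathcal{F}|$ to be large.

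The main obstacle, and the step at which non-redundancy of $\mathcal{H}$ becomes essential, is ruling out that the modifications introduce a \emph{new} induced $\mathcal{H}$-copy in $G'$. Any putative copy on a set $W$ either (i) is disjoint from $U$, in which case $G[W] = G'[W]$ yields an induced $\mathcal{H}$-copy in $G$ that is vertex-disjoint from $\mathcal{F}$, contradicting its maximality, or (ii) meets $U$ at some vertex $u$, which has out-degree $0$ in $G'$ and so must be a sink in the isomorphism to $H$. Case (ii) requires a careful case analysis on $W \cap F_i$ for each $F_i \in \mathcal{F}$ touching $W$: since $G[F_i]$ is itself an induced copy of some $H_i \in \mathcal{H}$, non-redundancy forbids the proper induced subgraph $G[W \cap F_i]$ of $H_i$ from being isomorphic to any member of $\mathcal{H}$, while the rooted structure of $H$ forces the root of $u$'s component to lie in $W \setminus U$. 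Combining these constraints with the fact that every vertex of $W \cap U$ loses all its out-edges in $G'$ should force the resulting configuration to be incompatible with any rooted $H \in \mathcal{H}$. Making this last piece airtight for every $H \in \mathcal{H}$ simultaneously is the heart of the proof and is where I expect the bulk of the work to lie.
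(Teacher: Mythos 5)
Your plan hinges on a vertex-level removal claim — that $\epsilon$-farness from $P^*_{\mathcal{H}}$ forces $\Omega(\epsilon n)$ vertices to be roots of induced $\mathcal{H}$-copies — and on a uniform \emph{vertex} sample to hit such roots. The paper explicitly does \emph{not} prove such a claim, and in fact includes a footnote stating that the authors have neither a proof nor a counterexample for it. The obstruction is exactly the one your argument runs into in case~(ii): after deleting the outgoing edges of $U$ and the edges inside $U$, the vertices of $U$ can retain arbitrarily many \emph{incoming} edges (the $F(d)$-model places no bound on in-degree), so a vertex $u \in U$ with out-degree $0$ in $G'$ can perfectly well sit as a non-root sink in a brand-new induced $\mathcal{H}$-copy $G'[W]$ with $W \setminus U \neq \emptyset$. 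Neither rootedness (which only forces roots to sit outside $U$) nor non-redundancy (which speaks about induced subgraphs \emph{of a single $H_i$}, not about how a new copy may straddle $U$ and its complement) rules this out. You cannot afford to delete the incoming edges of $U$ either, since there may be up to $\Theta(n)$ of them even when $|U|$ is $o(\epsilon n)$. So the ``heart of the proof'' you flag is not merely a technical loose end: it is precisely what the authors were unable to establish.

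The paper's actual proof sidesteps this by sampling uniformly random \emph{edges} (via Lemma~\ref{lem:sample}) rather than vertices, and then making $r$-disc queries from both endpoints of each sampled edge. The case analysis is on a maximal set $S$ of $\mathcal{H}$-appearances with disjoint root forward-edges, with root set $R$: if $|S|$ is large, a random out-edge endpoint hits a root; if $|S|$ is small but $|E^-(R)|$ is large, a random \emph{in-edge} endpoint lands in $R$; and if both $|S|$ and $|E^-(R)|$ are small, then one \emph{can} afford to isolate $R$ entirely (both forward and backward edges), which does make $G$ $\mathcal{H}$-free and contradicts farness. The crucial point is that edge sampling gives you a handle on vertices with high in-degree, which vertex sampling does not, and this is what makes the argument close without proving the removal lemma your plan assumes. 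Your approach as written cannot be completed.
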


The following lemma is folklore. We state it for completeness.

\begin{lemma}[sampling a random edge]
  \label{lem:sample}
Let $G = (V;E)$ be a graph in the $F(d)$-model with $|E(G)| \geq \epsilon nd$.
Then, with probability at least $\epsilon/d$, the following randomized algorithm 
outputs an edge $e \in E$ that is distributed uniformly in $E$, 
and outputs a special failure indication otherwise. 
The algorithm sample a vertex $v \in V (G)$ uniformly at random,
queries this vertex to obtain $\Gamma^+(v)$,
and outputs each edge going out of $v$ with probability $1/d$.
%ODED: I think the next sentence is not really needed, but it will not hurt if you want it.
In other words, letting $k=|\Gamma^+(v)|$, the algorithm stops
indicating failure with probability $1-\frac{k}{d}$,
and otherwise it samples $u\in\Gamma^+(v)$ uniformly at random and outputs $e = (v,u)$.
\end{lemma}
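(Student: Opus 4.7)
The plan is a direct probability computation, tracking the law of the algorithm's output as a distribution over $E \cup \{\text{fail}\}$. Fix an arbitrary edge $e = (v,u) \in E$. The algorithm outputs $e$ if and only if three events happen in sequence: it samples the specific vertex $v$ (probability $1/n$); it does not abort at $v$, which by construction occurs with probability $|\Gamma^+(v)|/d$; and, conditional on not aborting, it picks $u$ among the outgoing neighbours of $v$ (probability $1/|\Gamma^+(v)|$). Multiplying these yields $1/(nd)$, a quantity independent of which edge $e$ was fixed. This is the whole content of the proof; the factors $|\Gamma^+(v)|$ cancel, which is precisely why the rejection rate $1-|\Gamma^+(v)|/d$ is designed this way.

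From this single observation both assertions of the lemma follow immediately. Summing the per-edge output probability over all $e \in E$, the total success probability equals $|E|/(nd)$, which by the hypothesis $|E| \geq \epsilon n d$ is at least $\epsilon$, hence in particular at least $\epsilon/d$ as stated. Moreover, since conditioning on non-failure merely rescales each of the edge-probabilities $1/(nd)$ by the same factor $nd/|E|$, the conditional output distribution is uniform on $E$. There is no real obstacle here beyond carefully noting that $v$ is drawn uniformly from $V$ (not from the edge-endpoints), so that a vertex with many outgoing neighbours is implicitly ``weighted'' by $|\Gamma^+(v)|$ in the output distribution, exactly compensating the $1/|\Gamma^+(v)|$ factor from the final uniform choice among its edges.
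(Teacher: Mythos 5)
Your proof is correct and follows essentially the same route as the paper: compute the per-edge output probability $\tfrac{1}{n}\cdot\tfrac{k}{d}\cdot\tfrac{1}{k}=\tfrac{1}{nd}$, observe it is independent of the edge, and conclude uniformity of the conditional output distribution. The one small difference is in bounding the success probability: the paper lower-bounds it loosely via the $\geq\epsilon n$ vertices of positive out-degree (each contributing failure probability at most $1-\tfrac1d$), yielding $\epsilon/d$, whereas you sum the per-edge probabilities to get the exact value $|E|/(nd)\geq\epsilon$, which is tighter by a factor of $d$ and of course still implies the stated $\epsilon/d$.
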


% \begin{lemma}[sampling a random edge]
%   \label{lem:sample}
%   Let $G=(V,E)$ be a graph in the $F(d)$ model with  $|E(G)| \geq
%   \epsilon nd$. Then the following randomized algorithm outputs an edge $e \in E$
%   that is distributed uniformly in $E$, or a ``Fail'' indication''. The
%   algorithm makes  one  neighbourhood query, and its success probability for an
%   edge output (rather than ``Fail'') is at least $\epsilon /d$. 

%  \noindent 
% {\bf algorithm: } Sample a vertex $v \in V(G)$
%   uniformly at random. Query $v$ to get $\Gamma^+(v)$. Let
%   $|\Gamma^+(v)|=k$. Toss a biased coin of probabilities $Pr(Heads) = \frac{k}{d}$
%   and $Pr(Tail) = 1-\frac{k}{d}$. On Tail output ``Fail'' and stop.  On Heads sample  $u \in \Gamma^+(v)$ uniformly at
%   random and output $e=(v,u)$. 
% \end{lemma}
\begin{proof}
  Since $|E(G)| \geq \epsilon nd$ there are at least $\epsilon n$ vertices
  each with outdegree at least $1$. Let this set be $V_1$. The algorithm will output an edge in
  the case it chooses $v \in V_1$, and that it does not choose to
  indicate failure after choosing $v$.  This occurs with probability at least $\epsilon /d$.
  
The algorithm outputs a fixed edge $e=(v,u)$ with probability
    $Pr(e) = \Pr(v) \cdot \frac{deg(v)}{d} \cdot \frac{1}{deg(v)}
    = \frac{1}{|V_1|d}$. Since this is identical for all edges, the
    algorithm induces the uniform distribution on $E(G)$. 
\end{proof}

\begin{proof}[of Proposition \ref{thm:main-ff0.5}]
  For this proof, we abbreviate ``$H$-appearance'' and
``$\mathcal{H}$-appearance'' for $H$-appearance as {\em induced}
subgraph, and $\mathcal{H}$-appearance as induced subgraphs,  respectively.

  We may assume that $\mathcal{H}$ does not include an isolated vertex
as a member, as otherwise, being $\mathcal{H}$-free is an empty
property.  Further, we may assume that for no  $H \in \mathcal {H},$
$H$ contains an isolated vertex. As otherwise, we replace such $H$
with $H'$ that is obtained from $H$ by removing the isolated
vertices. Obviously, for $n$ large enough,  $G$ contains $H$ as an
induced subgraph if and only if $G$ contains $H'$ as induced
subgraph.

The test samples some vertices and scans the $r$-disc around each. It
rejects only if it finds a $\mathcal{H}$-appearance in the subgraph of
$G$ that it discovers.  The
vertex set that is sampled is a set of endpoints of
$\ell= \frac{8td\ln r}{\epsilon }$ random edges. This is done by
calling the algorithm of Lemma \ref{lem:sample} for $4d\ell/\epsilon$
times. Note that the lemma guarantee a success probability of
$\epsilon/d$ per edge query only for graphs with
$|E(G)| \geq \epsilon d n$ edges. In general, these $4d\ell/\epsilon$
calls could result in some random edges or none at all. If less than
$\ell$ edges are produced by the $4d\ell/\epsilon$ calles to the
algorithm in Lemma \ref{lem:sample}, the algorithm will stop and
accept.  Thus the overal query complexity is
$O(d^2t \ln r/\epsilon^2)$ neighbourhood queries in addition to
$O(td \ln r/ \epsilon)$ $r$-disc queries.

It is clear that for $G$ that is $\mathcal{H}$-free the test accepts
with probability $1$. 

Let $G$ be a digraph on $n$ vertices that is $\epsilon$-far from
being ${\mathcal{H}}$-free as induced subgraphs. Since $G$ must be
$\epsilon$-far from the empty graph, it follows that $|E(G)| \geq \epsilon dn$. 
This implies that
with probability at least $7/8$ the  $4d\ell/\epsilon$ calls to the
algorithm in Lemma \ref{lem:sample} will indeed produce at least
$\ell$ random edges. In what follows we condition the analysis on the
assumption that indeed $\ell$ random edges are produced.

For simplicity we first analyze  the test for the case that each $H \in
\mathcal{H}$ has only one rooted component (i.e, this does not cover,
e.g., the property of being free of a disjoint pair of a di-triangle
and a $4$-cycle). The argument for the general case will be
 somewhat harder.

%   We will show that
% with high probability one of the random vertices that is sampled is a
% root of an $\mathcal{H}$-appearance in $G$. Then making an $r$-disc
% around such vertex will discover the corresponding $H$-appearance (as
% $H$ contains a single component by assumption), and the test will
% succeed with high probability.

	Let $S$ be a maximal set of  subgraphs of $G$, each being 
        an $\mathcal{H}$-appearance, and  in which
        the {\em forward-edges
        of the roots are
        disjoint}.  For each subgraph in $S$ fix one root vertex. Let
      this set of vertices be $R$.

   Assume first that $|S| \ge
        \frac{\epsilon n}{2}$. Then for an edge $e=(u,v)$, sampled uniformly at
        random from $E(G)$,  $u$ is a root of an $\mathcal{H}$-appearance with probability at least
        $p_1 = \frac{\epsilon}{2d}$. Hence, choosing $\ell$ random
        edges will  find a vertex that is a root of an
        $\mathcal{H}$-appearance with 
        probability of at least $3/4$.

  Suppose now that $|S|\le \frac{\epsilon n}{2}$. Then  $|R| \le
        \frac{\epsilon n}{2}$ (as we fixed one root vertex per member
        in $S$). Let $E^-(R) = \{(u,v) \in G~ | ~ v \in
        R \}$.

Assume first that $|E^-(R)| < \frac{\epsilon nd}{2}$. Let $E(R)$
be the set of all edges adjacent to $R$ (both incoming and outgoing
edges). Then 
$|E(R)| \le
%d \cdot |\Gamma^+(R)| + |E^-(R)|  <
d |R| + |E^-(R)|  <
                  \epsilon nd$. Therefore deleting all edges
                  in $E(R)$  results in a subgraph in which the vertices in $R$ 
                become  isolated and 
                all old $\mathcal{H}$-appearances in $S$ will be
                destroyed. We claim that the resulting graph $G'$ becomes
                $\mathcal{H}$-free. Indeed if $G'[V']$ is isomorphic to
                some $H \in \mathcal{H}$, either $G[V']$  is also so,
                or it is created by the absence of some old edges that
                are deleted. In the first case, 
                $G[V']$  must share an edge $(u,v)$ with an
                appearance in $S$, and where $u$ is a root in both
                appearances. This cannot happen as the edge $(u,v)$ is
                deleted. For the second possibility, as we delete {\em
                  all} edges (forward and backwards edges) adjacent to
                roots, deleting an edge $(u,v)$ makes $u$ isolated
                in $G'$ and hence, by the discusion in the first
                paragraph of the proof, $u$ cannot be part of an
                $\mathcal{H}$-appearance.

                The fact that  $G'$  becomes
                $\mathcal{H}$-free  is in contradiction with
                the assumption that $G$ is $\epsilon$-far from being
                such, as we have deleted less than $\epsilon dn$ edges.
Hence $|E^-(R)| \geq  \frac{\epsilon nd}{2}$. 
	But then sampling a random edge $e \in E(G)$  will
              result in $e=(u,v)$ for which $ v \in R$
              with success probability at least $\epsilon/2$. Thus, choosing $\ell$ random edges
                implies that we pick a root of an $\mathcal{H}$-appearance with
                probability  at least $3/4$. 

                % Lemma \ref{lem:sample} asserts that a random edge can
                % be sampled with success probability
                % $1/2$ doing $O(d)$ neighbourhood queries. Hence in
                % total for this case, we have found a vertex that is a
                % root of an $\mathcal{H}$-appearance with probability $p'/2 \geq
                % \epsilon/4$. Choosing $\ell_2= O(1/\epsilon)$ random
                % edges will guarantee a probabilioty $1/2$ of finding a
                % vertex as a root of an $\mathcal{H}$-appearance. 

                We conclude that in all cases (of sizes of $S$) we find a vertex that is
                a root vertex of an $\mathcal{H}$-appearance with
                probability at least $3/4$. If this happens,  then scanning the $r$-disc
                around the endpoints of the sampled edges  will
                discover the $\mathcal{H}$-appearance.  This concludes  the
                proof  for this simple case (in which each $H \in
                \mathcal{H}$ has a single rooted component).  % We note
                % that the overall query complexity is bounded by at
                % most $\ell_1 + \ell_2$ $r$-disc queries which is
                % constant for fixed $\mathcal{H}$.

                \vspace{0.4cm}
                \noindent
{\bf The general case:} For the general case, the same argument does
not 
 work directly. To realize what is the difficulty, assume that a
forbidden graph $H$ consists of two components:  a di-triangle and a disjoint $4$-cycle.
Assume also that $G$ is $\epsilon$-far from being $H$-free and that
 there is a small
number of $H$-appearances in $G$.  Then, similarly to the second
case  above,  we conclude   that $E^-(R)$ is large, where $R$ is the
set of roots of the $H$-appearances. This would mean that we
can find a root vertex in an $H$-appearance by making only a small
number of queries. But what if most of these edges are going into vertices
in di-triangles, and only very few to vertices in $4$-cycles.  In order to discover a forbidden
subgraph we also need to discover a $4$-cycle. In the general case
we need to
combine more carefully the several cases of different sizes of $E^-(R)$. This we do as follows:

Let $G$ be  a digraph on $n$ vertices that is $\epsilon$-far from being
$\mathcal{H}$-free as induced subgraphs (where we no longer assume that each forbidden
graph in $\mathcal{H}$ has only one component).

For $\mathcal{H} = \{H_1, \ldots ,H_t\},$ let $H_i$ be composed of
disjoint components $H_{i,j}, j=1, \ldots j_i$. 
Let $S$ be a maximal set of  subgraphs of $G$, each being 
        an $H_{i,j}$-appearance for some $i,j$, and  in which
        the {\em forward-edges
        of the roots are
        disjoint}.  

      We can write $S = \cup_{i,j} S_{i,j}$ where $S_{i,j}$ contains
      the corresponding appearances  of $H_{i,j}$ in $G$. Let
      $R_{i,j}$ be the set of the 
      corresponding roots, one per each appearance in $S_{i,j}$, and $\gamma_{i,j} = |E^- (R_{i,j})|$. Note that $i$ ranges over $\{1,\ldots ,t\}$
      and $j$ ranges over all possible components types of $H_i$ which is a
      number $j_i, ~ j_i \in \{1, \ldots ,r\}$.

For each $i \in \{1, \ldots , t\}$ let $I_i = \{j \in  \{1, \ldots
      ,j_i\}~ | ~ |S_{i,j}| < \delta n=\frac{\epsilon n}{2t} \}
      $.
      
 {\bf case (a): }    
      Assume that for some $i \in \{1, \ldots ,t\}$, for every
      $j \in I_i$, $\gamma_{i,j} \geq \frac{\epsilon d n}{2t}$.

      In this case,  for every $j \notin I_i$, for a random edge
      $(u,v)\in E(G)$, $u$  is
      going to be a root of an $H_{i,j}$ appearance (namely in
      $R_{i,j}$) with probability at least $\delta /d = \frac{\epsilon}{2td}$.   In addition, for
      every $j \in I_i$, a random edge $(u,v)$ picked uniformly from
      $E(G)$ will have $v \in R_{i,j}$ with probability at least
      $\frac{\gamma_{i,j}}{d^2n} = \frac{\epsilon}{2td}$ (as $v$ could be  a root of at most $d$
      distinct members in $S$).

   Hence sampling $\ell > 4 \ln r \cdot \frac{2td}{\epsilon}$ random
   edges implies that a root in an appearance of
$H_{i,j},$ for every $j \in \{1, \ldots ,j_i\},$ will be found with
probability at least $7/8$.
Calling  the sampling algorithm of
Lemma \ref{lem:sample} for $4d \ell /\epsilon$ times results in at least $\ell$ random edges with
probability at least $7/8$. Therefore, the overall success probability in
this case is at least $3/4$. 

{\bf case (b): }
 If case (a) does not hold, then for every $i
\in \{1, \ldots , t\}$, there is $j(i)\in I_i$ for which 
$\gamma_{i,j(i)} < \frac{\epsilon d n}{2t}$. (It could be that for some $i$ there are more
than one $j(i)$ as above; in that case, choose an arbitrary one.) 

But then  deleting, for every $i
\in \{1, \ldots , t\}$,  all edges incident to every
root in $S_{i,j(i)}$ (forward and backward edges),  all
$\mathcal{H}$-occurrences in $S$ will be destroyed (as for each $H_i$
we have destroyed 
all appearances of $H_{i,j(i)}$ in $S$). Moreover, no new appearances are created by the same
reasoning as in the simple case. Finally, we have deleted at most
$\sum_{i=1}^t d|S_{i,j(i)}| + \gamma_{i,j(i)} < \epsilon dn$ edges which
contradicts the assumption that $G$ is $\epsilon$-far from being $\mathcal{H}$-free.    
\end{proof}

We have proved so far that monotone or hereditary properties that are
defined by an $r$-set of forbidden rooted digraphs are strongly-testable. To
prove the `if-part' of Theorems \ref{thm:main-mon} and \ref{thm:main-hered},  we will also show
that  properties that are {\em close}  to such properties are
strongly-testable.  This is done next.
The following is a restatement of the `if-part' of Theorem \ref{thm:main-mon}.
\begin{thmm}
  \label{thm:main-ff1}
Let $\mathcal{H}$ be a $r$-set of rooted
digraphs and for $n \in \N$ let $P_{\mathcal{H}_n} $ the monotone property that contains all
$n$-vertex digraphs that are $\mathcal{H}$-free as subgraphs. Let
$P=\cup_n P_n$ be a digraph 
property in the $F(d)$-model for which, (a) $P_n \subseteq P_{\mathcal{H}_n} $, and (b)  
$P_{\mathcal{H}_n} $ is $\epsilon/2$-close to $P_n$. Then, $P$ is $1$-sided
error $\epsilon$-strongly-testable in the $F(d)$-model.
\end{thmm}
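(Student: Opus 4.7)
The plan is to reduce testing $P$ to testing $P_{\mathcal{H}}$ using the tester already built in Proposition \ref{theorem:directed-out:1}, and to pay the factor of two in the distance parameter via the triangle inequality.

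Concretely, given a distance parameter $\epsilon$, I would run the tester $T(\epsilon/2, n)$ from Proposition \ref{theorem:directed-out:1} for the property $P_{\mathcal{H}_n}$. Its query complexity is $O(tr^2 d^{r+1}\ln r / \epsilon)$, which depends only on $\epsilon$ and $d$ (since $r=r(\epsilon)$ and $t \le d^{O(r)}$ are functions of $\epsilon$ alone), hence is a function of $\epsilon$ as required for strong testability. I would then show that this same tester, without modification, is a valid 1-sided error $\epsilon$-test for $P$.

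The completeness direction is immediate: if $G \in P_n$, then by assumption (a) we have $G \in P_{\mathcal{H}_n}$, and the tester of Proposition \ref{theorem:directed-out:1} accepts every $\mathcal{H}$-free graph with probability $1$. Thus 1-sided error is preserved. For the soundness direction, I would argue the contrapositive: suppose $G$ is not rejected with probability at least $1/2$; then by the guarantee of Proposition \ref{theorem:directed-out:1} applied with parameter $\epsilon/2$, the graph $G$ must be $\epsilon/2$-close to $P_{\mathcal{H}_n}$. Thus there exists $G' \in P_{\mathcal{H}_n}$ with $\mathrm{dist}(G,G') \le (\epsilon/2)dn$. By assumption (b), every member of $P_{\mathcal{H}_n}$ is $\epsilon/2$-close to $P_n$, so there exists $G'' \in P_n$ with $\mathrm{dist}(G',G'') \le (\epsilon/2)dn$. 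The triangle inequality then gives $\mathrm{dist}(G,G'') \le \epsilon dn$, i.e., $G$ is $\epsilon$-close to $P_n$. Contrapositively, any $G$ that is $\epsilon$-far from $P_n$ is rejected with probability at least $1/2$.

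There is no real obstacle here; the statement is essentially a closure-under-approximation observation, and all the combinatorial work has been carried out in Proposition \ref{theorem:directed-out:1}. The only minor point worth checking is that the distance notion in Definition \ref{def:dist} is symmetric and satisfies a triangle inequality with the normalization $\epsilon dn$ (which it does, since it is just $|E(G) \triangle E(G')|$ scaled by $1/(dn)$), so that the two $\epsilon/2$ budgets add up to the claimed $\epsilon$.
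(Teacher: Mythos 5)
Your proof is correct and takes essentially the same approach as the paper: run the $\epsilon/2$-tester for $P_{\mathcal{H}}$ from Proposition~\ref{theorem:directed-out:1}, using (a) for completeness and (b) together with the triangle inequality for soundness. The paper compresses the soundness step into one sentence (``if $G$ is $\epsilon$-far from $P_n$ then it must be $\epsilon/2$-far from $P_{\mathcal{H}}$''), while you unpack the same reasoning in contrapositive form; the content is identical.
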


\begin{proof}
  By Proposition \ref{theorem:directed-out:1}, for every $\delta >0$ there
  is a $1$-sided error $\delta$-test for $P_{\mathcal{H}}$. Let
  $\delta= \epsilon/2$ and $T$ be a corresponding $1$-sided error
  $\delta$-test for $P_{\mathcal{H}}$.  We run $T$ on $G$, accept if $T$ accepts and reject
  otherwise. If $G \in P_n$ then since $P_n \subseteq P_{\mathcal{H}}$
  the test will accept $G$ w.p. $1$.  On the other hand, if $G$ is
  $\epsilon$-far from $P_n$, then it must be $\epsilon/2$-far from
  $P_{\mathcal{H}}$ as $P_{\mathcal{H}_n}$ is $\epsilon/2$-close to
  $P_n$. Hence, $G$ is rejected with probability at least $1/2$.
\end{proof}

We state below the corresponding restatement of `if-part' of Theorem
\ref{thm:main-hered}. Its proof is identical to that of Theorem
\ref{thm:main-ff1},
where we replace
Proposition \ref{theorem:directed-out:1} with Proposition
\ref{thm:main-ff0.5}. 

\begin{thmm}
  \label{thm:main-ff1.5}
Let $\mathcal{H}$ be a non-redundant $r$-set of rooted
digraphs and for $n \in \N$ let $P^*_{\mathcal{H}_n} $ the hereditary property that contains all
$n$-vertex digraphs that are $\mathcal{H}$-free as induced subgraphs. Let
$P=\cup_n P_n$ be a digraph 
property in the $F(d)$-model for which  (a) $P_n \subseteq P^*_{\mathcal{H}_n} $ and (b)  
$P^*_{\mathcal{H}_n} $ is $\epsilon/2$-close to $P_n$. Then $P$ is $1$-sided
error $\epsilon$-strongly-testable in the $F(d)$-model. $\qed$
\end{thmm}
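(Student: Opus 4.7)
The plan is to reuse the structure of the proof of Theorem \ref{thm:main-ff1} almost verbatim, with Proposition \ref{thm:main-ff0.5} playing the role that Proposition \ref{theorem:directed-out:1} played in the monotone case. First I would invoke Proposition \ref{thm:main-ff0.5} with distance parameter $\delta = \epsilon/2$: since $\mathcal{H}$ is by hypothesis a non-redundant $r$-set of rooted digraphs, the proposition yields a $1$-sided error $\delta$-test $T$ for the hereditary property $P^*_{\mathcal{H}}$ in the $F(d)$-model.

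The tester for $P$ is then simply: on input $G$, run $T(\epsilon/2, n)$, accept if $T$ accepts, reject otherwise. The completeness side is immediate from condition (a): if $G \in P_n$ then $G \in P^*_{\mathcal{H}_n}$, so $T$ accepts with probability $1$, and hence so does our tester.

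For soundness, the argument boils down to a triangle-inequality step. Suppose $G$ is $\epsilon$-far from $P_n$; I claim $G$ is $\epsilon/2$-far from $P^*_{\mathcal{H}_n}$. Indeed, if some $G' \in P^*_{\mathcal{H}_n}$ satisfied $\ndist(G,G') \leq \epsilon dn/2$, then by condition (b) (applied in the direction ``every graph in $P^*_{\mathcal{H}_n}$ is $\epsilon/2$-close to $P_n$'') there would exist $G'' \in P_n$ with $\ndist(G',G'') \leq \epsilon dn/2$, yielding $\ndist(G,G'') \leq \epsilon dn$ and contradicting that $G$ is $\epsilon$-far from $P_n$. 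Hence $G$ is $\epsilon/2$-far from $P^*_{\mathcal{H}_n}$, and $T$ rejects $G$ with probability at least $1/2$.

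I do not anticipate any real obstacle: the proposition is applied as a black box and the only substantive ingredient beyond it is the triangle inequality on graph distance, which is built into Definition \ref{def:dist}. The only point worth double-checking is that the ``closeness between properties'' in condition (b) supplies exactly the one direction needed in the soundness reduction (namely, that each member of $P^*_{\mathcal{H}_n}$ has a near neighbour in $P_n$); this is precisely one half of the symmetric definition of $\epsilon/2$-closeness. $\qed$
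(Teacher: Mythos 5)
Your proposal is correct and follows exactly the route the paper takes: the paper explicitly says the proof of Theorem~\ref{thm:main-ff1.5} is identical to that of Theorem~\ref{thm:main-ff1} with Proposition~\ref{thm:main-ff0.5} substituted for Proposition~\ref{theorem:directed-out:1}, which is precisely your reduction. Your spelled-out triangle-inequality step for the soundness direction is the same fact the paper asserts in one line, just made explicit.
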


% \begin{thm}
%   \label{thm:main-ff1.5p}
% Let $\mathcal{H} = \{H_1,H_2,...,H_t\}$ be a set of non-redundant forbidden
% rooted digraphs and $P^*_{\mathcal{H}} $ be the
% hereditary property of being $\mathcal{H}$-free as induced
% subgraphs. Let  $P^*_n$ be the set of $n$-vertex $d$-bounded out-degree  digraphs 
%   in $P^*_{\mathcal{H}}$.   Let $P_n$ be a graph
% property of $n$-vertex $d$-bounded out-degree  digraphs. Assume that: (a) $P_n \subseteq P^*_{\mathcal{H}} $ and (b): 
% $P^*_n $ is $\epsilon/2$-close to $P_n$. Then $P_n$ is $1$-sided
% error $\epsilon$-strongly-testable. $\qed$
% \end{thm}
% \begin{proof}
%   Identical to this of Theorem \ref{thm:main-ff1}, using
%   Theorem \ref{thm:main-ff0.5}
% \end{proof}

\begin{remarkp}\label{rem:r18}
  $ $
  \begin{itemize}
  \item 
  Theorem \ref{thm:main-ff1} is stated in
  terms of a fixed family of forbidden digraphs $\mathcal{H}$. However,
  since the conditions (a) and (b) in the theorem are in terms of
  the slices $P_n$, namely  for $n$-vertex graphs,  the
  family $\mathcal{H}= \{\mathcal{H}_n \}$ may depend on $n$. The only global requirement
  of $\mathcal{H}_n$ is that it is an $r$-set, where $r$ is a function
  of $\epsilon$ only.

To make this clearer consider e.g., the  
property $P$ in the $F(d)$-model that contains every $n$-vertex graph $G$ if
$n$ is even,  and contains the digraphs that 
do not have a directed $4$-cycle otherwise.  $P$ is monotone but
it is not defined
by a single set of forbidden subgraphs. Rather,  for
every $n,$ $P_n$ is a slice of a property that is defined in this way. Hence, $P$ is 
$1$-sided error strongly-testable.

\item Note that the digraph property $P$ that is asserted to be
  strongly-testable in Theorem \ref{thm:main-ff1} is not necessarily 
  monotone. It is only required that it is close to a monotone
  property. In this sense, Theorem \ref{thm:main-ff1} is slightly
  stronger than the `if-part' of Theorem \ref{thm:main-mon}. An
  analogous 
  remark also holds for the property $P$ in Theorem
  \ref{thm:main-ff1.5}

  \item Note that in the characterization theorem, Theorem
    \ref{thm:main-hered}, we did not restrict the family $\mathcal{H}$
    to be non-redundant. This is not need as it is clearly the case
    that $P_{\mathcal{H}}^* = P_{\mathcal{H}'}^*$ for $\mathcal{H}'$
    that is obtained from $\mathcal{H}$ by removing all non-essential graphs. 
\end{itemize}
\end{remarkp}

\subsection{The `only-if' parts of Theorems \ref{thm:main-mon} and \ref{thm:main-hered}}\label{sec:mon-inverse}

Theorem \ref{thm:main-mon} requires 
that the corresponding family $\mathcal{H}$ contains members that are rooted. 
  We first show why this 
restriction is needed.  We say that $H \in \mathcal{H}$ is {\em
  minimal} if there is not $H' \in \mathcal{H} \setminus \{H\}$ for
which $H'$ is a subgraph of $H$.

\begin{proposition} \label{tester:directed-out:2} Let
  $\mathcal{H} = \{H_1, \ldots ,H_t\}$ be a set of forbidden digraphs
  and $P_{\mathcal{H}}$ be the corresponding monotone property of $n$-vertex graphs.
 If for some minimal
  $H \in \mathcal{H}$, $H$ is not rooted, then any 1-sided error
  $\frac{1}{d|H|}$-test for $P_{\mathcal{H}}$ makes $\Omega(\sqrt{n})$
  queries in the $F(d)$-model.
\end{proposition}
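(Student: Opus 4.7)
\noindent The plan is a Yao-style birthday-paradox lower bound. The core obstruction is that in the $F(d)$-model only the outgoing edges of queried (developed) vertices are revealed, so witnessing a copy of a non-rooted graph forces the tester to ``hit'' two unrelated source strongly-connected components --- an event with quadratic miss probability.

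\noindent I would begin with a structural observation. Let $H^{\ast}$ be a non-rooted component of $H$. Its condensation into strongly connected components (SCCs) must contain at least two source SCCs $S_1, S_2$: a unique source SCC would itself reach every other SCC of $H^{\ast}$, and any of its vertices would then be a root of $H^{\ast}$. Pick $a\in S_1$ and $b\in S_2$. Because $H^{\ast}$ is connected with $|H^{\ast}|\ge 2$, every source-SCC vertex has at least one out-edge inside $H^{\ast}$, and the set of vertices of $H^{\ast}$ that forward-reach $a$ is exactly $S_1$ (symmetrically $S_2$ for $b$): two disjoint sets of constant size.

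\noindent Next I would construct the hard distribution $\mathcal{D}_{\mathrm{no}}$: plant $k=\lfloor n/|H|\rfloor$ vertex-disjoint copies of $H$ via a uniformly random injection into $[n]$, padding the rest of the vertices as isolated. Let $A_i, B_i$ denote the $S_1$- and $S_2$-images inside the $i$-th copy. Because the $k$ planted copies are edge-disjoint, every graph in the support satisfies $\mathrm{dist}(G,P_{\{H\}})\ge k/(dn)\ge \tfrac{1}{d|H|}-o(1)$, and hence $\mathrm{dist}(G,P_{\mathcal{H}})\ge \tfrac{1}{d|H|}-o(1)$ since $P_{\mathcal{H}}\subseteq P_{\{H\}}$; such $G$ must be rejected by any $\tfrac{1}{d|H|}$-test. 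By Theorem~\ref{thm:canonical} I may restrict attention to canonical $(q,q)$-testers that pick $q$ centres uniformly at random and make a $q$-disc query at each. A $1$-sided error tester rejects only upon seeing an $H_j$-appearance as a subgraph inside its developed view; in the $F(d)$-model only developed vertices reveal their out-edges, so any $H$-witness requires $\phi(a)$ and $\phi(b)$ to be developed. Since the connected components of $G$ are exactly the planted copies (no inter-copy edges exist), the forward-ancestor set in $G$ of $\alpha_i=\phi(a)$ is precisely $A_i$, and symmetrically $B_i$ for $\beta_i$. The tester's $q$ centres must therefore intersect both $A_i$ and $B_i$ for the same $i$. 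For a uniformly random $q$-subset $S\subseteq[n]$,
\[
\Pr\bigl[S\cap A_i\neq\emptyset \text{ and } S\cap B_i\neq\emptyset\bigr] \;\le\; |S_1|\,|S_2|\,\frac{\binom{q}{2}}{\binom{n}{2}} \;=\; O(q^2/n^2),
\]
and union-bounding over $i\in[k]$ yields total probability $O(k q^2/n^2)=O(q^2/n)$, which is below $1/2$ unless $q=\Omega(\sqrt{n})$.

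\noindent The main obstacle is ruling out cheaper witnesses coming from other $H_j\in\mathcal{H}$. By minimality of $H$, no single-component $H_j\neq H$ can embed into any connected component of $G$: each such component is a component of $H$, and an embedding would produce a proper subgraph of $H$ lying in $\mathcal{H}$, contradicting minimality. A multi-component $H_j$ must distribute its components across planted copies, and here I would argue that at least one of its components inherits the non-rooted obstruction from $H^{\ast}$, so finding it is no easier than finding $H$ itself. This component-level inheritance is the most delicate step of the proof and is where the structural force of minimality has to be squeezed out --- with, if needed, a small refinement of the planting to avoid ``parasitic'' $H_j$-appearances across copies.
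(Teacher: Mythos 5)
Your proposal follows essentially the same route as the paper: Yao's principle over the uniform random labelling of a vertex-disjoint union of $\lfloor n/|H|\rfloor$ copies of $H$, and a birthday-type bound saying that witnessing the non-rooted component of $H$ forces the tester to develop vertices in two different ``forward-ancestor'' sets inside the same planted copy. The paper phrases the obstruction as ``no single queried vertex forward-reaches the whole component, so two queries in the same component with neither reaching the other are needed''; your source-SCC formulation is an equivalent restatement (a component is non-rooted iff its condensation has at least two source SCCs, and the forward-ancestors of a source-SCC vertex lie entirely inside that SCC). The quantitative estimate $O(q^2/n)$ matches the paper's $\binom{q}{2}|H|/n$.

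Where you genuinely add something is in explicitly raising the issue that the paper glides over. The paper simply asserts that ``the only forbidden subgraphs in each graph supported by $\mathcal{D}$ are disjoint copies of $H$'' and concludes that any 1-sided rejection must have witnessed a copy of $H$. This is not justified there, and it is not true in general when $\mathcal{H}$ contains a disconnected member $H_j$ all of whose components are proper subgraphs of a component of $H$: such an $H_j$ need not be a subgraph of $H$ (so $H$ remains minimal), yet an $H_j$-appearance can be assembled \emph{across} several planted copies. You correctly flag this ``multi-component $H_j$'' case as the delicate part.

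However, the repair you sketch --- that at least one component of $H_j$ inherits the non-rooted obstruction --- does not hold, and minimality does not force it. Concretely, take $d=2$, let $H$ be the three-vertex in-cherry with edges $(u,w),(v,w)$ (connected and not rooted), and let $H_j$ be two vertex-disjoint directed edges. Then $H_j$ is not a subgraph of $H$, so $H$ is minimal in $\mathcal{H}=\{H,H_j\}$; yet in $G_H$ any two planted copies yield two vertex-disjoint edges, both components of $H_j$ are single edges (rooted, found by a single query each), and a constant-query 1-sided tester rejects $G_H$ with constant probability while never falsely rejecting, since graphs in $P_{\mathcal{H}}$ have no two vertex-disjoint edges. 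No refinement of the planting inside disjoint copies of $H$ can eliminate such cross-copy appearances without also destroying $\epsilon$-farness. So the multi-component case is a real gap shared by your proposal and by the paper's argument, and the step ``the tester must find a copy of $H$'' is exactly the step that neither of you establishes; what the birthday argument legitimately proves, and what the proof of Theorem \ref{thm:main-ff2} actually relies on, is the weaker statement that any 1-sided test that must \emph{specifically} discover a copy of $H$ on $G_H$ needs $\Omega(\sqrt{n})$ queries.
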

\vspace{-0.3cm}

\begin{proof}%[of Theorem \ref{tester:directed-out:2}]
Assume that $H \in \mathcal{H}$ is minimal and not
        rooted.  Set $\epsilon = \frac{1}{d|H|}$.
	An $\epsilon$-test for $P_{\mathcal{H}}$ that is $1$-sided
        error must discover some $H \in \mathcal{H}$ on any run that
        rejects. Hence it is enough to prove that any test that
        discovers a $\mathcal{H}$-appearance and makes $o(\sqrt{n})$
        queries must have a success probability that is less than
        $1/2$ on some $n$-vertex graphs that are $\epsilon$-far from $P_{\mathcal{H}}$.
	
	We use Yao's principle to prove the lower bound. Namely, we construct a
	probability distribution $\mathcal{D}$ that is supported on
	 $n$-vertex digraphs in $F(d)$ that are
	$\epsilon$-far from $P_{\mathcal{H}}$.
	We then 
	show  that  any {\em deterministic} algorithm making
	$q <\sqrt{\frac{n}{3|H|}}$  queries fails to find a copy of $H \in
        \mathcal{H}$ for more than $1/2$ of the inputs weighted according to $\mathcal{D}$.
	
	Let
	$G= (V,E)$ be an unlabelled directed graph on $n$ vertices that is a union of
	$\frac{n}{|H|}$ vertex disjoint copies\footnote{If $|H|$ does not divide $n$, we  augment
	$G$ with at most $|H|-1$ isolated vertices to get an
        $n$-vertex graph.}
	of  $H$ .  The
	distribution $\mathcal{D}$  is formed by labelling $V$ according
	to a random permutation uniformly chosen from the set of all
	permutation on $n$ elements.  Obviously $\mathcal{D}$ is supported on
	$\epsilon$-far graphs. Moreover, the only forbidden subgraphs 
	in each graph supported by $\mathcal{D}$ are disjoint copies of
	$H$. Hence, any deterministic $1$-sided error test with
        respect to $\mathcal{D}$
	ends correctly only when it finds a copy of
	$H$.
	
		Let $A$ be any deterministic algorithm making $q$ queries,
	adaptively. Every query made by $A$ is  of the form $v \in [n]$, where $v$
	is either one
	of the vertices that occurred as answers for some prior
	queries,  or $v$ is a new vertex that was not yet seen.  
      We will augment the algorithm so that on query $v$, the algorithm receives the entire
	subgraph $H_v$ containing all vertices {\em reachable from} $v$ in the copy
	of $H$ where $v$ lies. Note that this gives more information to the
	algorithm in the form of possibly $|H|-2$ additional vertices
        but with at least one vertex $w$ in the $H$-appearance of $v$
        that is excluded  by the assumption that
        $H$ is not rooted.
 	Hence, if the augmented algorithm does not discover a copy
	of $H$ neither does $A$. Note further that the additional information makes
	the queries of the first type -- namely, queries to vertices that are
	the answers to prior queries redundant. 
	
	Hence the augmented algorithm will end correctly after making
        $q$ queries $v_1, \ldots v_q$ only if it for some distinct
        $i,j \in \{1, \ldots ,q\}$, the vertices $v_i$ and $v_j$ belong to the same
        component of $G$ but none is reachable from the other.  This
        probability is clearly bounded by
        ${q \choose 2} \cdot \frac{|H|}{n} < 1/2$, for our choice of
        $q$ and $n$ large enough.
	\end{proof}

\subsubsection{The `only-if' part of Theorem
\ref{thm:main-mon}}        
Proving the `only-if' part of Theorem
\ref{thm:main-mon}  naturally brings us back to
configurations in digraphs as this is what a tester discovers in its
run. This motivates the following definition analogous to Definition
\ref{def:pH}.

% Given a $1$-sided error $\epsilon$-tester of query complexity $q$ for a digraph property $P$,  Theorem
% \ref{thm:canonical} implies that there is a $(q,q)$-canonical-tester
% for $P$. This means that on each run
% of the tester  a configuration of size at most $q$ is discovered, and
% based on it the tester either reject or accept. Since the tester is
% $1$-sided-error, it has to accept with probability $1$ every
% $G \in P$. Hence,   the important configurations
% are these on which the test rejects. This finite set of
% configurations, $\mathcal{C}$, define `witnesses for
% refutation'. Namely, once $G$ has a $\mathcal{C}$-appearance it
% implies that $G \notin P$. 

% We conclude that if $P$ is $1$-sided error strongly testable, then it is
% a subset of the property of being $\mathcal{C}$-free, for some finite
% set of configurations $\mathcal{C}$. In what follows, we will argue about the
% structure of such forbidden set of configurations that allows
% $1$-sided error testability, and further, that the property $P$ that
% is tested is monotone, or hereditary.

\begin{definitionp}\label{def:pHp}
	For a set of
        configurations $\mathcal{C}$,%=\{C_1,C_2,...,C_t\}$,
        the property $\mathcal{P_{\mathcal{C}}}$
        contains all graphs that are $C$-free for  every  $C \in
        \mathcal{C}$.
      \end{definitionp}

      We comment that for an unrestricted set of forbidden configurations
      $\mathcal{C}$, $P_{\mathcal{C}}$ may happen to be hereditary,
      monotone, or neither (in the $FB(d)$-model, $F(d)$-model and the
      undirected bounded-degree graph model).  E.g.,  the property of
      not having a vertex of out-degree
      exactly $2$ in the $F(3)$-model is a property that is defined by one forbidden
      configuration that is the directed $2$-star, where the center is
      the only developed vertex.
      However, the property is not  monotone nor
      hereditary (and  happens to be strongly-testable).

The following is a restatement of the `only if' part of Theorem
\ref{thm:main-mon} followed by its proof. Note that configurations do
not appear in the statement, but will appear in the proof.

\begin{thmp} \label{thm:main-ff2}
	Assume that the monotone property $P = \cup_{n \in \N} P_n$  is $1$-sided
        error strongly-testable in the $F(d)$-model. Then for any $\epsilon >0$
        there is a $r = r(\epsilon)$ such that for any $n$ there is
        a $r(\epsilon)$-set of rooted digraphs $\mathcal{H}_n$
        such that the corresponding property $P_{\mathcal{H}_n}$
        that contains the $n$-vertex digraphs that are
        ${\mathcal{H}_n}$-free, satisfies the following two conditions:

        \noindent
(a) $P_n \subseteq P_{\mathcal{H}_n} $\\
(b) $P_{\mathcal{H}_n} $ is $\epsilon/2$-close to $P_n$.
\end{thmp}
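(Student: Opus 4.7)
The plan is to apply Theorem \ref{thm:canonical} to canonicalise the tester, and then to observe that the lower-bound construction of Proposition \ref{tester:directed-out:2} applies to any strongly-testable monotone property, forcing every minimal forbidden subgraph to be rooted. Fix $\epsilon > 0$ and let $T$ be a $1$-sided error $(\epsilon/2)$-test for $P$ with query complexity $Q = Q(\epsilon/2)$, independent of $n$; by Theorem \ref{thm:canonical}, we may take $T$ to be a $(Q,Q)$-canonical tester, whose decision depends only on the configuration (of size at most $Qd^Q$) uncovered by $Q$ uniformly random $Q$-disc queries. Let $\mathcal{C}_n$ denote the set of configurations on which $T(\epsilon/2, n)$ rejects.

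Since $P$ is monotone, $P = P_{\mathcal{F}_{\min}}$ where $\mathcal{F}_{\min}$ is the family of minimal forbidden subgraphs of $P$ (minimal under edge deletion). I claim every $F \in \mathcal{F}_{\min}$ is rooted: for otherwise, the $n$-vertex digraph consisting of $\lfloor n/|V(F)| \rfloor$ vertex-disjoint copies of $F$ is $\Theta(1/|V(F)|)$-far from $P$, and minimality of $F$ guarantees that its only minimal forbidden subgraphs are the individual $F$-copies; so the proof of Proposition \ref{tester:directed-out:2} goes through verbatim and yields an $\Omega(\sqrt{n})$ lower bound, contradicting strong testability. Now set $r(\epsilon) := Qd^Q$ and $\mathcal{H}_n := \{F \in \mathcal{F}_{\min} : |V(F)| \leq r(\epsilon)\}$; this is a rooted $r(\epsilon)$-set, and condition (a) is immediate because $\mathcal{H}_n \subseteq \mathcal{F}_{\min}$.

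For (b), I argue by contradiction. Suppose $G \in P_{\mathcal{H}_n}$ is $(\epsilon/2)$-far from $P_n$, so some choice of queries of $T$ produces a rejecting configuration $\widetilde{C} = (\widetilde{H}, L) \in \mathcal{C}_n$; by Fact \ref{fact:f3} the underlying digraph $\widetilde{H}$, of size at most $r(\epsilon)$, is a subgraph of $G$. The crux is that $\widetilde{H} \notin P$. Let $G_0$ be the $n$-vertex digraph obtained by appending $n - |V(\widetilde{H})|$ isolated vertices to $\widetilde{H}$. Under the standard convention (cf.\ the comment preceding Proposition \ref{theorem:directed-out:1}) that no minimal forbidden subgraph of a nontrivial monotone $P$ is a set of isolated vertices, $G_0$ and $\widetilde{H}$ carry exactly the same forbidden subgraphs, so $G_0 \in P$ iff $\widetilde{H} \in P$. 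Since $\widetilde{H}$ has its frontier vertices at out-degree $0$ by the definition of a configuration, $G_0$ has a $\widetilde{C}$-appearance via the identity embedding on $V(\widetilde{H})$; hence with probability at least $n^{-Q} > 0$ the canonical tester on $G_0$ queries the $Q$ vertices corresponding to the developed query vertices of $\widetilde{C}$, sees $\widetilde{C}$, and rejects. If $G_0 \in P$, this contradicts the $1$-sided error of $T$; so $\widetilde{H} \notin P$, and therefore some $F \in \mathcal{F}_{\min}$ with $|V(F)| \leq |V(\widetilde{H})| \leq r(\epsilon)$ is a subgraph of $\widetilde{H}$. Then $F \in \mathcal{H}_n$ and $F \subseteq \widetilde{H} \subseteq G$ contradicts $G \in P_{\mathcal{H}_n}$.

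The main obstacle is the step $\widetilde{H} \notin P$: the rejecting configuration $\widetilde{C}$ witnesses a local pattern in $G$, but its underlying digraph $\widetilde{H}$ does not \emph{a priori} correspond to a forbidden subgraph, because the developed/frontier labelling encodes strictly more information (about out-neighbourhoods in $G$) than $\widetilde{H}$ alone. The padding-by-isolated-vertices argument converts this configuration-level witness into a subgraph-level one, and implicitly relies on monotonicity (via the edge-subgraph characterisation of $P$) to transfer $P$-membership between $\widetilde{H}$ and the padded graph $G_0$. This is also precisely the place at which the analogous step in the proof of Theorem \ref{thm:main-hered} will require a distinct treatment, as induced subgraphs replace subgraphs.
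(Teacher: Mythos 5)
Your overall strategy diverges from the paper's in a substantive way, and the divergence is where the gap lies. The paper derives $\mathcal{H}_n$ directly from the tester's rejecting configurations (the family $\mathcal{H}'(\delta)$ of underlying graphs of configurations on which the canonical $\delta$-test rejects), then carefully augments and prunes this family across several distance parameters $\delta \le \epsilon/2$; rootedness is ultimately established by relating the behaviour of $T(n,\delta)$ on $G_H$ for a \emph{small} $\delta$ to membership of $H$ in that tester-derived family. You instead take $\mathcal{H}_n$ to be the small members of $\mathcal{F}_{\min}$, the intrinsic minimal forbidden subgraphs of $P$. This simplification is where your rootedness argument fails.

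Specifically, the claim that ``minimality of $F$ guarantees that its only minimal forbidden subgraphs are the individual $F$-copies'' is not correct. The graph $G_F$, a disjoint union of many copies of $F$, can contain a forbidden subgraph $F'\in\mathcal{F}_{\min}$ that spans several copies: $F'$ is then a disjoint union of \emph{proper} subgraphs of $F$, one in each of several copies, and such an $F'$ need not be a subgraph of a single $F$, so minimality of $F$ places no restriction on it. If such an $F'$ is rooted, a $1$-sided error test for $P$ can reject $G_F$ by discovering $F'$ rather than $F$, and the $\Omega(\sqrt{n})$ lower bound of Proposition \ref{tester:directed-out:2} simply does not apply to a test \emph{for $P$}: that proposition constrains algorithms that must find a copy of $H$, not algorithms that may reject for any other reason. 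So ``the proof of Proposition \ref{tester:directed-out:2} goes through verbatim'' is not justified, and ``every $F\in\mathcal{F}_{\min}$ is rooted'' is left unproven. Your padding-by-isolated-vertices argument for (b) is a nice and correct step (you are right that $G_0$ carries a $\widetilde{C}$-appearance via the identity embedding precisely because frontier vertices have out-degree $0$ and the added vertices are isolated), but it only closes the proof if the rootedness of your $\mathcal{H}_n$ is already secured, which it is not. A secondary issue: for a monotone $P=\cup_n P_n$, the family of minimal forbidden subgraphs may legitimately depend on $n$ (cf.\ Remark \ref{rem:34}), so the unqualified reference to ``the'' family $\mathcal{F}_{\min}$ with $P = P_{\mathcal{F}_{\min}}$ glosses over the $n$-dependence that the theorem statement explicitly allows. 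The paper sidesteps both issues by never invoking $\mathcal{F}_{\min}$: its $\mathcal{H}_n$ is whatever the canonical tester actually rejects on, and the rootedness argument there is intertwined with which configurations the test at scale $\min\{\epsilon/2,\,1/(d|H|)\}$ must reject on $G_H$.
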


\begin{proof}
Since $P$ is strongly-testable, Theorem
\ref{thm:canonical} implies that for any $\delta$ there is a $(q,q)$-canonical
$1$-sided error $\delta$-test, $T(n,\delta)$
for $P_n$, where $q = q(\delta)$ is independent of $n$. By
definition $T(n,\delta)$ picks $q$ vertices uniformly at random, makes the
$q$-disc queries around each, and accepts or reject based only on the
 configuration of size at most $r(\delta)=q\cdot d^{q+1}$ that it
sees. 
Let $$\mathcal{C}_n(\delta)=\{C=(H,L) ~\mid ~~ \exists G \text{ with } n
\text{ vertices which is rejected by } T(n,\delta) \text{ upon seeing the configuration }
              C\}.$$
              
	$\mathcal{C}_n(\delta)$ is a well defined set of $r(\delta)$-size configurations as the
        decision of $T(n,\delta)$ depends only on the configuration it sees.
        Let $\mathcal{H}'(\delta) = \mathcal{H'}_n(\delta)=\{H ~ | C=(H,L) \in \mathcal{C}_n
        \}$. Obviously $\mathcal{H'}$ is an $r(\delta)$-set.
      For fixed $\epsilon$,   $\mathcal{H'}(\epsilon/2)$ will nearly be our required set as
        asserted in the theorem. We will show in what follows that the
        conditions (a) and (b) of the theorem hold for
        $\mathcal{H'}(\epsilon/2)$. We will then need to change it
        slightly so that every member of it is rooted while keeping
        (a) and (b). 

	\begin{claimp}\label{cl:fb-mon1}
	  For every $\delta, $	$P_n \subseteq P_{\mathcal{H'}(\delta)}$.
	\end{claimp}

	\begin{proof}
          Assume for the contrary that $G \in P_n$ but it is not
          $\mathcal{H'}(\delta)$-free. Then,  for some $V' \subseteq
          V, |V'|=|V(H)|$, $G[V']$ contains a subgraph $H \in \mathcal{H'}(\delta)$.
  Namely, there is a $1-1$ map between $V'$
          and $V(H)$ showing the isomorphism.  For simplicity we
          identify in what follows $V'$ with $V(H)$.

   We claim
   that $G$, or a subgraph of it that is obtained by removing some
   edges, has a $C$-appearance for a configuration $C = (H,L) \in
   \mathcal{C}_n$ (there is such $C=(H,L) \in \mathcal{C}_n$ by the
   definition of $\mathcal{H'}(\delta)$).  

   Indeed, we first remove the set of edges from $G$ so that $G[V']$
   is isomorphic to $H$ as an {\em induced} subgraph, resulting in a
   graph $G_1$. Now that $G_1[V']$ is isomorphic to $H$, what would
   prevent $G_1$ to have $C$-appearance with the label $L$ on the
   vertices $V'$?  The label $L$ restrict the out-degree of some
   vertices; frontier vertices must have zero degree, and developed
   vertices should have degree in $G_1$ exactly as they do in $H$ (see
   Definition \ref{def:C-free}). But, since $G_1[V']$ is isomorphic to
   $H$, removing all edges in $G_1$ that go out of $V'$ results in
   $G'$ for which the restrictions that $L$ imposes are met. So, $G'$
   has a $C$-appearance.
    
 By monotonicity
    of $P_n,$ $G' \in P_n$. Hence, (by the definition of $\mathcal{C}$)  there is positive
        probability that $T(n,\delta)$ will reject $G'$ contradicting the
        assumption that $T(n,\delta)$ is $1$-sided error test for
        $P_n$. 
\end{proof}

We note that  we crucially used here the fact  that $P$ is monotone.

        \begin{claimp}\label{cl:fb-mon2}
        For every $\delta,$  $P_{\mathcal{H'}(\delta)}$ is $\delta$-close to $P_n$.
        \end{claimp}

        \begin{proof}
          Let $G \in P_{\mathcal{H'}(\delta)}$. Then $T(n,\delta)$ accepts
          $G$ with probability $1$ by the definition of
          $\mathcal{H'}(\delta)$. Hence $G$ must be $\delta$-close to
          $P_n$ or else   $T(n,\delta)$ would have to reject it with probability
          at least $1/2$ (being an $\delta$-test for $P_n$). The other
          direction is trivial since $P_n \subseteq P_{\mathcal{H'}(\delta)}$.
        \end{proof}

        Finally, for fixed $\epsilon$ we could choose
        $\mathcal{H'}(\epsilon/2)$ to be the set guaranteed in the
        theorem,  since  by Claims \ref{cl:fb-mon1} and \ref{cl:fb-mon2} the
        conditions (a) and (b) hold for $\mathcal{H'}(\epsilon/2)$.
        However,  the theorem requires also that every $H \in
        {\mathcal{H}_n}$ is 
        rooted,  which is not guaranteed for the set $\mathcal{H'}(\epsilon/2)$.  We show in what follows that 
        $\mathcal{H'}(\epsilon/2)$ can be changed so that conditions (a) and
        $(b)$ of the theorem still hold and so that every member
        of it is rooted. 

        Let $\mathcal{\tilde{H}} = \mathcal{H'}(\epsilon/2) ~ \cup ~ \{H \in
        \mathcal{H'}(\delta)~| ~ \delta < \epsilon/2 ~ {and} ~
        |H| \leq r(\epsilon/2) \}$.
Note that $\mathcal{\tilde{H}}$ is an $r(\epsilon/2)$-set for $r()$ as
defined above. In addition, since Claim \ref{cl:fb-mon1} is true for every $\delta$, it
 follows that $P_n \subseteq
 P_{\mathcal{\tilde{H}}}$. Further, the fact that  $\mathcal{H'}(\epsilon/2)
 \subseteq \mathcal{\tilde{H}}$ implies that $P_{\mathcal{\tilde{H}}}
 \subseteq P_{\mathcal{H'}(\epsilon/2)}$, and hence by 
 Claim \ref{cl:fb-mon2}
it holds  that $P_{\mathcal{\tilde{H}}}$ is $\epsilon/2$-close to $P_n$.%  and
% we only added forbidden subgraphs to it in order to form
% $\mathcal{\tilde{H}}$ making $P_{\mathcal{\tilde{H}}} \subseteq
% P_{\mathcal{{H'}(\epsilon/2)}}$. Therefore it follows that $P_n$ is $\epsilon/2$-close
% $P_{\mathcal{\tilde{H}}}$.

It could be that there are two distinct digraphs
$H,H' \in \mathcal{\tilde{H}}$, where $H$ is a subgraph of
$H'$. For every such pair $(H,H')$ we remove $H'$ from
$\mathcal{\tilde{H}}$ so to result in the set
$\mathcal{H}=\mathcal{H}_n$ for which no member is a subgraph of
another.  This is our final set as required for the theorem. Indeed
removing $H'$ when such a pair $(H,H')$ exists does not change $P_{\mathcal{\tilde{H}}}$ at all,
and hence conditions (a) and (b) hold for $\mathcal{H}$.

  We claim that each $H \in \mathcal{H}$ is
  rooted. The argument for this also exhibits 
 the advantage of $\mathcal{H}$ in
comparison with the initial $\mathcal{H'}(\epsilon/2)$.  
 Assume for the contrary that 
        $H \in \mathcal{H}$ is not rooted, and consider the graph $G_H$
        that is composed by $n/|H|$ vertex disjoint copies of $H$.
        Proposition \ref{tester:directed-out:2} asserts that any $1$-sided
        error algorithm that needs to discover a copy of
        $H$ with constant probability makes $\Omega(\sqrt{n})$
        queries. Now, this is not a contradiction to the fact that $H$
        might be a member of $\mathcal{H'}(\epsilon/2)$ if $\epsilon/2
        > \frac{1}{d|H|},$ since the test
        $T(n, \epsilon/2)$ does not need to reject $G_H$ in this case.  However,
        this can not happen if $H$ is a member of $\mathcal{H}$:
        Indeed, since $G_H$ is 
         $\frac{1}{d|H|}$-far from $P_n$ the test  $T=T(n,\delta)$ 
          rejects $G_H$ for $\delta = \min\{\epsilon/2,
         \frac{1}{d|H|}\}$ with probability at least $1/2$. By the
         construction of $G_H$ this can be done only  by discovering a subgraph
         isomorphic to $H$ or by discovering a subgraph $H'$ of  $H$. 
The later case is ruled out since the existence of such $H'$ implies
that $H' \in \mathcal{H}$ contradicting the fact that $H \in
         \mathcal{H}$.
  The former case cannot happen as we argued that
         to discover $H$ with constant success probability takes
         $\Omega(\sqrt{n})$ queries.
 %%%  new attempt        
\ignore{%old attempt
  Recall that we have fixed $\epsilon$, and accordingly we have
  defined the set $\mathcal{H} = \mathcal{H}_n$. We now consider again
  the sets

        We first remove from $\mathcal{H}_n$ any non-minimal
        member. This does not change $P_{\mathcal{H}_n}$ at all, so
        (a) and (b) still hold by Claim \ref{cl:fb-mon1} and Claim
        \ref{cl:fb-mon2} respectively.  In what follows we refer to
        $\mathcal{H}=\mathcal{H}_n$ as the family in which every member is minimal.
        
        Assume for the contrary that 
        $H \in \mathcal{H}$ is not rooted, and consider the graph $G_H$
        that is composed by $n/|H|$ vertex disjoint copies of $H$.
        Proposition \ref{tester:directed-out:2} asserts that any $1$-sided
        error algorithm that needs to discover a copy of
        $H$ with constant probability makes $\Omega(\sqrt{n})$
        queries. This is not yet a contradiction since $G_H$ is not
         $\epsilon/2$-far from $P$ if $\epsilon >
        2d/|H|$. Hence, the test  $T=T(n,\epsilon/2)$ guaranteed for
        $P$ 
        does not have to reject $G_H$. We end the proof as follows.

       We will replace $\mathcal{H}$ with $\mathcal{\tilde{H}}$, by
       replacing $H$ with a smaller rooted subgraph $H'$ of $H$,
       and so that $P_n \subseteq P_{\mathcal{\tilde{H}}}$ still holds.  In turn, it
       will immediately follow that $P_{\mathcal{\tilde{H}}}$ is
       $\epsilon/2$-close to $P_n$, since  $P_{\mathcal{\tilde{H}}}
       \subseteq P_{\mathcal{H}}$ on account of $H'$ being a
        subgraph of $H$.

       As a result, we have removed one non-rooted subgraph from
       $\mathcal{H}$ while keeping the conditions (a) and (b) of the
       theorem. Then after a finite number of iterations the final set
       $\mathcal{\tilde{H}}$ will contain only rooted members. 

       To perform the basic operation 
Consider  $\epsilon' = \frac{1}{d|H|}$,  the corresponding
canonical test
$T'=T(n,\epsilon')$, the corresponding set of configurations, $\mathcal{C}'$, on
which $T'$ rejects,  and the corresponding forbidden
digraphs $\mathcal{H}' = \{H'~|~ C'=(H',L') \in \mathcal{C'}\}$.

Since $G_H$ is $\epsilon'$-far from $P$, it should be
rejected by $T'$ with constant probability. Hence, either $H \in
\mathcal{H}'$ or a subgraph $H'$ of $H$ is
in $\mathcal{H}'$. % \footnote{
  % $\mathcal{H}'$ could contain as a member several disjoint copies
  % of $H$. The argument in this case is identical to that of the former
  % case.}. 
 The former case is ruled
out by the proof of Proposition \ref{tester:directed-out:2}. For the
later case, we replace $H$ by $H'$ in $\mathcal{H}$.  Namely we get
$\mathcal{H}_1 = \mathcal{H} \setminus \{H\} \cup \{H'\}$.

Evidently
$P_{{\mathcal{H}_1}} \subseteq P_{\mathcal{H}}$ (as $H'$ is a subgraph
of $H$), and further $P_n \subseteq P_{\mathcal{H}_1}$ since by
assumption, a graph containing $H'$ is not in $P$. We note that
$\mathcal{H}_1$ may not yet meet our goal, as $H'$ might also be
non-rooted. However, $H'$ is strictly smaller than $H$ (in terms of
number of edges), and hence we replace the whole argument for $H'$
replacing $H$. After a finite number of iterations, we will end
with rooted $H''$ that is a subgraph of $H$, and for which the
corresponding $\mathcal{\tilde{H}} =\mathcal{\tilde{H}}_n = \mathcal{H} \setminus \{H\} \cup
\{H''\}$ is such that  $P_n \subseteq P_{\mathcal{\tilde{H}}_n}$ as
explained above.

}%ignore
\end{proof}

\subsubsection{The `only-if' part of Theorem
\ref{thm:main-hered}}        

The following is a  restatement of  the `only if' part of Theorem
\ref{thm:main-hered}.

\begin{thmp} \label{thm:main-ff2.5}
  Assume that the hereditary
  property $P = \cup_{n \in \N} P_n$ is $1$-sided error
  strongly-testable in the $F(d)$-model. Then, for any $\epsilon >0$
  there is a  $r(\epsilon)$-set of rooted digraphs
  $\mathcal{H} = \mathcal{H}_\epsilon$ and $n^*_{\epsilon} \in \N$
  such that for
every $n > n^*_{\epsilon}$ the property $P^*_{\mathcal{H}_n}$ that contains the 
$n$-vertex  digraphs that are $\mathcal{H}$-free satisfies the
following two conditions:

\noindent
(a) $P_n \subseteq P^*_{\mathcal{H}_n} $\\
(b) $P^*_{\mathcal{H}_n} $ is $\epsilon/2$-close to $P_n$.
\end{thmp}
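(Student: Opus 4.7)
My plan is to define $\mathcal{H}_\epsilon$ directly as the set of forbidden induced subgraphs of $P$ of bounded size (rather than passing through the tester's rejected configurations, as is done in the monotone proof of Theorem~\ref{thm:main-ff2}), and to establish conditions (a) and (b) using what I will call the \emph{extension lemma}, which replaces the edge-deletion argument used in the monotone case.

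I will begin by applying Theorem~\ref{thm:canonical} to fix a $(q,q)$-canonical $1$-sided-error $\epsilon/2$-tester $T(n,\epsilon/2)$ for $P_n$, with $q = q(\epsilon/2)$ independent of $n$; set $r = r(\epsilon/2) := q\cdot d^{q+1}$. I will then define
\[
\mathcal{H}_\epsilon := \{H : H\notin P \text{ and } |V(H)| \le r\},
\]
which is an $r$-set, finite up to isomorphism since the $F(d)$-model admits only finitely many $d$-bounded-out-degree digraphs on at most $r$ vertices. Condition (a), $P_n\subseteq P^*_{\mathcal{H}_n}$, will be immediate by hereditariness: if some $G\in P_n$ contained $H\in\mathcal{H}_\epsilon$ as an induced subgraph, then $H = G[V']\in P$, contradicting $H\notin P$.

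For condition (b), I will show that $P^*_{\mathcal{H}_n}$ is $\epsilon/2$-close to $P_n$ for $n \ge n^*_\epsilon$. Assume for contradiction that some $G \in P^*_{\mathcal{H}_n}$ is $\epsilon/2$-far from $P_n$; then $T(n,\epsilon/2)$ rejects $G$ with probability at least $1/2$, so on some rejecting run $T$ sees a configuration $C = (H_C, L_C)$ and $G$ has a $C$-appearance via some $\phi$. I will examine the induced subgraph $G[\phi(W)]$, of size $\le r$: since $G$ is $\mathcal{H}_\epsilon$-free and any forbidden induced subgraph of $G[\phi(W)]$ would itself have size $\le r$ and therefore belong to $\mathcal{H}_\epsilon$, it follows that $G[\phi(W)]$ contains no forbidden induced subgraph, and hence $G[\phi(W)] \in P$. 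Moreover, the restriction of $\phi$ to $V(H_C)\to\phi(W)$ witnesses a $C$-appearance within $G[\phi(W)]$ itself. I will then invoke the extension lemma to extend this small $P$-graph to an $n$-vertex graph $G^* \in P_n$ that retains the $C$-appearance; but then $T(n,\epsilon/2)$ rejects $G^*$ on the run seeing $C$, contradicting the $1$-sided-error guarantee on $G^*\in P_n$.

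I will finish with the rooted cleanup — replacing any non-rooted $H \in \mathcal{H}_\epsilon$ so that every surviving member is rooted while preserving (a) and (b) — via a hereditary analog of Proposition~\ref{tester:directed-out:2} (whose lower-bound distribution of disjoint $H$-copies applies equally to detecting $H$ as an induced subgraph in the $F(d)$-model). The main obstacle will be the extension lemma itself: in the monotone case, edge deletion provides the required $P$-preserving modification uniformly in $n$, but in the hereditary setting such deletions do not preserve $P$, and one must instead produce an $n$-vertex graph in $P_n$ that extends a prescribed small $P$-graph while preserving the closed configuration structure required for a $C$-appearance. Constructing such an extension — by leveraging the structural richness of $P_n$ guaranteed by strong testability — succeeds only for $n \ge n^*_\epsilon$, which is precisely the source of the threshold in the theorem statement.
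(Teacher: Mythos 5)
Your definition of $\mathcal{H}_\epsilon$ as $\{H : H\notin P,\ |V(H)| \le r\}$ is the wrong set, and this makes the ``extension lemma'' you rely on not merely hard but \emph{false}, so the argument for condition (b) collapses. The paper instead defines $\mathcal{H}_\epsilon = \cup_{C \in \mathcal{C}^*(\delta)} cl(C)$ from the configurations the canonical tester rejects, which makes Claim~(b) immediate (a $\mathcal{H}_\epsilon$-free graph has no rejecting configuration, so $T(\epsilon/2,n)$ accepts it with probability $1$ and hence it is $\epsilon/2$-close to $P_n$) without any extension step.

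Here is the concrete failure. Take the paper's own running example: in the $F(1)$-model let $P_n$ consist of the $n$-vertex digraphs free of every directed cycle $C_k$ with $k \le \sqrt{n}$ as an induced subgraph. Then $C_5 \in P_5$ (since $5 > \sqrt{5}$), so $C_5 \notin \mathcal{H}_\epsilon$ under your definition. But a $100$-vertex graph consisting of $20$ vertex-disjoint copies of $C_5$ is $\mathcal{H}_\epsilon$-free, is far from $P_{100}$ for small $\epsilon$, and cannot be made close to $P_{100}$. So your $P^*_{\mathcal{H}_n}$ is \emph{not} $\epsilon/2$-close to $P_n$: condition~(b) is violated outright. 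Equivalently, your extension lemma asks to extend $G_0 = C_5 \in P_5$ to some $G^* \in P_{100}$ containing it as an induced subgraph with the same configuration — impossible, since $P_{100}$ forbids all $C_k$ with $k\le 10$ as induced subgraphs. The non-existence of $G^*$ is not an extra fact you can lean on to close the contradiction; it is exactly what makes your $\mathcal{H}_\epsilon$ too small. Hereditariness runs only downward (deleting vertices stays inside $P$); there is no structural guarantee, from strong testability or otherwise, that a small graph in $P_m$ ``extends'' to $P_n$ for large $n$, and the threshold $n^*_\epsilon$ does nothing to rescue this. You need $\mathcal{H}_\epsilon$ to be defined from what the tester rejects (via $cl(C)$ applied to the finite union $\mathcal{C}^*(\delta)$ of rejecting configurations across all $n$), not from membership in $P$; those two sets genuinely differ, and the difference is the crux of the hereditary case.

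Condition~(a) by hereditariness and the rooted cleanup are fine as sketched, but they are the easy parts; the substance of the proof is (b), and that is where the approach does not work as written.
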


\begin{proof}%[of Theorem \ref{thm:main-ff2.5}]
	Assume that $P$ is hereditary and is $1$-sided error
        strongly-testable. Theorem \ref{thm:canonical} implies
        that for any $\delta\in (0,1)$ and $n \in \N$, there is a collection of canonical-tests $\cup_{\delta \in (0,1], n\in \N} T(\delta, n), $
	where
	$T(\delta,n)$ is a $1$-sided error $(q,q)$-canonical $\delta$-test for
        $P_n$,  making at most	$q = q(\delta)$ $q$-disc queries. 

For every $\delta > 0, ~ n \in \N$, let $\mathcal{C} = \mathcal{C}(\delta,n)$ be the set of
	forbidden configurations defined by $T(\delta,n)$, namely these
	configurations on which $T(\delta,n)$ reject with some positive
	probability.

        	\begin{claimp}\label{cl:121}
  For every $\delta > 0$  and $n' > n \geq q,~$ if  $G_{n'} \in
  P_{n'}$ then $G_{n'}$ is $\mathcal{C}(\delta,n)$-free.
	\end{claimp}
%	Note the roles of $n,n'$ in the claim above.
	\begin{proof}
		Suppose that $G'=G_{n'} \in P_{n'}$ for $n ' > n$. If
                $G$ has a $C$-appearance for $C \in
		\mathcal{C}(\delta,n)$,  then fixing such a $C$-appearance, 
                and 
                deleting $n'-n$ vertices without
		touching the $C$-appearance in $G'$, results in 
		 a graph $G'$ on $n$ vertices that is in $P$ (as $P$ is
		hereditary). However, $G'$ has a  $C$-appearance
                causing  $T(\delta,n)$ to reject it  with positive
                probability. This contradicts the fact that 
		$T(\delta,n)$ is $1$-sided error for $P_n$.
	\end{proof}
	
Since for every fixed
	$\delta$, all tests $T(\delta,n)$ examine only configurations of
	size at most $q$ (that may depend on $\delta$  but not on
        $n$), $\mathcal{C}^*(\delta) = \cup_{n \in \N} \mathcal{C}(\delta,
        n)$ is finite. Namely, there is  some $n(\delta) \in \N$ such
        that $\mathcal{C}^*(\delta) = \cup_{n \leq n(\delta)} \mathcal{C}(\delta,
        n)$.   We conclude, by Claim \ref{cl:121}, that for
	every $n > n(\delta),$ if $G \in P_n$ then  $G$ is
        $\mathcal{C}^*(\delta)$-free. % The ``only if'' part is by
        % Claim \ref{cl:121} above, and the ``if'' part is due to the
        % fact that $T(\delta,n)$ is $1$-sided error $\delta$-test and $n > n(\delta)$.

We now proceed with the proof of the Theorem: Fix $\epsilon$ and let
$\delta = \epsilon/2$. 
                Set
        $n^*_\epsilon = n(\delta) +dr +1$, where $r$ is the maximum
        size of a configuration in $\mathcal{C}^*(\delta)$. 
		At this point we have concluded that for every $n \geq n(\delta)$ the test
	$T(\delta,n)$ defines  the same family of forbidden
	configurations $\mathcal{C}^*(\delta) $.

        Recall that for a configuration $C=(H,L)$, if $L(v)=frontier$
        then the out-degree of
        $v \in V(H)$ is $0$. However, $G$ will have
        a $C$-appearance even if $G$ contains an {\em induced} subgraph $G'$ that is
        isomorphic to $H \cup (v,x)$, where $L(v)=frontier$ (see
        Definition \ref{def:C-free}). This motivates the following
        definition, capturing the set of possible induced graphs of
        $G$ that will cause a
        $\mathcal{C}$-appearance in $G$.

        \begin{definitionp}
          \label{def:closed}
          Let $C=(H,L)$ be a configuration in the $F(d)$-model. Then,
          $$cl(C) = \{H' =(V(H), E') ~ | ~ E(H) \subseteq E', ~ and ~\forall (v,x) \in E'
          \setminus E(H), ~ L(v)=frontier \}$$
        \end{definitionp}
Hence  $cl(C)$ consists of all digraphs $H'$ such
that if an $n$-vertex graph $G$ has a $C$-appearance on its vertices $A
\subseteq V(G)$, then $G[A]$ induces a subgraph isomorphic to $H'$
(note that 
the outdegree of a frontier vertex in $H'$ might not be zero).

Let  $\mathcal{H}= \mathcal{H}_\epsilon=  \cup_{C \in
\mathcal{C}^*(\delta)} ~ cl(C)$, and let $ P^*_{\mathcal{H}_n}$ contain
        the  $n$-vertex digraphs that are $\mathcal{H}$-free as
        induced subgraphs. By the definition of $r$,  $\mathcal{H}$ is an $r$-set. 

	\begin{claimp}\label{cl:fb-her1}
	For $n \geq n^*_{\epsilon}$, 	$~ ~ P_n \subseteq P^*_{\mathcal{H}_n}$.
	\end{claimp}
        \begin{proof}
          Assume for the contrary that  $G \in P_n$ and $G$ is not
   $P^*_{\mathcal{H}_n}$. Then for some $H \in \mathcal{H}$,  $G$
   contains an $H$-appearance as an induced subgraph on some $V_H
   \subset V(G)$. Let $C=(H,L) \in
   \mathcal{C}^*(\delta)$ be 
   the corresponding configuration for which $H \in cl(C)$. By Fact
   \ref{fact:f3} the digraph $G'$ that is obtained from $G$ by
   deleting the outgoing neighbours of $V_H$ in $G$ has a
   $C$-appearance. Let $n' = |V(G')|$.

   Note that  $n' \geq n(\delta)$.  Hence $T(\delta,n')$
   would reject $G'$ with a positive probability. But $G' \in P$ on
   account of $P$ being hereditary. This  contradicts the
   fact that $T(\delta,n')$ is a $1$-sided error for $P_{n'}$.  
 \end{proof}
 
        \begin{claimp}\label{cl:fb-her2}
    For $n \geq n^*_{\epsilon}, ~ ~       P^*_{\mathcal{H}_n}$ is $\epsilon/2$-close to $P_n$.
        \end{claimp}
        \begin{proof}
            Let $G \in P^*_{\mathcal{H}_n}$. We claim that $T(\epsilon/2,n)$
  accepts $G$ with probability $1$.  Indeed assume that  $T(\epsilon/2,n)$ 
  rejects $G$ on account of a $C$-appearance. Then by the definition of
  $\mathcal{H}_\epsilon$, $G$ would have an induced subgraph $H' \in cl(C)$ for
  some $C \in \mathcal{C}^*(\delta)$, contradicting
  the fact that $G \in P^*_{\mathcal{H}_n}$.  Hence $G$ must be $\epsilon/2$-close to
          $P_n$ as  $T$ is $\epsilon/2$-test for $P_n$. The other
          direction is trivial since $P_n \subseteq P_{\mathcal{H}}$.
        \end{proof}

	% To end the proof we only need to argue that $P_{\mathcal{H}}$
	% is hereditary and it is defined by a finite set of forbidden
        % induced subgraphs. 
	% 	Assume for the contrary that $P_{\mathcal{H}^*(\delta)}$ is not hereditary, namely
	% for some $n > n^*_\delta$ there exist a digraph  $G$ on $n$ vertices  in $P_{\mathcal{H}^*(\delta)}$ but $G'=G\setminus
	% \{v\} \notin P_{\mathcal{H}^*(\delta)}$. Hence $G'$ contains a forbidden
	% configuration $H \in \mathcal{H}^*(\delta)$. But then
        % $T(\delta,n-1)$ will reject $G'$ with 
	% positive probability. This implies that $G' \notin
	% P$ on account of $T(\delta,n)$ being a $1$-sided error test for $P$. But
	% $G'$ is an induced subgraph of $G \in P$ contradicting the fact that
	% $P$ is hereditary.
	
	% The fact that $P_{H_\delta}$ is hereditary and being defined by a
	% finite collection of forbidden configuration immediately implies that
	% it is defined by a finite collection of forbidden induced subgraphs
	% as all forbidden induced subgraphs defining $P_{\mathcal{H}^*(\delta)}$ depends on
	% configurations of bounded size. 

 We have proved that the requirements (a), (b) of Theorem
 \ref{thm:main-ff2.5}  hold for the $r$-set $\mathcal{H}_\epsilon$. 
 Finally, the fact that each $H \in \mathcal{H}_\epsilon$ is  rooted 
is argued similarly as in  the proof of Theorem \ref{thm:main-ff2}
(the monotone case).
\end{proof}

\subsubsection{A few concluding remarks on Theorem \ref{thm:main-ff2} and
  monotone properties.}

 It is easy to see that if the property $\mathcal{P_{\mathcal{C}}}$  in the $F(d)$ model is
 monotone, then $\mathcal{C}$ is {\em upwards closed} in the sense that
 is defined below.

 \begin{definitionp}\label{def:upward}
A set of configurations   ${\mathcal{C}}$ is upwards-closed if for every
$C=(H,L) \in \mathcal{C}$ and $v$
being developed, adding any edge $(v,u)$   to $H$,  
while respecting the degree bound, results in a configuration $C' =
(H',L')$ that is
also in $\mathcal{C}$, where if $u \in V(H)$ then $L'
= L$, otherwise $L'(u) =
frontier$ and $L'(x)=L(x)$ for every other vertex $x$.
\end{definitionp}

\begin{factp}\label{fact:f1}
   $P_{\mathcal{C}}$ is monotone if and only if
${\mathcal{C}}$ is upwards-closed. $\qed$
\end{factp}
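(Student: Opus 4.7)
The forward direction (upwards-closed $\Rightarrow$ monotone) is the substantive one. I fix $\mathcal{C}$ upwards-closed, $G \in P_{\mathcal{C}}$, and an edge $e = (a,b) \in E(G)$, and aim to show $G^- := G \setminus \{e\} \in P_{\mathcal{C}}$. The plan is contrapositive: assume $G^-$ has a $C$-appearance via $\phi: W \to V$ for some $C = (H,L) \in \mathcal{C}$, and construct $C' \in \mathcal{C}$ with a $C'$-appearance of $G$, contradicting $G \in P_{\mathcal{C}}$.

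Since the two conditions in Definition \ref{def:C-free} constrain only outgoing edges of images of developed vertices, $\phi$ is already a $C$-appearance of $G$ unless the new endpoint $a$ is $\phi(x)$ for some developed $x$; this disposes of the easy case immediately. So assume $a = \phi(x)$ with $x$ developed, and split on whether $b \in \phi(W)$. If $b = \phi(y)$ with $y \in W$, let $C'$ be obtained from $C$ by adding the edge $(x,y)$ to $H$ (keeping $L$); upwards-closedness gives $C' \in \mathcal{C}$, and $\phi$ itself witnesses a $C'$-appearance of $G$ by a direct check against Definition \ref{def:C-free}. If $b \notin \phi(W)$, adjoin a new frontier vertex $u$ together with the edge $(x,u)$ to form $C' \in \mathcal{C}$, and extend $\phi$ by $\phi'(u) = b$; injectivity is immediate and the same type of verification confirms the $C'$-appearance, the new outgoing edge $e$ of $\phi(x)$ landing on $\phi'(u)$ by design.

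For the converse, I show that if $P_{\mathcal{C}}$ is monotone then $P_{\mathcal{C}} = P_{\bar{\mathcal{C}}}$, where $\bar{\mathcal{C}}$ is the upwards closure of $\mathcal{C}$, so that $\mathcal{C}$ may be replaced by an upwards-closed family defining the same property. One inclusion is trivial. For the other, if some $G \in P_{\mathcal{C}}$ had a $C''$-appearance via $\phi''$ for $C'' \in \bar{\mathcal{C}}$ arising from $C \in \mathcal{C}$ by adding a set $S$ of edges out of developed vertices (each possibly to a new frontier vertex of $W''$), then deleting from $G$ the $\phi''$-images of the edges in $S$ produces a subgraph $G''$ on which $\phi''|_W$ is easily seen to witness a $C$-appearance. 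Hence $G'' \notin P_{\mathcal{C}}$, contradicting monotonicity applied to $G \in P_{\mathcal{C}}$. The main obstacle is the bookkeeping in the forward case-split: verifying that after inserting $e$, both directions of the iff-condition and the outgoing-closure condition genuinely transfer from $G^-$ to $G$ against $C'$, without inadvertently creating a spurious image of a non-edge of $H'$; the degree bound is automatic since the extra outgoing edge of $\phi(x)$ is already present in $G \in F(d)$.
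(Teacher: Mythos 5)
The paper gives no proof of this fact (it is stated inline with a terminating $\qed$), so there is no argument to compare against; I instead assess the proposal on its own.

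Your forward direction ($\mathcal{C}$ upwards-closed $\Rightarrow P_{\mathcal{C}}$ monotone) is correct and the case split is the right one: since both conditions of Definition \ref{def:C-free} constrain only outgoing edges of images of \emph{developed} vertices, the only nontrivial deletion is of an edge whose tail is such an image, and the two sub-cases (head in or out of $\phi(W)$) are each handled by exactly the upward move of Definition \ref{def:upward}. The observation that the degree bound for $C'$ is inherited from $G \in F(d)$ via the closure condition is the right way to dispose of that worry.

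For the converse you prove that monotonicity of $P_{\mathcal{C}}$ forces $P_{\mathcal{C}} = P_{\bar{\mathcal{C}}}$, and then conclude that $\mathcal{C}$ ``may be replaced'' by an upwards-closed family. That hedging is in fact necessary: the literal converse, namely that $\mathcal{C}$ \emph{itself} is upwards-closed whenever $P_{\mathcal{C}}$ is monotone, is false. For instance, take $\mathcal{C} = \{C_2, \ldots, C_d, X\}$ in the $F(d)$-model, where $C_j$ is the developed-centre $j$-out-star and $X$ is the configuration on two developed vertices $x,y$ with the two edges $(x,y),(y,x)$. Then $P_{\mathcal{C}}$ (forbidding out-degree $\geq 2$ and also $2$-cycles of out-degree-$1$ vertices) is monotone, yet adding an edge $(x,u)$ to $X$ for a new frontier $u$ produces a configuration lying in none of $C_2,\ldots,C_d,X$, so $\mathcal{C}$ is not upwards-closed. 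Thus your version ($P_{\mathcal{C}}$ monotone $\Leftrightarrow$ $P_{\mathcal{C}} = P_{\mathcal{C}'}$ for some upwards-closed $\mathcal{C}'$) is the accurate reading of the Fact and is exactly what supports the paper's subsequent use of it, namely that a monotone $P_{\mathcal{C}}$ can be specified by the minimal configurations of an upwards-closed family. The converse argument itself — deleting the $\phi''$-images of the added edges $S$ and checking that $\phi''|_W$ certifies a $C$-appearance of the resulting subgraph — is sound; one does need, and you implicitly use, the injectivity of $\phi''$ to see that deleting $\phi''(S)$ does not accidentally remove an image of an original edge of $H$.
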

 An immediate conclusion from Fact \ref{fact:f1} is that  for monotone
 $P_{\mathcal{C}}$, $\mathcal{C}$ can be specified by  its minimal
 configurations. (w.r.t to Definition \ref{def:upward}). 
 Next, we generalize Theorem \ref{thm:main-ff2}, moving beyond the
 scope of monotone properties. Towards this end we use the following.

\begin{definitionp} [Rooted Configuration]
	A  configuration $C = (H,L)$, where $H$ is a digraph
        and $L$ is a label function, is {\rm rooted} if $H$ is
        rooted. 
      \end{definitionp}

A conclusion from the proof of Theorem \ref{thm:main-ff2} is that
$\mathcal{C}_n$ as defined in the proof is upwards closed and every
minimal (with respect to Definition \ref{def:upward}) configuration in
it is rooted. 
However, more can be said:  The
following theorem follows directly from the arguments above {\em for any
  digraph property}, where we say that a set of configuration $\mathcal{C}$ is an $r$-set if for
every $C = (H,L) \in \mathcal{C}, ~ |V(H)| \leq r$.

\begin{thmp}\label{thm:general}
Assume that the digraph property $P = \cup_{n \in \N} P_n$  is $1$-sided
        error strongly-testable in the $F(d)$-model. Then, for any $\epsilon >0$
        there is a $r = r(\epsilon)$ such that for any $n$ there is
        $r(\epsilon)$-set $\mathcal{C}_n$ of  
        configurations such that every minimal configuration in
        $\mathcal{C}_n$ is rooted and,
\noindent
(a) $P_n \subseteq P_{\mathcal{C}_n}$ and 
(b) $P_{\mathcal{C}_n} $ is  $\epsilon/2$-close to $P_n$.
\end{thmp}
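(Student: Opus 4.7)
By Theorem \ref{thm:canonical}, since $P$ is $1$-sided error strongly-testable, for each $\delta \in (0,1)$ and $n$ there is a $(q,q)$-canonical $1$-sided error $\delta$-test $T(n,\delta)$ for $P_n$ with $q = q(\delta)$ independent of $n$. Fix $\epsilon > 0$ and let $\mathcal{C}'_n$ consist of all configurations on which $T(n,\epsilon/2)$ rejects with positive probability. Since the tester's decision depends only on the configuration it sees, $\mathcal{C}'_n$ is a well-defined $r$-set with $r = r(\epsilon) = q \cdot d^{q+1}$ independent of $n$; this is the starting candidate for $\mathcal{C}_n$.

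Both (a) and (b) then hold for $\mathcal{C}'_n$ essentially for free, and notably without any monotonicity assumption --- this is where working directly in the configuration language pays off. For (a), if $G \in P_n$ had a $C$-appearance on $V' \subseteq V(G)$ for some $C \in \mathcal{C}'_n$, then with positive probability the canonical tester samples the developed vertices of this appearance; by Definition \ref{def:C-free} all forward edges of developed vertices in $G$ already lie inside $V'$, so the $q$-disc queries reveal exactly the configuration $C$, and the tester rejects $G$, contradicting its $1$-sided property. (Note that this is strictly simpler than Claim \ref{cl:fb-mon1}, which had to first modify $G$ and then invoke monotonicity.) For (b), any $G \in P_{\mathcal{C}'_n}$ is $C$-free for every rejecting $C$, hence $T(n,\epsilon/2)$ accepts $G$ with probability $1$, forcing $G$ to be $\epsilon/2$-close to $P_n$ since $T(n,\epsilon/2)$ is a valid $\epsilon/2$-test.

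To arrange that every minimal configuration in $\mathcal{C}_n$ is rooted, I would follow the last step of the proof of Theorem \ref{thm:main-ff2} essentially verbatim, with configurations replacing digraphs. First, enlarge to $\widetilde{\mathcal{C}}_n = \bigcup_{\delta \le \epsilon/2}\{C \in \mathcal{C}(n,\delta) : |V(H_C)| \leq r \}$, where $\mathcal{C}(n,\delta)$ denotes the rejecting configurations of $T(n,\delta)$. Both (a) and (b) are preserved: (a) because each $\delta$ individually contributes a valid forbidden set by the argument above, and (b) because enlarging the forbidden set only shrinks $P_{\widetilde{\mathcal{C}}_n}$. Then remove from $\widetilde{\mathcal{C}}_n$ any $C'$ for which some distinct $C \in \widetilde{\mathcal{C}}_n$ is ``contained'' in $C'$ in the sense that every graph with a $C'$-appearance also has a $C$-appearance; such removals do not alter $P_{\widetilde{\mathcal{C}}_n}$. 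Call the result $\mathcal{C}_n$; its minimal members are then exactly the remaining members.

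The principal obstacle, and the crux of the proof, is to rule out a minimal $C = (H,L) \in \mathcal{C}_n$ with $H$ non-rooted. Suppose such a $C$ exists and consider $G_C$ formed by $\lfloor n/|H| \rfloor$ vertex-disjoint copies of $H$ (padded by isolated vertices if necessary). Each copy yields a $C$-appearance, so $G_C \notin P_n$, and disjointness gives that $G_C$ is at least $1/(d|H|)$-far from $P_n$. Set $\delta' = \min\{\epsilon/2, 1/(d|H|)\}$. Then $C \in \widetilde{\mathcal{C}}_n$ by construction, so $T(n,\delta')$ must reject $G_C$ with probability $\ge 1/2$; minimality of $C$ within $\mathcal{C}_n$ together with the disjoint-copies structure of $G_C$ forces any rejection to witness a $C$-appearance, which in particular requires discovering an isomorphic copy of $H$ as a subgraph. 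But the Yao-style argument of Proposition \ref{tester:directed-out:2} --- randomly permuting the vertex labels across the $\lfloor n/|H| \rfloor$ copies --- gives an $\Omega(\sqrt{n})$ query lower bound on any $1$-sided algorithm that must discover $H$ with constant probability when $H$ is non-rooted, contradicting the constant query complexity of $T(n,\delta')$.
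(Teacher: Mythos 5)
Your proposal is correct and follows essentially the same route the paper takes: invoke Theorem~\ref{thm:canonical}, take $\mathcal{C}_n$ to be the rejecting configurations of the canonical $\epsilon/2$-test, enlarge over smaller $\delta$, prune to minimal members, and rule out a non-rooted minimal configuration via the disjoint-copies graph and the Yao bound of Proposition~\ref{tester:directed-out:2}. In particular you correctly identify the point the paper's remark alludes to (``leave out the parts dealing with monotonicity''): once forbidden \emph{configurations} replace forbidden subgraphs, condition~(a) follows directly because a $C$-appearance already seals off the developed vertices' out-edges, so the sampled $q$ starting vertices reveal exactly $C$, and no edge deletions or monotonicity are needed to recreate what the tester sees.
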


The proof is essentially identical to the proof of Theorem \ref{thm:main-ff2},
in which we replace subgraphs by configurations and leave out the
parts dealing with monotonicity. 
% \ignore{%aug
% \begin{theorem}
%   \label{thm:canonical2}
%   Let $P = \cup_n P_n$ be a strongly $1$-sided error testable graph
%   property. Then there is $t,r \in \N$ 
%   such that $P_n$ is the property of being $\mathcal{C}_n$-free, for
%  a  family of forbidden configurations $\mathcal{C}_n$. Further
%   $\mathcal{C}_n$ has size at most
%   $t$ and for each configuration $C = (H,L) \in \mathcal{C}_n$, $H$ is of size at most $r$.
% \end{theorem}
% \begin{proof}
%   Assume that  $P$ is $1$-sided error strongly testable. Theorem
%   \ref{thm:canonical}  implies that
%   for some $q$ (independent of $n$ but that may depend on $\epsilon$)  there is a
%   $(q,q)$-canonical $\epsilon$-test $T_n$ for $P_n$. Let $\mathcal{C}_n$ contain all the
%   configurations on which $T_n$ rejects. Then it is immediate that $P_n$
%   contains all the $n$-size graphs that are $\mathcal{C}_n$-free by
%   the fact that $T_n$ is $1$-sided error.
% \end{proof}
% }%ignore aug.

\section{Strongly-Testable properties that are non-monotone neither hereditary}\label{sec:5}
There are $1$-sided-error strongly-testable properties in the
$F(d)$-model (and in all other models too) that are not monotone,
neither are hereditary.  Consider e.g., the $F(d)$-model and the
property $P$ of not having a vertex of out-degree $d-1$. This property
is not trivial e.g., the graph that contains $n/d$ vertex disjoint
directed $(d-1)$-stars  is $\frac{1}{d}$-far from the property.  Moreover
$P$ is non-monotone and not hereditary.  But $P$ is strongly-testable as
if $G$ is $\epsilon$-far from $P$ then $G$ contains at least
$\epsilon n$ vertices of degree $d-1$. Indeed, it can be defined by
one forbidden rooted configuration, hence consistent with Theorem
\ref{thm:general}.

A more interesting property that is $1$-sided error strongly-testable while not
monotone nor hereditary is the following property {\em
	RV} (for a ``reachable vertex''). Differently from not having a degree $d-1$ vertex, the
property $\Sink$ is not expressible by a finite collection of forbidden
configurations at all. Rather, it is close to such (for any $\epsilon$).

For a digraph $G=(V,E)$ a vertex $s \in V$ is
called   ``reachable-by-all''  if  there is a directed path
from each vertex in $G$ to $s$. 
Note that $G$ may have many such vertices, in particular, if $G$ is
strongly connected then every vertex  is reachable-by-all. Let $\Sink$ be the digraph property of having a
vertex that is reachable-by-all. The
property 
$\Sink$  is
not trivial, as e.g., a directed matching is far from $\Sink$.

\begin{thm} \label{thm:sink}
	The property $\Sink$ is $1$-sided error strongly-testable in the $F(d)$-model.
\end{thm}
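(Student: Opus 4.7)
The plan rests on the classical observation that $G\in\Sink$ iff the condensation of $G$ (the DAG of strongly connected components) has exactly one sink SCC, i.e., exactly one SCC with no outgoing edge to another SCC. Indeed, two disjoint sink SCCs each trap their own vertices, so no reachable-by-all vertex can exist; conversely, if the sink SCC is unique, any descending chain in the condensation DAG ends at it, so every vertex of $G$ reaches it. The first thing I would prove is the structural lemma: if $G$ is $\e$-far from $\Sink$, then $G$ has at least $\e dn/2$ sink SCCs. The proof is by exhibiting an explicit repair. Let $C_1,\dots,C_k$ be the sink SCCs and pick representatives $v_i\in C_i$; for each $i\ge 2$ add the edge $(v_i,v_1)$, first deleting one outgoing edge of $v_i$ if $v_i$ already has outdegree $d$. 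This costs at most $2(k-1)$ edge modifications and keeps the graph $d$-outdegree-bounded. Since the deleted edges are outgoing from $v_i$, they do not lie on any path that \emph{ends} at $v_i$ inside $C_i$; hence every $u\in C_i$ still has a path to $v_i$, and via the new edge a path to $v_1$. Vertices outside the sink SCCs already reached some $C_j$ originally and so still reach $v_1$ after the repair. Thus the repaired graph lies in $\Sink$, which gives $2(k-1)\ge\e dn$, i.e., $k\ge \e dn/2+1$. Setting $s=\lceil 4/(\e d)\rceil$, at most $n/s=\e dn/4$ sink SCCs can exceed size $s$, so at least $\e dn/4$ sink SCCs are ``small'' (size $\le s$) and together cover at least $\e dn/4$ vertices.

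Next I would present the following tester: sample $m=\Theta(1/\e)$ vertices uniformly at random and perform an $s$-disc query (forward BFS of depth $s$) around each. For each sample $v$, if the BFS ``exhausts'' within depth $s$---meaning every discovered vertex has all of its out-neighbours already inside the discovered set---then the discovered set $R(v)$ is a forward-closed subset of $V(G)$ of size $\le s$. If two samples yield exhausted forward-closed sets that are disjoint, reject; otherwise accept. One-sidedness is immediate: if $G\in\Sink$ with unique sink SCC $C^{*}$, then every forward-closed subset of $V(G)$ contains $C^{*}$ (since every vertex reaches $C^{*}$), so no two can be disjoint and the tester always accepts. For soundness on $\e$-far $G$: a random vertex lies in a small sink SCC with probability $\ge \e d/4$, so $m=\Theta(1/\e)$ samples yield at least two such hits with constant probability; a birthday-type bound shows that the probability of two such hits landing in the \emph{same} small sink SCC is $O(m^{2}s/n)$, which is $o(1)$ once $n$ is large compared with $1/\e$ (smaller $n$ are handled trivially by querying the whole graph). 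Finally, each small-sink sample is fully discovered by the depth-$s$ BFS (the SCC has no outside out-edges and its diameter is at most $s-1$), so both disjoint forward-closed certificates are produced and the test rejects with probability $\ge 1/2$. Each BFS costs $O(d^{s})=2^{O(1/\e)}$ queries, giving total query complexity $2^{O(1/\e)}$, independent of $n$.

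The main obstacle is the structural lemma, specifically the delicate point that the edge deletions performed to respect the outdegree bound could a priori fragment $C_i$ into smaller SCCs and create \emph{new} sink SCCs that spoil the repair. The resolution above exploits a one-line observation: since only outgoing edges of the representative $v_i$ are deleted, and no path ending at $v_i$ inside $C_i$ traverses $v_i$'s outgoing edges, every vertex of $C_i$ keeps its original path to $v_i$ regardless of whether residual strong-connectivity of $C_i$ is preserved. The sampling analysis is standard (a birthday bound) once one notices that small sink SCCs form many tiny cells inside the large vertex set they cover. Consistent with Theorem \ref{thm:general}, the tester's forbidden configurations are exactly the pairs of disjoint $s$-bounded forward-closed sets, and this collection is $\e/2$-close (in the sense of Definition \ref{def:dist}) to $\Sink$ on $n$-vertex inputs, but is not a fixed finite set of configurations uniformly in $\e$---only in the $\e$-dependent sense allowed by our characterization.
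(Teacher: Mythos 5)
Your proof is correct and essentially coincides with the paper's: the same structural fact (a graph that is $\e$-far from $\Sink$ has $\Omega(\e d n)$ sink SCCs in its condensation) established by the same repair argument, and the same tester that samples random vertices, runs bounded-depth forward BFS, and rejects upon seeing two disjoint forward-closed discovered sets. Your accounting is a bit more careful about the out-degree bound when inserting the repair edges (the paper charges $k-1$ modifications, implicitly assuming a free out-slot at each chosen representative, whereas you correctly budget up to $2(k-1)$), and you use $\Theta(1/\e)$ rather than $\Theta(1/\e^{2})$ samples, but neither change alters the structure of the argument.
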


\begin{proof}%[of Theorem \ref{thm:sink}]
	The following test $T$ is a $1$-sided error   $\epsilon$-test for $\Sink$ making
	$\frac{1}{d^2\epsilon^2} \cdot d^{O(1/(d\epsilon))}$ 
        queries. The basic idea is very similar to the test (and
        proof) for testing connectivity in \cite{GR02}.
	
	\noindent
	{\bf Test for $\Sink$, $T(\epsilon)$, for $\epsilon < 1/d$:}
	
	\begin{enumerate}
		\item Choose a multiset of vertices $B \subseteq V(G)$
		by choosing independently a vertex $v \in V(G)$
                uniformly at random,                 for $b= \frac{200}{d^2\epsilon^2}$ times.
                Let $B=\{v_1, \ldots ,v_b \}$ the vertices thus chosen.
		\item For $i=1$ to $b$: query the disc
                  $D(v_i,\frac{2}{d\epsilon})$ around $v_i$, and let $S_i$ be the
		set of vertices that is discovered (including $v_i$). 
		\item If there are distinct  $i,j$ such that $\Gamma^+(S_i) = \Gamma^+(S_j)
		= \emptyset$ and $S_j \cap S_i = \emptyset$ reject, otherwise accept.
	\end{enumerate}
	\begin{claim}
		$T(\epsilon)$ never reject a digraph in $\Sink$.
	\end{claim}
	\begin{proof}
		Let $G$ have a vertex $a$ that is
                reachable-by-all. Then for any $v$ that
		is queried,  there is a path from $v$ to $a$. Therefore,
                for every $i$, either $a$ is in
		$D_{v_i} = D(v_i,2/(d\epsilon)~)$, or  there is a path from $v_i$ to $a$
		that stretches outside $D(v_i,\frac{2}{d\epsilon})$
                implying that $\Gamma^+(S_i) \neq
                \emptyset$. Therefore, for every  $v_i$ and 
		$v_j,$ $\Gamma^+(S_i)=\Gamma^+(S_j)= \emptyset$ holds
                only when $a \in S_i \cap S_j$.
	\end{proof}

	\begin{claim}
		Let $\epsilon < 1/d$ and $G$ be $\epsilon$-far from $\Sink$, then $T(\epsilon)$ rejects $G$ with
		probability at least $1/2$.
	\end{claim}
	\begin{proof}
		Let $G$ be $\epsilon$-far from $\Sink$ and $SC(G) = (A,F)$ be the DAG
		of the strongly
		connected components of $G$. We first claim that $SC(G)$ contains at
		least $\epsilon
		dn$ components $c \in A$ for which $\Gamma^+(c)=\emptyset$. Indeed, let $c_1, \ldots ,c_k$ be the strongly connected
		components of $G$ for which $\Gamma^+(c_i) =
		\emptyset$. To see that $k \geq \epsilon dn$ note that
                by changing at most $k-1$ edges
		(one per $c_i$, connecting it to $c_{i+1}$),  $G$ will have a
		a vertex that is reachable-by-all in $c_k$.  
		
                This implies that there are at least $\epsilon dn/2$
                components $c\in A$, of size at most $2/(d\epsilon)$,
                for which $\Gamma^+(c) = \emptyset$.  We denote this
                set of components by $A^*$ and the vertices in $A^*$
                by $V^*$. It follows that $|V^*| \geq \epsilon dn/2$,
                and hence, with high probability sampling
                $b= \frac{200}{d^2\epsilon^2}$ vertices finds  two vertices in two distinct
                components in $A^*$. Scanning the
                $\frac{2}{d\epsilon}$-disc around two such vertices
                will cause the test to reject.  \ignore{%ignore
                  We conclude that picking a random vertex $v$, it
                  will be in $V^*$ with probability at least
                  $\epsilon d/2$.  In turn, the expected number of
                  vertices in $B \cap V^*$ is at least
                  $b \cdot \epsilon d/2$. Hence, by Chernoff, with
                  extremely small probability
                  $|B \cap V^*| \leq \epsilon bd/100 =
                  2/d\epsilon$. In addition with extremely large
                  probability all the $b$ vertices chosen for $B$ are
                  distinct. Assuming that the vertices in $B$ are
                  distinct and that $|B \cap V^*| > 2/(d\epsilon)$,
                  then there are two distinct vertices $v,v'$ that
                  reside in distinct components from $A$ (as each
                  component in $A$ has size bounded by
                  $2/(d\epsilon)$).  Two such vertices $v,v'$ will
                  cause the test to reject.}%ignore
	\end{proof}

	Finally, the query complexity is clearly $b \cdot \max_i |S_i|$ which is as stated.
\end{proof}

We note that $\Sink$ cannot be defined by any $r$-set of
forbidden configurations, for $r$ that is independent of $n$. To see this consider a digraph that is
composed of two vertex disjoint simple di-cycles of length $n/2$
each. Such a graph is not in $\Sink$ but every configuration of it of
size at most $n/4$ is shared by the digraph that is composed of one
single directed cycle, which is in $\Sink$.  The property
$P_{\mathcal{C}_\epsilon}$ that is actually being tested by a
$1$-sided error test for $\Sink$ is defined by the set
$\mathcal{C}_\epsilon$ in which every configuration is a pair of
vertex-disjoint discs, of the appropriate size, with no outgoing
edges. The property $\Sink$ is a subset of $P_{\mathcal{C}_\epsilon}$
for every $\epsilon$, but the size of $\mathcal{C}_\epsilon$ while
finite for every $\epsilon$, is not bounded when $\epsilon$ tends to
$0$.

%%%%%%%%%%%%%%%%%%%%%%%%%%%%%%%%%%%%%%%%%%%%%%%%%%

\section{The $FB(d)$-model and the undirected bounded-degree graph model}\label{sec:FB}

As already mentioned, the undirected bounded-degree graph model can be
viewed as a submodel of the $FB(d)$-model. Hence, we state the results
only for the $FB(d)$-model. The results are very similar to these for the
$F(d)$-model, except that the restriction that the forbidden members
are rooted
is not needed. In addition, the  test for being free of a 
finite family of forbidden {\em induced} graphs is similar to the
monotone case due to the  bound on incoming degree (this will
further explained in the relevant place below). We define here the appropriate notions and state
the appropriate theorems. We give proofs only where they are 
significantly different from these for the $F(d)$ model.

We start with the relevant notions, analogous to these seen for the
$F(d)$-model.
The first notion,  which is non-standard due to the type of queries that
is available, is that of $r$-disc.

\begin{definition} [$r$-disc, $FB(d)$-model]\label{def:fb-disc}
	Let $r$ be an integer and  $v \in V(G)$.  $\tilde{D}(v,r)$
        denotes the ``$r$-disc''  for the $FB(d)$-model, and is defined recursively as follows:

        $\tilde{D}(v,1) = \{v\} \cup \Gamma^+(v) \cup \Gamma^-(v)$.

       For $r \geq 2,~$         $\tilde{D}(v,r) = \cup_{u \in \tilde{D}(v,1)} \tilde{D}(u,r-1)$
\end{definition}

        That is, $\tilde{D}(v,r)$ contains all vertices that are reachable
        from $v$ by path of length at most $r$ that is composed of
        edges that may be traversed in the wrong direction. The point
        being that the $FB(d)$-model allows for such traversal.
        With Definition \ref{def:fb-disc}, an $r$-disc query in the $FB(d)$-model
        is defined exactly as in the $F(d)$-model, where $r$-disc are
        the corresponding one.
         $r$-disc queries generalize basic neighbourhood queries  as in the $F(d)$
model, and with the same complexity overhead.

For a family of (di)graphs $\mathcal{H}$, the definitions of being
$\mathcal{H}$-free as subgraphs, or as induced subgraphs are 
extended naturally with no  alterations (as these are model-independent definitions). 
 But configurations for
the $FB(d)$ model are defined slightly differently; a configuration is
defined as for the $F(d)$ model, with the extra restriction that the
degree bound holds for both in-degree and out-degree. In addition,
frontier vertices may have non-zero out-degree.
Being $C=(H,L)$-free, for a
configuration $C$, is defined as follows.

\begin{definition} [$C$-Free, $FB(d)$-model]
	Let $C=(H,L)$ be a configuration and $G=(V,E)$ a
        $d$-bounded degree digraph in the $FB(d)$ model.  Let $V'
        \subseteq V$. We say that $G[V']$ is a  $C$-appearance if
 there is a bijection $\phi : V(H) \rightarrow
        V'$ such that $\forall v,u \in V(H)$ and $L(v)=\rm{developed}$,
        \begin{center}
          $(v,u) \in E(H) \leftrightarrow
	(\phi(v),\phi(u)) \in E~ ~ $
 and $~ ~  (u,v) \in E(H) \leftrightarrow (\phi(u), \phi(v)) \in E$. 
      \end{center}
Further, for every developed $v,$ if $(\phi(v), x) \in E$
 or $(x,\phi(v)) \in E$ 
 then $\exists	u\in V(H), \phi(u)=x$.
 
	We say that $G$ is $C$-free if $G$ has no $C$-appearance.  
\end{definition}

Finally,  the fact that every strongly-testable property is testable by a
canonical tester is also identically the same.
We get the following analog of Theorem \ref{thm:main-mon}.

\begin{thm}  \label{thm:main-fb}
	A monotone digraph property $P = \cup_n P_n$ is $1$-sided error strongly-testable in
        the $FB(d)$-model if and only if for every $\epsilon> 0$ there is a $r = r(\epsilon)$ such that for any $n$ there is a $r$-set of 
         digraphs $\mathcal{H}_n$ for which the following two
         conditions hold  (a) $P_n \subseteq P_{\mathcal{H}_n} $ and (b): 
	$P_{\mathcal{H}_n} $ is $\epsilon/2$-close to $P_n$.  $\qed$
\end{thm}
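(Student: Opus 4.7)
The plan is to essentially adapt the proof of Theorem \ref{thm:main-mon} with the crucial simplification that in the $FB(d)$-model no rootedness condition is required. The reason is encoded in Definition \ref{def:fb-disc}: an $r$-disc query traverses edges in both directions, so a single query around \emph{any} vertex of a component of size at most $r$ reveals the entire component. This is exactly what the rootedness condition was buying in the $F(d)$-model, and in $FB(d)$ it now comes for free.

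For the `if' direction, I would mirror Proposition \ref{theorem:directed-out:1} and Theorem \ref{thm:main-ff1}. Given an $r$-set $\mathcal{H}_n$ (normalized so that no member contains another as a subgraph, and containing no isolated-vertex members), the test samples $\Theta((trd/\epsilon)\log r)$ uniformly random vertices and scans their $r$-discs. For $G$ that is $\epsilon$-far from $P_{\mathcal{H}_n}$, a maximal edge-disjoint collection $F$ of $\mathcal{H}_n$-appearances must satisfy $|F| \geq \epsilon n/r$, since deleting the at most $rd|F|$ outgoing edges incident to $V(F)$ destroys every appearance and (by the normalization) creates none. Some specific $H \in \mathcal{H}_n$, with components $C_1,\ldots,C_j$ and $j \le r$, appears at least $|F|/t$ times, yielding, for each $i$, at least $\epsilon n/(2trd)$ distinct vertices each lying in a $C_i$-appearance (each vertex can participate in at most $2d$ edge-disjoint copies). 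A union bound over the $j$ component types shows that the sample hits every $C_i$ with probability $\ge 1/2$; the $r$-discs around these hits then reveal a complete $H$-appearance. Composing with the $\epsilon/2$-closeness hypothesis exactly as in Theorem \ref{thm:main-ff1} gives the desired $\epsilon$-test for $P_n$.

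For the `only if' direction, I would follow Theorem \ref{thm:main-ff2} almost verbatim. Using the $FB(d)$-analog of Theorem \ref{thm:canonical}, fix a $(q,q)$-canonical $\epsilon/2$-test $T$ for $P_n$, let $\mathcal{C}(\epsilon/2)$ be the finite set of configurations on which $T$ rejects, and define $\mathcal{H}'(\epsilon/2) = \{H : (H,L) \in \mathcal{C}(\epsilon/2)\}$, an $r(\epsilon/2)$-set. The inclusion $P_n \subseteq P_{\mathcal{H}'(\epsilon/2)}$ follows from monotonicity exactly as in Claim \ref{cl:fb-mon1}: if $G \in P_n$ contained some $H \in \mathcal{H}'(\epsilon/2)$ as a subgraph on vertex set $V'$, we could delete the edges adjacent to $V'$ not required by the corresponding configuration $C = (H,L)$, obtaining a graph $G' \in P_n$ (by monotonicity) with a $C$-appearance, contradicting 1-sided correctness of $T$. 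Closeness of $P_{\mathcal{H}'(\epsilon/2)}$ to $P_n$ follows since $T$ accepts any $\mathcal{H}'(\epsilon/2)$-free graph with probability $1$, forcing such a graph to be $\epsilon/2$-close to $P_n$. Finally, take the union with members of $\mathcal{H}'(\delta)$ of size $\le r(\epsilon/2)$ for all $\delta < \epsilon/2$ and prune to an antichain under subgraph containment; both conditions are preserved.

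The only real conceptual difference from Theorem \ref{thm:main-ff2}, and the main ``obstacle'' to verify, is the \emph{absence} of the final rootedness cleanup. Proposition \ref{tester:directed-out:2}'s lower bound critically uses that in $F(d)$ only out-neighbors are queryable, making a disjoint union of copies of a non-rooted $H$ hard to test; this construction collapses in the $FB(d)$-model, where $O(1/\epsilon)$ random vertices combined with $|H|$-disc queries discover a copy of $H$ regardless of any ``root'' structure. Thus the pruning to a minimal antichain $\mathcal{H}_n$ is the only cleanup required, and no analog of the rootedness argument is needed. This is precisely why Theorem \ref{thm:main-fb} drops the rootedness hypothesis present in Theorem \ref{thm:main-mon}.
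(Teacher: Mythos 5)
Your proposal is correct and tracks the paper's own (intentionally omitted) proof, which the authors describe as ``mostly identical to the corresponding proofs for the $F(d)$ model'' once rootedness is dropped—exactly the point you make about $FB(d)$ discs revealing a full bounded-size component from any of its vertices. One small redundancy: in the `only if' direction, the union of $\mathcal{H}'(\delta)$ over $\delta < \epsilon/2$ and the subsequent antichain pruning serve, in the $F(d)$ proof of Theorem~\ref{thm:main-ff2}, solely to establish rootedness via Proposition~\ref{tester:directed-out:2}, so in the $FB(d)$ setting you may simply take $\mathcal{H}_n = \mathcal{H}'(\epsilon/2)$ directly—Claims~\ref{cl:fb-mon1} and~\ref{cl:fb-mon2} already give conditions (a) and (b).
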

The proof is mostly identical to the corresponding proofs for
the $F(d)$ model and is omitted.
Note that we do not require here that the forbidden digraphs are 
rooted. This is not needed anymore, due to the stronger query-type.
The analogous theorem for hereditary properties is.
\begin{thm}\label{thm:main-fb0.5}
	An hereditary digraph property $P = \cup_n P_n$ is $1$-sided error strongly-testable in
        the $FB(d)$-model if and only if for every $\epsilon> 0$ there
        is a $r = r(\epsilon)$, $n^*_{\epsilon} \in \N$ and a    
        % a $r(\epsilon)$
        $r$- set of  digraphs $\mathcal{H}$,   for which
        the following conditions hold: for
        every $n > n^*_{\epsilon}~$ 
        (a) $P_n \subseteq P^*_{\mathcal{H}}$, and (b) $P^*_{\mathcal{H}} $ is  $\epsilon/2$-close to $P$.
\end{thm}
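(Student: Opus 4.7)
The plan is to mirror the structure of Theorem \ref{thm:main-hered} (the analogous result for the $F(d)$-model), exploiting two simplifications afforded by the stronger query model. First, because an $r$-disc query in the $FB(d)$-model traverses edges in either direction (Definition \ref{def:fb-disc}), any vertex in a component of size at most $r$ can discover that entire component; this eliminates the need for the "rooted" restriction on $\mathcal{H}$ that was required in Theorem \ref{thm:main-hered}. Second, since both in-degree and out-degree are bounded by $d$, the total number of edges incident to any vertex is at most $2d$, which streamlines the removal-type argument used in the hereditary case.

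For the \emph{if} direction, I would first prove the analog of Proposition \ref{thm:main-ff0.5}: if $\mathcal{H}$ is a non-redundant $r$-set of digraphs, then being $\mathcal{H}$-free as induced subgraphs is $1$-sided error strongly-testable in the $FB(d)$-model. The test samples $O(\log r/\epsilon)$ uniformly random vertices and performs an $r$-disc query around each. If $G$ is $\epsilon$-far from $\mathcal{H}$-freeness, let $S$ be a maximal set of vertex-disjoint induced $\mathcal{H}$-appearances. If $|S| \geq \epsilon n/(2r)$, then a constant fraction of vertices lie in some appearance, so a random vertex hits one with sufficient probability, and its $r$-disc discovers the full appearance (no rooting is needed). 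Otherwise, one removes all edges incident (in either direction) to the at most $r|S|$ vertices involved; this deletes at most $2d r|S| < \epsilon d n$ edges and destroys every appearance in $S$, while the no-new-induced-appearance argument from Proposition \ref{thm:main-ff0.5} still applies (an isolated vertex cannot lie in an $\mathcal{H}$-appearance because members of $\mathcal{H}$ have no isolated vertices). This contradicts $\epsilon$-farness. Then the reduction of Theorem \ref{thm:main-ff1.5} gives an $\epsilon$-test for $P$ from an $\epsilon/2$-test for $P^*_{\mathcal{H}}$.

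For the \emph{only if} direction, I would essentially reproduce the proof of Theorem \ref{thm:main-ff2.5}. Invoke Theorem \ref{thm:canonical} to obtain a canonical $\delta$-test $T(\delta,n)$ of constant query complexity $q(\delta)$. Let $\mathcal{C}(\delta,n)$ be the set of configurations on which $T(\delta,n)$ rejects with positive probability. The heredity argument of Claim \ref{cl:121} carries over without change: for $n' > n \geq q(\delta)$, if $G_{n'} \in P_{n'}$ has a $C$-appearance for some $C \in \mathcal{C}(\delta,n)$, deleting $n'-n$ vertices outside the appearance yields a graph $G \in P_n$ that $T(\delta,n)$ would reject, a contradiction. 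Since $T(\delta,n)$ examines only configurations of bounded size, $\mathcal{C}^*(\delta) = \cup_n \mathcal{C}(\delta,n)$ is finite. Setting $\delta = \epsilon/2$ and defining $\mathcal{H} = \cup_{C \in \mathcal{C}^*(\delta)} cl(C)$ (with $cl(C)$ enumerating the induced-subgraph realizations of $C$, now allowing extra edges in either direction at frontier vertices), the analogs of Claims \ref{cl:fb-her1} and \ref{cl:fb-her2} establish conditions (a) and (b) for all $n \geq n^*_\epsilon$ for a suitable $n^*_\epsilon$.

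The main point to highlight is the simplification at the end: the proof of Theorem \ref{thm:main-ff2.5} concludes with a nontrivial argument (based on Proposition \ref{tester:directed-out:2}) to replace each non-rooted $H \in \mathcal{H}$ by a rooted subgraph without violating (a) or (b). In the $FB(d)$-model, no analog of Proposition \ref{tester:directed-out:2} holds, since the disjoint-copies-of-$H$ hard instance from that proof can now be detected in constant queries from any vertex via a bidirectional $r$-disc. Thus the rooting step is simply omitted, matching the statement of Theorem \ref{thm:main-fb0.5}. The only real obstacle, accordingly, is verifying that the no-new-induced-appearance argument in the testability direction still works when edges at a vertex are removed in both directions, which follows from the bounded in-degree.
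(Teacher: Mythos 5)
Your proposal follows the paper's own proof almost exactly: the `only-if' direction is Theorem \ref{thm:main-ff2.5} with the rooting step omitted, and the `if' direction replaces the random-edge-sampling device of Proposition \ref{thm:main-ff0.5} with directly deleting all edges incident to the (at most $r|S|$) vertices of a maximal vertex-disjoint collection $S$ of induced $\mathcal{H}$-appearances, which works precisely because the in-degree is also bounded and hence at most $2dr|S|$ edges are removed. You also correctly pinpoint why rooting is no longer needed: the lower bound of Proposition \ref{tester:directed-out:2} has no analogue in the $FB(d)$-model, since a bidirectional $r$-disc query from any vertex of a size-$\le r$ component already reveals the entire component.
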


\begin{proof}
  The proof of the `only-if' part is identical to that of Theorem
  \ref{thm:main-ff2.5} without the restriction (and complication) of
  being rooted. 

For the `if' part, the analog of Theorem \ref{thm:main-ff1.5} holds
with a simpler proof. The proof starts identically, with $S$ being a
maximal set of induced subgraphs of $G$, each being an
$\mathcal{H}$-appearance (with no restrictions on roots). Then,  
deleting all edges adjacent to vertices appearing in $S$ results in
$G$ becoming $\mathcal{H}$-free.
(In the $F(d)$-model, we could not afford deleting all edges adjacent to
$S$ as this could be a large set while $S$ is small, and we had to
resort to sampling a random edge.  Here, due to the in-degree bound, if
 $G$  is $\epsilon$-far from $\mathcal{H}$-free
then $|S| \geq \epsilon n/4$ (as in the first case of Proposition 
\ref{thm:main-ff0.5}).)

The rest of the ``only if'' direction follows from the analog of
Theorem \ref{thm:main-ff1.5}, which is identically stated for the
$FB(d)$ model, leaving out the restriction of that members of $\mathcal{H}$ are
rooted. 
\end{proof}

%%ilan:  should be carefull and recheck the definitions of "essential"
%%and upwards-closed.

\section{ Two application of the characterization}\label{sec:55}

A characterization is more useful when apart of giving some 
structural insight to a feature, it also allows to simply conclude the
existence or lack of a property using the characterization and without
going into the theory behind it. Here we show two  applications of
our characterization for proving known results. The first is to show
that the monotone (and hereditary)  property of
being $2$-colourable is not strongly-testable (proved  in
\cite{GR02}). The second is that the monotone (and also hereditary) property of being $k$-star-free as a {\em minor} is strongly
testable (done as a part of proving other results in
\cite{gold-tree-minor}).  The discussion below is done with respect to
the undirected $d$-bounded degree model.

\subsection{$k$-colorability}
It is known that $k$ colorability is
not strongly-testable (even by $2$-sided error tests) for bounded-degree graphs for $k \geq 2$ \cite{GR02}. 
Here we reprove the fact without getting into property testing  at all. We use the analogous theorem  of Theorem
\ref{thm:main-fb} for the undirected model.

Indeed, since $2$-colorability is monotone,  if it were
strongly-testable, then the analog of of Theorem
\ref{thm:main-fb} for the undirected bounded-degree model 
would imply that  there is a $r=r(\epsilon)$ and a $r$-set $\mathcal{H}_\epsilon$ such that
the corresponding conditions (a) and (b) hold. 
Namely, there should be a $r$-set $\mathcal{H}$ of graphs such that:
(a) $2$-colorability must be a
subset of a property $P_{\mathcal{H}}$, and  (b) that  $P_{\mathcal{H}}$ should be $\epsilon$-close to being
$2$-colourable.

Assume that  $\mathcal{H} = \mathcal{H}_\epsilon$
is such a set. 
By (a) every $H \in \mathcal{H}$ is not
$2$-colourable.  Further, $\mathcal{H}$ must contain {\em all} non-$2$-colourable
graphs up to size $d/\epsilon$ (otherwise if a non-$2$-colourable
graph $H_0$ of size
smaller than $d/\epsilon$   is not in
$\mathcal{H}$, then the graph that is composed of $nd/|H_0|$ disjoint
copies of $H_0$  is $\epsilon$-far from $2$-colorability but is in
$P_{\mathcal{H}}$).

Let $d$ be large enough, $\epsilon$ small enough, and take any
good $d$-regular Ramanujan expander (or random $d$-bounded degree graph with no
short cycles). Such a graph is locally a tree, and 
hence $\mathcal{H}$-free. However, it is $\epsilon$-far from being
$2$-colourable, as by the expander mixing lemma, %(or simple probabilistic
%argument in the case of a random graph), 
any bipartition of the vertex set has many more than
$\epsilon dn$ edges with both ends in one of the parts. We omit further details. 

\subsection{Being $k$-star-free}
The property of $d$-bounded degree undirected graphs of being $k$-star
free as minors is a monotone and hereditary property. It is a simple
instance of the more complex property of being $\mathcal{H}$-minor
free, for a fixed given set of graphs $\mathcal{H}$.  It is known and
obvious that for  arbitrary $\mathcal{H}$, the property of being
$\mathcal{H}$-minor free is not strongly-testable by $1$-sided error
algorithms, as even acyclicity (namely not having a triangle minor) is not $1$-sided error strongly
testable for $d \geq 3$ \cite{GR02}.
However, for $\mathcal{H}$ being a fixed collection of trees, the
property of being $\mathcal{H}$-free is strongly-testable as was shown
in \cite{gold-tree-minor}. A first (and relatively easy step) in the result of
\cite{gold-tree-minor} is when the only member of $\mathcal{H}$ is the $k$-star (for constant fixed
$k$).

The property $P$ of being $k$-star free as a minor is a monotone property.
 We show that being $k$-star free as a minor is $1$-sided error
 strongly-testable for the undirected
$d$-bounded degree graph model using Theorem \ref{thm:main-fb}.
 Indeed, all we need to show
(for any $\epsilon > 0$)  
is an $r(\epsilon)$-set $\mathcal{H}$ such that 
following holds: (a) $P_n \subseteq
P_{\mathcal{H}_n}$ and  (b)  that $P_{\mathcal{H}_n}$ is $\epsilon$-close to
$P_n$. Here $P_{\mathcal{H}_n}$ contains the  $n$-vertex graphs in
$P_{\mathcal{H}}$. 

  We set $\mathcal{H}$ to contain all graphs of size at most
  $s=\frac{k}{\epsilon} + kd$ that 
  contain a $k$-star  as a minor. It is obvious from the definition
  that $P_n \subseteq P_{\mathcal{H}_n}$.

  Let $G \in P_{\mathcal{H}_n}$. We note that for any $S \subseteq
  V(G)$ such that $G[S]$ is connected and $|S| \leq s-k$, the edge cut $(S,\bar{S}) = \{(u,v) \in E(G)~|~ u
  \in S, ~ v \notin S\}$ has size at most $kd$. This is true as
  otherwise contracting $G[S]$ to a single point exhibits a $k$-star
  in the subgraph $G[S \cup \Gamma(S)]$ that is of size at most $s+kd$.

  Hence, it follows that we can decompose $G$ by iteratively choosing
  a vertex $v$ in a large enough component, and removing any connected
  subgraph of size $s-k$ containing $v$.  This will result in
  components of size at most $s-k$, while removing at most $kd$ edges
  at each iteration. Thus in total, removing at most
  $\frac{n}{s-k}\cdot dk$
  edges we get a graph $G'$ that is a subgraph of $G$, and in which
  every component is of size at most $s-k$. It follows that $G' \in
  P_n$ by definition, and since we have removed at most
  $\frac{ndk}{s-k}  \leq \epsilon dn$ it implies that $G$ is
  $\epsilon$-close to $P_n$.

\section{Concluding Discussion}\label{sec:concl}

 Let $\mathcal{C}$ be
a finite set of configurations (in any of the models discussed
above). The property of being $\mathcal{C}$-free is very natural in the
context of bounded-degree (di)graphs. In particular, all monotone and
all hereditary
properties are instances of such properties. Hence, being free of
$\mathcal{C}$ is a collection of properties worth studying (and not
only in the context of property testing). 

We have characterized the monotone and hereditary (di)graph properties
that are $1$-sided error strongly-testable in all the corresponding bounded-degree (di)graph
models.   Theorem \ref{thm:general}  states that every property that is
$1$-sided error strongly-testable in the $F(d)$-model (and the analogous statements
for the other models) is defined by a finite
collection of {\em forbidden configurations} with properties (a)
and (b) as in the theorem. It could be that these are exactly the
properties that are $1$-sided error strongly-testable {\em regardless of being
monotone or hereditary}.  The
problem with extending it to a characterization arises  for the
analog of 
Proposition \ref{theorem:directed-out:1}. We do not know that for a finite
 set of rooted  {\em configurations} $\mathcal{C}$, $P=P_{\mathcal{C}}$ is
strongly-testable. It could be that
for $G$ that is $\epsilon$-far from $P$, $G$ has only a small
number of appearances 
of forbidden configurations and any way of ``correcting'' these
appearances creates new appearances. We do know this e.g., for  the
$F(d)$-model if the set of forbidden configurations are
degree bounded\footnote{Forbidden configurations of bounded degree
  $d-1$ graphs are easy to `correct' by adding edges so to create vertices of
  degree $d$. Hence in this case, if a graph is far from the
  property, then it has many vertices in forbidden configurations.} by $d-1$, but not for the general case. 

Finally, in the very simple case of the $FB(1)$-model, and hence the {\em undirected}
$2$-degree bounded model too, the inverse of
Theorem \ref{thm:general} does work.
We prove, in this case, that if a graph is far from being
$\mathcal{C}$-free then it has many
$\mathcal{C}$-appearances. This conclusion turns out to be not
entirely trivial,  although the family of $2$-degree bounded
graphs is very simple\footnote{For these models every
  (di)graph property is strongly-testable by
  the results of \cite{NS13}. However, not all properties are $1$-sided
  error strongly-testable. E.g., consider the property ``having exactly $n/2$
  edges'' for which we do not have small witnesses for being far.}. The argument requires some global considerations beyond
these used for monotone properties and appear in the Appendix.

\vspace{0.7cm}
{\bf Acknowledgment:} we thank  Oded
  Goldreich for the extensive work he has done in order to improve the presentation of this
  paper. 
%%%%%%%%%%%%%%%%%%%%%%%%%%%%%%%%%%%%%%%%%%%%%%
%\bibliographystyle{elsarticle-num.bst}
%\bibliography{mybib}
\bibliographystyle{plain}

%%%%%%%%%%%%%%%%%%%%%%%%%%%%%%%%%%%%%%%%%%%%%%%%
\newcommand{\Proc}{Proceedings of the~}
\newcommand{\ALENEX}{Workshop on Algorithm Engineering and Experiments (ALENEX)}
\newcommand{\BEATCS}{Bulletin of the European Association for Theoretical Computer Science (BEATCS)}
\newcommand{\CCCG}{Canadian Conference on Computational Geometry (CCCG)}
\newcommand{\CIAC}{Italian Conference on Algorithms and Complexity (CIAC)}
\newcommand{\COCOON}{Annual International Computing Combinatorics Conference (COCOON)}
\newcommand{\COLT}{Annual Conference on Learning Theory (COLT)}
\newcommand{\COMPGEOM}{Annual ACM Symposium on Computational Geometry}
\newcommand{\DCGEOM}{Discrete \& Computational Geometry}
\newcommand{\DISC}{International Symposium on Distributed Computing (DISC)}
\newcommand{\ECCC}{Electronic Colloquium on Computational Complexity (ECCC)}
\newcommand{\ESA}{Annual European Symposium on Algorithms (ESA)}
\newcommand{\FOCS}{IEEE Symposium on Foundations of Computer Science (FOCS)}
\newcommand{\FSTTCS}{Foundations of Software Technology and Theoretical Computer Science (FSTTCS)}
\newcommand{\ICALP}{Annual International Colloquium on Automata, Languages and Programming (ICALP)}
\newcommand{\ICCCN}{IEEE International Conference on Computer Communications and Networks (ICCCN)}
\newcommand{\ICDCS}{International Conference on Distributed Computing Systems (ICDCS)}
\newcommand{\IJCGA}{International Journal of Computational Geometry and Applications}
\newcommand{\INFOCOM}{IEEE INFOCOM}
%{IEEE INFOCOM, Annual Joint Conference of the IEEE Computer and Communications Societies}
\newcommand{\IPCO}{International Integer Programming and Combinatorial Optimization Conference (IPCO)}
\newcommand{\ISAAC}{International Symposium on Algorithms and Computation (ISAAC)}
\newcommand{\ISTCS}{Israel Symposium on Theory of Computing and Systems (ISTCS)}
\newcommand{\JACM}{Journal of the ACM}
\newcommand{\LNCS}{Lecture Notes in Computer Science}
\newcommand{\PODS}{ACM SIGMOD Symposium on Principles of Database Systems (PODS)}
\newcommand{\SICOMP}{SIAM Journal on Computing}
\newcommand{\SODA}{Annual ACM-SIAM Symposium on Discrete Algorithms (SODA)}
\newcommand{\SPAA}{Annual ACM Symposium on Parallel Algorithms and Architectures (SPAA)}
\newcommand{\STACS}{Annual Symposium on Theoretical Aspects of Computer Science (STACS)}
\newcommand{\STOC}{Annual ACM Symposium on Theory of Computing (STOC)}
\newcommand{\SWAT}{Scandinavian Workshop on Algorithm Theory (SWAT)}
\newcommand{\UAI}{Conference on Uncertainty in Artificial Intelligence (UAI)}
\newcommand{\WADS}{Workshop on Algorithms and Data Structures (WADS)}

\newpage
\appendix{\huge{Appendix}}
\section{``Removal Lemma'' - the case of $2$ bounded degree undirected
  graphs and the $F(1)$-model }

The standard removal lemma in our context would be that if a graph $G$
has ``small'' number of $\mathcal{C}$-appearances than it can be made
$\mathcal{C}$-free by removing and inserting a ``small'' number of
edges. Here $\mathcal{C}$ is a collection of configurations rather
than just forbidden subgraphs.  We do not know if such a lemma is
correct for $d$-bounded degree graphs and $d \geq 3$.  We prove the
following for $2$-bounded degree graphs. It is a very simple case, but
already exhibits why simple local considerations might not be enough.

\begin{lemma}\label{lem:d=1}
Let $\mathcal{C}$ be a $k$-set of forbidden configurations in the
$2$-bounded degree model for undirected graphs. 
Let $P_n$ be the property that contains the  $n$-vertex
$\mathcal{C}$-free $2$-bounded degree  graphs. For $\epsilon <
\frac{1}{4k}$ if $P_n  \neq
\emptyset$ and 
 $G$ is $\epsilon$-far from $P_n$ then $G$ contains $\epsilon^2 n/k$
 vertices in $\mathcal{C}$-appearances. 
  \end{lemma}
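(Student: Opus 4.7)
The plan is to prove the contrapositive: assuming $s:=|S|<\epsilon^{2}n/k$, I will construct $G^{*}\in P_n$ with $\mathrm{dist}(G,G^{*})<2\epsilon n=\epsilon d n$, which shows that $G$ is $\epsilon$-close to $P_n$.

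The first step is a localization observation. Since $G$ has maximum degree $2$, every component is a simple path or cycle, and the $k$-ball around any vertex has at most $2k+1$ vertices. A $\mathcal{C}$-appearance has at most $k$ vertices, and any connected piece of one that contains a developed vertex has diameter less than $k$ in $G$, hence sits inside the $k$-ball around some vertex of $S$. Letting $T:=B_G(S,k)$, every ``developed-anchored'' piece of every $\mathcal{C}$-appearance lies in $T$, with $|T|\leq(2k+1)s<3\epsilon^{2}n$; dually, every $v\in V\setminus T$ has its $k$-ball in $G$ disjoint from $S$.

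For the construction of $G^{*}$, fix any $Q\in P_n$ (nonempty by hypothesis). The plan is to modify $G$ only inside a slight enlargement of $T$. For each path or cycle component of $G$ containing bad vertices, I excise windows of $O(k)$ consecutive vertices around each bad vertex and splice in matching-length stretches of path or cycle material drawn from $Q$, choosing cut points so that the endpoint degrees align with the neighbours on each side. There are at most $s$ windows, each affecting $O(k)$ edges, so $\mathrm{dist}(G,G^{*})=O(ks)=O(\epsilon^{2}n)<2\epsilon n$ using $\epsilon<1/(4k)$.

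The main obstacle will be verifying that $G^{*}\in P_n$, i.e., that no new $\mathcal{C}$-appearances have been created. Two subtleties arise. First, at the interfaces between spliced and unchanged regions, boundary-vertex degrees may change, potentially producing new \emph{local} appearances; this will be handled by enlarging each window with a $k$-thick safety buffer so that every vertex at least $k$ steps outside any window keeps its $G$-neighbourhood in $G^{*}$, and by choosing the $Q$-piece so that its endpoints mate cleanly with the adjacent untouched vertices. Second, a configuration in $\mathcal{C}$ can have several disconnected components, so a putative $\mathcal{C}$-appearance in $G^{*}$ might distribute its components across spliced and unchanged regions---this is the genuinely \emph{global} obstacle. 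The intended strategy is to draw all splice material from the same fixed $Q$ in a coordinated way, so that any such cross-region appearance would induce either an appearance of the same configuration in $G$ (forcing some developed vertex into $S$ yet simultaneously into the unchanged region, a contradiction) or an appearance in $Q$ itself, contradicting $Q\in P_n$. These global considerations, absent in the monotone case treated in the main body, are where the real difficulty lies.
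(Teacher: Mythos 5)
Your splicing strategy is a genuinely different route from the paper's proof, which instead classifies the configurations (into DD, FF, FD paths and cycles), does a case analysis on whether $\mathcal{C}$ contains an FF path and/or a singleton, and in the hardest case invokes a Frobenius-coin-problem argument to re-pack all the ``bad'' vertices into a fresh $\mathcal{C}$-free graph. Your version has a genuine gap: your account of the global obstacle is misdirected, and the actual obstacle is never handled. You identify the difficulty with multi-component configurations straddling the spliced/unspliced boundary, but the paper's own illustrative example ($\mathcal{C}$ consisting of an isolated vertex together with a length-$2$ FF path, i.e.\ ``being a perfect matching'') already shows that the real difficulty is \emph{arithmetic}, not about spanning appearances. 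When $\mathcal{C}$ contains a singleton, an isolated vertex of $G$ simply has no ``window'' to excise and no ``matching-length stretch'' from $Q$ to splice in; fixing it requires attaching it to another component, i.e.\ a non-local rewiring that your scheme does not describe.

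More fundamentally, when $\mathcal{C}$ contains a singleton and an FF path (so that every component of a $\mathcal{C}$-free graph must be short), the set of admissible component sizes may force every $\mathcal{C}$-free graph to have vertex count $\equiv 0 \pmod{\alpha}$ for some $\alpha$. Whether the collection of ``bad'' vertices of $G$ can be rebuilt into a $\mathcal{C}$-free structure then depends on its size modulo $\alpha$, and the only way the paper can guarantee the right residue is by observing that $P_n \neq \emptyset$ forces $n \equiv 0 \pmod{\alpha}$ and that the complementary (unchanged) part of $G$ is already $\mathcal{C}$-free, hence also $\equiv 0 \pmod{\alpha}$. This is precisely the point where the hypothesis $P_n \neq \emptyset$ is doing real work. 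In your write-up, $P_n \neq \emptyset$ is used only to pick a template $Q$, which is not enough: you never show that ``matching-length'' splice pieces consistent with $Q$'s component sizes exist, and in general they do not unless the Frobenius/residue bookkeeping is done. Until that is added, the construction of $G^*$ is not justified and the proof does not go through.
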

Before we present the proof we point  why local consideration as in
the proof of Proposition \ref{theorem:directed-out:1} are not
sufficient.  Let
$\mathcal{C}$  contain two forbidden configurations: a singleton and a
path of length $2$ where the middle vertex is Developed and the two
endpoints are Frontier. 
Consider the property $P$ of being $\mathcal{C}$-free.
If $G$ is in $P$ then $G$ is a
perfect matching and hence the property is not trivial. However, for odd $n$ $P_n= \emptyset$
 and the existence of a single $C$-appearances can not be corrected
at all. 
  \begin{proof}
A configuration $C \in \mathcal{C}$ may be disconnected and composed
of several components. For simplicity we prove the lemma for the case
that for every $C=(H,L) \in \mathcal{C}$, $H$ is connected. The proof for the general
case is more complicated but uses the same ideas.

By assumption $\mathcal{C}$ is a collection cycles
and paths. We may assume that all vertices of degree $2$ in any $C \in
\mathcal{C}$  are
Developed  as
$d \leq 2$. There are $3$ possible types of  paths in $\mathcal{C}$: A path with both ends
Developed,  Both ends Frontier, and a Frontier and Developed ends. We
call such paths $DD$, $FF$ and $FD$ paths respectively. We consider
the zero length path containing  a
single isolated vertex as a $DD$ path. 

Let $P_n$ be the property that contains the $n$-vertex graphs that are
$\mathcal{C}$-free, and assume that $P_n \neq \emptyset$.
  Let $G$ on $n$ vertices be $\epsilon$-far from $P_n$.
  % for $\epsilon <  1/k$.
  Let $c$ be a component of $G$. If $|V(c)| > k/\epsilon$,    $c$ is
called `large' and otherwise it is called  `small'.

{\bf (i)}  Assume first that $\mathcal{C}$ contains no $FF$ path.

A large component $c$ of $G$ may have a $C$-appearance for some $C \in \mathcal{C}$
only if $C$ is a $FD$ path and $c$ is a path. Then by adding the edge
between the endpoints of the path $c$ it will not have a
$\mathcal{C}$-appearances.

Since there are at less than $\epsilon n/k$ large $C$'s,  all
relevant appearances are corrected by changing at most $\epsilon n/k$
edges. Let $S$ be the set of all small components of $G$ that have a 
$\mathcal{C}$-appearance. Let $|S| = \ell.$
%$\cup_{C \in \mathcal{C}} |V(C)|= \ell$. 
We conclude that $\ell \geq \epsilon n/2$  as other wise we can change
$\cup_{c \in S} c$ into a
unique cycle  using at most $2\ell$ edge additions (and if $\ell \leq
k$ we can further make this cycle to be of size at least $k+1$ using
some extra $4$ edge changes) and get a graph that is
$\mathcal{C}$-free.

We conclude that there are at least $\ell \geq \epsilon n/2$ vertex
disjoint $\mathcal{C}$-appearances in $G$, which implies the lemma.

{\bf (ii)} Assume that $\mathcal{C}$ contains a $FF$ path and let $r ~ (\leq
k)$ be the
length of the smallest such path.  % Then we may assume that
% $\mathcal{H}$ contains no graphs of size more than $r+1$ as any such
% graph will contain the $r$-length $FF$-path.

{\bf (ii).1}
Assume first that $\mathcal{C}$ contain no singleton.

If a component $c$ in $G$ contains an $r$-length path then every
vertex in it is in an $r$-length $FF$ path. Let $S$ be the set of 
vertices in an $r$-length path. Then either $|S|  \geq \epsilon^2 n/k$
and we are done, or $|S| <
\epsilon^2n/k$ and we 
can delete all edges adjacent to vertices in $S$, (at most
$2|S|$ edges) to obtain a graph
that is free of $r$-size $FF$ paths, and we are back in the previous
case. 

{\bf (ii).2} 
Assume now that $\mathcal{C}$ contains a singleton (which makes the
corrections in the proofs above impossible, and as shown by the example
before the proof, corrections can not be done locally).
 
Assume first that for some $\ell \in \N$ there are graphs $G_\ell$ and
$G_{\ell+1}$ on $\ell$ and $\ell+1$ vertices respectively, and such
that both graphs are  $\mathcal{C}$-free.  Let $\ell$ be the
smallest such integer.
By a basic fact in number theory (Frobenius coin problem), there exists $n_0$ such that for
every $m > n_0,$ $~m$ can be written as $m = a\ell + b(\ell+1)$ for
some $a,b, \geq 0$. We conclude that for any $m \geq n_0$ there is a
graph $G_m$ on $m$ vertices that is composed of $a$ copies of $G_\ell$
and $b$ copies of $G_{\ell+1}$ and that is a $\mathcal{C}$-free.

Let $G$ on $n$ vertices ($n >> \ell$) that is $\epsilon$-far from being
$\mathcal{C}$-free.
 Let $S$ be the set of
vertices in large components of $G$. Note that every vertex in every
large component is in a $FF$-forbidden path (except possibly two). 
Then, if  $|S| \geq \e^2 n/k$ we are done.

Otherwise, let $S_1$ be the set of singletons in $G$.  Let  $A$ be the set
of vertices in small
components that contain a ${\mathcal{C}}$-appearance. 
If $|S_1|  \geq \epsilon^2 n/k$, or $|A|\geq
\epsilon^2n/k$ then we are done. 
 Otherwise,  let $V_1 = S \cup S_1 \cup A$. Note that $m=|V_1| \leq 
3\epsilon^2n/k$. We  assume here that $m \geq n_0$ or otherwise we add
some arbitrary $n_0-m$ additional vertices to $V_1$. We now form a
subgraph $G_m$ on $V_1$ that is $\mathcal{C}$-free. Note  that such
$G_m$ exists by our assumptions on $m$. Hence by changing at most $2m
< 6 \epsilon^2 n/k < 2 \epsilon n$  edges (for $n$ large enough), 
we have made $G'$ be ${\mathcal{C}}$-free in contradiction with the
assumption that $G$ is $\epsilon$-far from $P_n$.

Assume now that there is no $\ell$ for which $G_\ell, ~ G_{\ell+1}$
are $\mathcal{C}$-free. In this case it follows (by the same coin problem of Frobenius)
that all $G_\ell$ that are $\mathcal{C}$-free have number of vertices
that is congruent to $0$ mod some $\alpha >0$. More over, there is
some fixed $\ell\equiv 0 (\alpha) $ and $G_1, G_2$ on
$\ell, \ell+\alpha$ vertices correspondingly, such that both $G_1,
G_2$ are $\mathcal{C}$-free.  This implies that for any $m \equiv 0
(\alpha)$ that is large enough, there is a graph $G_m$ on $m$ vertices
that is $\mathcal{C}$-free.

Let $S,S_1, A, V_1$
as before. We proceed in a similar way: Either $S$ or $S_1$ or $A$  is large enough and
we are done. Otherwise, we make a subgraph $G_m$ on $V_1$ to be
$\mathcal{C}$-free, to result in a graph on $V$ that is $\mathcal{C}$-free,
contradicting the assumption that$G$ is $\epsilon$-far from being
$\mathcal{C}$-free.

The important point here is that since $P_n \neq
\emptyset$ it follows by the discussion above that $n\equiv 0
(\alpha)$. And since $G[V \setminus V_1]]$ is $\mathcal{C}$-free, it follows that $m \equiv
0 (\alpha)$. Hence a $G_m$ on $m$ vertices that is $\mathcal{C}$-free
exists.
  \end{proof}

\end{document}